\newcommand{\wh}[1]{\widehat{#1}}
\newcommand{\wt}[1]{\widetilde{#1}}
\newtheorem{lemma}{Lemma}
\newtheorem{proposition}{Proposition}
\newtheorem{theorem}{Theorem}
\newtheorem{assumption}{Assumption}
\newtheorem{corollary}{Corollary}
\newenvironment{proof}[1][Proof]{\noindent \textbf{#1.} }{\  \rule{0.5em}{0.5em}}
\newcommand{\blind}{0}
\newcommand{\Cov}{{\mbox{Cov}}}
\newcommand{\prob}{{\mbox{Pr}}}
\def\EE{\mathbb{E}}
\newcommand\independent{\protect\mathpalette{\protect\independenT}{\perp}}
\def\independenT#1#2{\mathrel{\rlap{$#1#2$}\mkern2mu{#1#2}}}
\def\given{\, | \,}
\def\Given{\, \Big| \,}
\def\begar{$$\begin{array}{lll}}
\def\endar{\end{array}$$}
\def\begarlab{\begin{equation} \begin{array}{lll} \label}
\def\endarlab{\end{array} \end{equation}}
\def\ds1{{\mathrm{1 \hspace{-2.6pt} I}}}
\def\calB{\mathcal {B}}
\def\calD{\mathcal {D}}
\def\calF{\mathcal {F}}
\def\calH{\mathcal {H}}
\def\calP{\mathcal {P}}
\def\calU{\mathcal {U}}
\def\calV{\mathcal {V}}
\def\calX{\mathcal {X}}
\newcommand{\norm}[1]{\|#1\|}
\def\independenT#1#2{\mathrel{\rlap{$#1#2$}\mkern2mu{#1#2}}}
\newcommand{\ZL}[1]{{\color{red}{[\textbf{ZL}: #1]}}}
\newcommand{\halpha}{\wh{\alpha}}
\newcommand{\talpha}{\wt{\alpha}}
\newcommand{\hhat}{\wh{h}}
\newcommand{\Phihat}{\wh{\Phi}}
\newcommand{\rhohat}{\wh{\rho}}
\newcommand{\betahat}{\wh{\beta}}
\newcommand{\pihat}{\wh{\pi}}
\newcommand{\tmH}{\wt{\mathcal{H}}}
\newcommand{\nmB}[1]{\|#1\|_{\mathcal{B}}}
\newcommand{\nmF}[1]{\|#1\|_{\mathcal{F}}}
\newcommand{\nmH}[1]{\|#1\|_{\mathcal{H}}}
\newcommand{\bigO}{\ensuremath{\mathop{}\mathopen{}\mathcal{O}\mathopen{}}}
\newcommand{\smallO}{ \scalebox{0.7}{$\mathcal{O}$}}
\newcommand{\bigOp}{\bigO_\mathrm{p}}
\newcommand{\smallOp}{\smallO_\mathrm{p}}
\DeclareMathOperator*{\argmin}{arg\,min}
\DeclareMathOperator*{\argmax}{arg\,max}
\newcommand{\NN}{{\mathbb{N}}}
\newcommand{\RR}{{\mathbb{R}}}
\newcommand{\bI}{{\mathbf{I}}}
\newcommand{\mB}{{\mathcal{B}}}
\newcommand{\mD}{{\mathcal{D}}}
\newcommand{\mF}{{\mathcal{F}}}
\newcommand{\mG}{{\mathcal{G}}}
\newcommand{\mH}{{\mathcal{H}}}
\newcommand{\mL}{{\mathcal{L}}}
\newcommand{\mN}{{\mathcal{N}}}
\newcommand{\mR}{{\mathcal{R}}}
\newcommand{\mV}{{\mathcal{V}}}
\newcommand{\mX}{{\mathcal{X}}}
\newcommand{\mZ}{{\mathcal{Z}}}
\begin{document}


\if0\blind
{
	\title{\bf Personalized Pricing with Invalid Instrumental Variables: Identification, Estimation, and Policy Learning}
	\author{Rui Miao$^1$
		\quad \quad Zhengling Qi$^2$\thanks{\texttt{qizhengling@gwu.edu}} 
		\quad \quad Cong Shi$^3$
  \quad \quad Lin Lin$^4$
  }
  \date{%
    $^1$University of California, Irvine\\%
    $^2$George Washington University\\%
    $^3$University of Michigan at Ann Arbor\\
    $^4$Duke University
}

	\maketitle
} \fi

\begin{abstract}
Pricing based on individual customer characteristics is widely used to maximize sellers' revenues. This work studies offline personalized pricing under endogeneity using an instrumental variable approach. Standard instrumental variable methods in causal inference/econometrics either focus on a discrete treatment space or require the exclusion restriction of instruments from having a direct effect on the outcome, which limits their applicability in personalized pricing.  In this paper, we propose a new policy learning method for Personalized pRicing using Invalid iNsTrumental variables (PRINT) for continuous treatment that allow direct effects on the outcome. Specifically, relying on the structural models of revenue and price, we establish the identifiability condition of an optimal pricing strategy under endogeneity with the help of invalid instrumental variables. Based on this new identification, which leads to solving conditional moment restrictions with generalized residual functions, we construct an adversarial min-max estimator and learn an optimal pricing strategy. Furthermore, we establish an asymptotic regret bound to find an optimal pricing strategy. Finally, we demonstrate the effectiveness of the proposed method via extensive simulation studies as well as a real data application from an US online auto loan company.

\end{abstract}




\section{Introduction}
In the era of Big Data and artificial intelligence, business  models and decisions have been changed profoundly. The massive amount of customer and/or product information offers  an exciting opportunity to study personalized pricing strategies. Specifically, based on the information collected from past selling seasons, sellers can leverage powerful machine learning tools to discover their customers' preferences and offer an attractive personalized price for each customer  to maximize their revenue.

This  problem can be formulated as an offline policy learning problem for continuous treatment space. In particular, offline data often consist of customer/product information, the offered price, and the resulting revenue. Our goal of personalized pricing is to leverage such data to discover an optimal data-driven pricing strategy for each $X$ that maximizes the overall revenue.

Because we have no control over the collection of the offline data, one major challenge of this task is that there may exist unmeasured confounders besides the offline data, which could possibly result in endogeneity. Endogeneity typically hinders us from identifying an optimal pricing decision using offline data. Thus, using standard policy learning methods may lead to suboptimal pricing decisions. 

In the literature on causal inference and econometrics, instrumental variable (IV) models are commonly used to account for unmeasured confounding in identifying the causal effect of treatment. This task is closely related to personalized pricing because evaluating a particular pricing strategy is almost equivalent to its causal effect estimation. Therefore, an IV model is a promising solution for addressing the endogeneity in finding an optimal pricing strategy. A valid IV is a pretreatment variable independent of all unobserved covariates, and only affects the outcome through the treatment. Meanwhile, it requires that the variability of IV can account for that of the unobserved covariates. 
Prominent examples include using the season of birth as an instrument for understanding the impact of compulsory schooling on earnings in education  \citep{angrist1991does}, 
estimating the spatial separation of racial and ethnic groups on the economic performance using political factors, topographical features, and residence before adulthood as instruments in social economics \citep{cutler1997ghettos},
estimating the effect of childbearing on labor supply using the parental preferences for a mixed sibling-sex composition as an instrument in labor economics, justifying the Engel curve relationship of individual household's expenditure on the commodity demand using individual's expenses on nondurables and services as an instrument in microeconomics \citep{blundell2007semi}, and leveraging genetics variants as instruments for investigating the causal relationship between low-density lipoprotein cholesterol on coronary artery disease in medical study \citep{burgess2013mendelian}.

We study offline personalized pricing under endogeneity via an IV approach. Instead of assuming a valid IV, we use an invalid IV that can potentially have an additional direct effect on the outcome. Relying on the structural models of revenue and price, we establish the identifiability condition of an optimal pricing strategy given observed covariates with the help of invalid IVs. Based on the identification, which can be formulated as a problem of solving conditional moment restrictions with generalized residual functions, we develop an adversarial min-max estimator and learn an optimal pricing decision from the offline data. We call our proposed policy learning method PRINT: Personalized pRicing using Invalid iNsTruments.
Most existing literature focuses on developing methods for either a discrete treatment space with some valid/invalid IVs or a continuous treatment space with a valid IV. 
While \cite{lewbel2012using,tchetgen2021genius} also considered the causal effect estimation for a continuous treatment using an invalid IV, their IV models do not allow for the causal heterogeneity, e.g., the interaction effect of the price with covariates.  This cannot serve our purpose of finding an optimal personalized pricing strategy.

\subsection{Motivating Example}
One of our motivating examples is the pricing problem of an auto loan company. (Later, we will conduct extensive numerical experiments on this problem using a real dataset.) Customer lending is a prominent industry in which personalized prices (i.e., lending rates) are both socially acceptable and in current practice, albeit at varying degrees of granularity \citep{ban2021personalized}. The norm of price negotiation, high variation in customer willingness to pay, low cost of bargaining, and other considerations provide tremendous  opportunities for offering personalized lending rates for customers in order to maximize the profit \citep{phillips2015effectiveness}. 

In this example, the offline data consist of customer information (e.g., FICO score, loan amount, loan term, living state), the offered loan price calculated by net present value, and the resulting revenue determined by the final contract result (accepted or not). However, when recording the information of past deals in the offline data, some local information, such as the operating costs of the lender, the competitor's rate on individual deals, and some unknown customer demographics, may not be available, which hinders the decision maker from designing an optimal pricing strategy based on the available covariates information. Thus, it is essential to devise an offline learning algorithm for personalized pricing under endogeneity, which learns the pricing strategy based on available covariates (i.e., a mapping from covariates to prices). 

To account for unmeasured confounding, the loan rate, or the so-called APR (annual percentage
rate), can be served as an instrument variable $G$ for dealing with the endogeneity of the price of the loan, because it is strongly relevant to the price and satisfies specific properties. We also note that using an APR as an IV has been adopted in the literature (e.g., \citealt{blundell1992credit}). However, a caveat is that the loan rate may inevitably affect the demand directly and, subsequently, the revenue of the loan company, which breaks the exclusion restriction to be a valid IV. In fact, in many problems, the IV exclusion restriction is hard to be verified. To overcome this difficulty, it is necessary for us to develop a new policy learning approach using an invalid IV.


\subsection{Major Contributions}
We study the problem of offline personalized pricing under endogeneity. Our contribution can be summarized four-fold. First, we develop a novel policy learning method for continuous treatment space using invalid IVs. Our identification using invalid IVs for an optimal pricing strategy relies on two practical non-parametric models of revenue and price. To the best of our knowledge, this is the first work studying policy learning for continuous treatment space under unmeasured confounding. Second, we generalize the causal inference literature on treatment effect estimation under unmeasured confounding. In particular, the existing literature is predominantly focused on discrete treatment settings, where various approaches using a valid IV are developed. Much less attention has been paid to dealing with continuous treatment. While there is also a stream of recent literature studying causal effects under invalid IVs,  the significant works still concentrate on the discrete treatment setting. Our identification using an invalid IV for the effect of a continuous treatment fills the gap of causal inference literature, which could be of independent interest. The key step for establishing the identification is to impose orthogonality conditions in terms of the high-order moments between the effect of all covariates related to the price on the outcome and the effect of that on the price, while the degree of unmeasured confounding is not restricted. Third, we establish an asymptotic regret guarantee for our policy learning algorithm in finding an optimal pricing decision, based on a newly developed adversarial min-max estimator for solving conditional moment restrictions with generalized (non-linear) residual functions. This adversarial min-max estimator is motivated by solving a zero-sum game that can incorporate flexible machine learning models. Lastly, compared with two baseline methods, we demonstrate our method's superior performance via extensive simulation studies and a real data application from an auto loan company.


\subsection{Related Work}

Since the seminal work of \citet{manski2004statistical}, there has been a surging interest in studying offline policy learning in economics, statistics, and computer science communities such as \citet{qian2011performance,dudik2011doubly,zhao2012estimating,chen2016personalized,kitagawa2018should,cai2021jump,biggs2022convex,qi2022offline} and many others. However, most existing works rely on the unconfoundedness assumption, which is hardly satisfied in practice. To remove the effect of possible endogeneity, practitioners often collect and adjust for as many covariates as possible. While this may be the best approach, it is often very costly and even infeasible as we have no control over offline data collection. To address this limitation, more recently, various policy learning methods under unmeasured confounding have been proposed, such as using a binary and valid IV for a point or partially identifying the optimal policy \citep{cui2021semiparametric,qiu2021optimal,han2019optimal,pu2020estimating,stensrud2022optimal}, using a sensitivity model for policy improvement \citep{kallus2020confounding}, and leveraging the proximal causal inference \citep{qi2022proximal,miao2022off,wang2022blessing,shen2022optimal}. However, none of the aforementioned works studies the policy learning for continuous treatment space under endogeneity.

Our work is also closely related to IV models, which have been extensively
studied in the literature on causal inference and econometrics. See
\citet{angrist1995identification,ai2003efficient,newey2003instrumental,hall2005nonparametric,chen2011rate,chen2018optimal,darolles2011nonparametric,blundell1992credit,wang2018bounded}
for earlier references. A typical assumption in the aforementioned literature is
the existence of a valid IV that satisfies (i) independence from all unobserved
covariates $U$, (ii) the exclusion restriction that prohibits the direct effect
of IVs on the outcome, and (iii) correlated with the endogenous variable.
Tremendous efforts have been made to develop statistical and econometric methods
to account for the possible violation of these assumptions. See
\citet{staiger1994instrumental,stock2000gmm,stock2002survey,chao2005consistent,newey2009generalized}
and many others for relaxing (iii), and
\citet{lewbel2012using,kang2016instrumental,guo2018confidence,tchetgen2021genius,sun2021semiparametric}
for relaxing (i) and (ii). In particular,
\citet{kang2016instrumental,guo2018confidence,kolesar2015identification,windmeijer2019use}
considered the multiple IV setting and restricted to some specific parametric
models. In contrast, \citet{lewbel2012using,tchetgen2021genius}, which are
closely related to our proposal, mainly focused on the discrete treatment
setting and only considered the constant causal effect for continuous treatment
setting. Therefore, none of the existing works studies the causal identification
with heterogeneity under continuous treatment using an invalid IV, which is a
 distinct aspect of our paper.

\section{Problem Formulation and Challenges}
\label{sec: causal framework}
In this section, we introduce the problem of personalized pricing under the framework of causal inference. We also illustrate the  challenges of finding an optimal pricing strategy in the observational study due to endogeneity.
\subsection{Personalized Pricing without Unmeasured Confounding}
Let $P$ be the price of a product that takes values in a \textit{known} and \textit{continuous} action space $\calP = [p_1, p_2]$ with $0 \leq p_1 \leq p_2$. Define the potential revenue under the intervention of $P=p$ as $Y(p)$ for $p \in \calP$. In the counterfactual world, $Y(p)$ is a random variable of the revenue had the company used the price $p$ for their product. Denote $X$ as the observed $q$-dimensional covariate associated with the product that belongs to a covariate space~$\calX \subseteq \mathbb{R}^q$. A personalized pricing policy $\pi$ is determined by the covariate $X$, which is a measurable function mapping from the covariate space~$\calX$ into the action space $\calP$. Then the potential revenue under a policy $\pi$ is  defined as $Y(\pi)$. For any pricing strategy $\pi$, we use the expected revenue (also called policy value) to evaluate its performance, i.e.,
\begin{align}\label{def: value function}
	\calV(\pi) = \EE\left[Y(\pi)\right].
\end{align}
Finally, the goal of personalized pricing is to find an optimal policy $\pi^\ast$, such that
\begin{align}\label{def: optimal policy 2}
    \pi^\ast \in \argmax_{\pi \in \Pi} \calV(\pi),
\end{align}
where $\Pi$ is a class of  policies depending on $X$.
However, since for each instance of the random tuple $(X, P, Y)$, we only observe one $Y$ corresponding to the price $P$, but not other $Y$ and $P$, the joint distribution of $(X, P, \{Y(p)\}_{p \in \calP})$ is impossible to learn without any assumptions. Therefore, identification conditions are needed for learning $\calV(\pi)$ from the observed data. We first consider the following three standard causal assumptions.

\begin{assumption}[Standard Causal Assumptions] \label{ass: standard} The following conditions hold.
\begin{enumerate}[label=(\alph*)]
\item (Consistency)\label{ass: consistency}
$Y = Y(p)$ if $P = p$ for any $p \in \calP$.
\item (Positivity)\label{ass: positivity}
The probability density function $f(p \given X) > 0$ for $p \in \calP$ almost surely.
\item (No unmeasured confounding)\label{ass: unconfoundedness}
$Y(p) \independent P \given X$ for $p \in \calP$.
\end{enumerate}
\end{assumption}

Assumption \ref{ass: standard}\ref{ass: consistency} ensures that the observed $Y$ matches the potential revenue under the intervention~$P$. Assumption \ref{ass: standard}\ref{ass: positivity} guarantees that each pricing decision has a chance of being observed. The unconfoundedness assumption, i.e., Assumption \ref{ass: standard}\ref{ass: unconfoundedness}, indicates that by conditioning on $X$, there are no other factors that confound the effect of the price $P$ on the revenue $Y$. Under Assumptions \ref{ass: standard}\ref{ass: consistency} and \ref{ass: standard}\ref{ass: unconfoundedness}, one can show that for each $\pi\in\Pi$,
$$
\calV(\pi) = \EE[Q(X, \pi(X)],
$$
where $Q(x, p) = \EE[Y \given X = x, P = p]$, and the expectation is taken over $X$. Together with Assumption~\ref{ass: standard}\ref{ass: positivity}, $\calV(\pi)$ can be nonparametrically identified by the observed data. Then $\pi^\ast$ satisfies
    $\pi^\ast \in \argmax_{\pi \in \Pi} \EE[Q(X, \pi(X)]$,
whose explicit form is
    $\pi^\ast(X) \in \argmax_{p \in \calP} Q(X, p)$,
almost surely.

\subsection{Challenges due to Endogeneity}
In practice, the unconfoundedness assumption (i.e., Assumption \ref{ass: standard}\ref{ass: unconfoundedness}) cannot be ensured without further restriction on the data-generating procedure such as an ideal randomized experiment. The failing of Assumption \ref{ass: standard}\ref{ass: unconfoundedness} incurs non-identifiability issue and thus could lead to a seriously biased estimation for the policy value $\calV(\pi)$. Consider the following toy example of a revenue model for illustration that
\begin{align}\label{eqn: toy example}
    Y = -P^2 + X^\top\beta \times P + U + \varepsilon,
\end{align}
where $U$ is some unmeasured covariate with $\EE[U \given X] = X^\top \gamma$ for some unknown parameter~$\gamma$; $\beta$ is the parameter of interest, and $\varepsilon$ is some random noise such that $\EE[\varepsilon \given X, U, P]=0$ almost surely. Suppose that we aim to evaluate a policy $\pi_0(X) \equiv 1$, i.e., always assigning the price $P =1$, then one can show that
\begin{align*}
    \calV(\pi_0) = \EE_{X}\left[-1 + X^\top (\beta + \EE\left[U \given X\right] \right]=\EE_{X}\left[-1 + X^\top (\beta + \gamma)\right]. 
\end{align*}
Due to the unobserved factor $U$, we cannot identify the parameter of interest $\beta$, which is the effect associated with $\pi_0$, based on the observed data. Meanwhile, if one carelessly implements the previous approach by assuming unconfoundedness, then, since $\EE\left[U \given X, P = 1\right] \neq \EE\left[U \given X\right]$ in general, we have
\begin{align*}
    \EE[Q(X, \pi_0(X)] =  \EE_{X}\left[-1 + X^\top \beta +\EE\left[U \given X, P = 1\right] \right] \neq \calV(\pi_0).
\end{align*}

In the causal inference, to account for unmeasured confounding, IV models are widely used in identifying the average treatment effect \citep[e.g.,][]{angrist1996identification}. It is often assumed that there exists an IV, denoted by $G$, such that
\begin{align}\label{eqn: instructment example}
    P = \Upsilon(X, U, G) + \widetilde \varepsilon,
\end{align}
for some function $\Upsilon$ and random noise $\widetilde \varepsilon$. A valid IV satisfies the following three conditions:
\begin{assumption}\label{ass: valid IV}
    \begin{enumerate}[label=(\alph*)]
        \item \label{ass: IV relevance} (IV relevance) $G \not\!\perp\!\!\!\perp P \given (U, X)$;
        \item \label{ass: IV independence} (IV independence) $G \independent U \given X$;
        \item \label{ass: IV Exclusion restriction} (IV Exclusion restriction) $G \independent Y \given (P, X, U)$.
    \end{enumerate}
\end{assumption}
It is well-known that Assumption \ref{ass: valid IV} is sufficient for a valid statistical test of no individual causal effect of $P$ on $Y$, but not to point identify the average treatment effect, e.g., $\EE[X^\top \beta]$ in \eqref{eqn: toy example}. An additional assumption is often needed to achieve the latter goal. However, even identifying the average treatment effect does not suffice in personalized pricing because to learn an optimal pricing strategy $\pi^\ast$, we need to identify the causal heterogeneity effect. For example, in \eqref{eqn: toy example}, one can find $\pi^\ast$ via solving
\begin{align*}
 \pi^\ast \in \argmax_{\pi \in \Pi} \quad \EE[-\pi^2(X) + X^\top \beta \times \pi(X)],
\end{align*}
and the key is then to estimate the function $X^\top \beta$. Therefore, compared with the standard causal inference using a valid IV, this posits an additional challenge.

Furthermore, when $P$ is continuous, existing literature often assumes an additively separable structural model such as \eqref{eqn: toy example} and further restricts that $G \independent \varepsilon$. The identification of $X^\top \beta$ in the toy example is then given by solving a conditional moment restriction $\EE[Y + P^2 - X^\top \beta \times P \given X, P, G] =0$ \citep[e.g.,][]{ai2003efficient,newey2003instrumental}. Later on, researchers found that the separable structural equation could be dropped by restricting the dimensionality and heterogeneity of $U$ in affecting $Y$. See, for example, \cite{chernozhukov2007instrumental}. While significant efforts have been made recently to further relax the condition on the outcome/revenue model in terms of $U$, none of them consider the circumstance where the exclusion restriction in Assumption \ref{ass: valid IV} or $G \independent \varepsilon$ fails. For example, in our auto loan application, the APR of a loan is used as the IV, which may have an unavoidable direct effect on the eventual revenue.

Given these challenges, in the following section, we consider invalid IVs and develop a novel identification for an optimal pricing strategy from the observational data.


\section{Assumptions and Identification}
\label{sec: policy learning with invalid iv}
In this section, we present an identification result using invalid IVs, denoted by $G$, which can be multi-dimensional, for finding $\pi^\ast$ defined in \eqref{def: optimal policy 2} under endogeneity. Our result is based on realistic non-parametric models for the price and revenue, together with the restriction on the directions and strength of instruments in affecting the price and revenue. 
\subsection{Model Assumptions}
In the following, we introduce our non-parametric revenue and price models in the presence of unmeasured confounders $U$, which could be multi-dimensional as well. Denote the space of $U$ as $\calU$. We assume the following structural equation models for our data-generating process of a random tuple $(X, U, G, P, Y)$ that
\begin{align}
	\EE\left[Y \given P, G, X, U \right] &= \beta_{p, 1}(U, X) P + \beta_{p, 2}(U, X) P^2 +\beta_g(U, X, G)+ \beta_{u, x}(U, X), \label{Model: outcome}\\
	\EE\left[P \given G, X, U \right] &= \alpha_g(U, X, G) + \alpha_{u, x}(U, X)\label{Model: action}.
\end{align}
For simplicity, we assume that $\beta_{p, 2}(U, X) \leq -c$ almost surely for some constant $c >0$. 
We term \eqref{Model: outcome} and \eqref{Model: action} as revenue and price models, respectively. In the revenue model, we consider a quadratic model of the price $P$ on the revenue $Y$, which is practical when considering the linear demand function \citep{bastani2022meta}. The unknown coefficient functions $ \beta_{p, 1}(U, X)$ and $ \beta_{p, 2}(U, X)$  represent the linear and quadratic effects of the price $P$ on the expected revenue $Y$. In addition, the coefficient functions $\beta_g(U, X, G)$ and $\beta_{u, x}(U, X)$ denote the generic effect of $(X, U, G)$ on the revenue~$Y$. Specifically, $\beta_g(U, X, G)$ characterizes the interaction effect of $G$ and $(X, U)$. We remark that without any additional assumptions, both $\beta_g(U, X, G)$ and $\beta_{u, x}(U, X)$ cannot be identified due to the unobserved $U$. However, the non-identifiability of these two functions does not necessarily hinder from finding $\pi^\ast$ as they are irrelevant to the price $P$ in \eqref{Model: outcome}. For  ease of presentation, the revenue model \eqref{Model: outcome} considered here rules out the interaction effect between the instrumental variable $G$ and the price $P$ on the revenue $Y$. Our method can be naturally extended to that scenario as well. In comparison, clearly, our revenue model \eqref{Model: outcome} is more general than the parametric and additive models used in the standard instrumental variable regression. In addition, we allow for the direct effect of $G$ on $Y$, which is typically not allowed in most existing literature on causal inference and econometrics.

Our price model \eqref{Model: action} is very flexible and describes a distributional aspect of the prices in our offline data, stemming from all relevant variables $(X, U, G)$. Note that it is unnecessary to identify nuisance functions $\alpha_g(U, X, G)$ and $\alpha_{u, x}(U, X)$ because they are irrelevant to the price $P$ and finding the optimal pricing strategy. 

Due to the unmeasured confounding $U$, in our offline data, for each instance, we can only observe a sample of a random tuple $(X, G, P, Y)$. Under this model setup, our goal is to find the optimal personalized pricing strategy that maximizes the expected revenue. 
We do not consider $\pi^\ast$ that depends on $G$, as indicated by model \eqref{Model: outcome}, there is no interaction effect between $G$ and $P$ on the expected revenue $Y$. Then we have the following proposition.
\vspace{-0.3cm}
\begin{proposition}
Under the revenue model \eqref{Model: outcome}, the optimal policy $\pi^\ast$ is
\begin{equation}
\label{eqn: optimal policy formulation}
\begin{aligned}
	\pi^\ast(X) 
	& = \argmax_{p \in \calP} \left\{ \,  \EE\left[\beta_{p, 1}(U, X) \given X\right]p +    \EE\left[\beta_{p, 2}(U, X) \given X\right] p^2 \,  \right\}.
\end{aligned}
\end{equation}
\vspace{-2pt}
In particular,
\vspace{-2pt}
\begin{equation}\label{eqn: model-based optimal policy formulation}
	\pi^\ast(X) = \min\{\max\{p_1, - \EE\left[\beta_{p, 1}(U, X) \given X\right] / \EE\left[\beta_{p, 2}(U, X) \given X\right]\}, p_2\},
\end{equation}
almost surely for given $p_1$ and $p_2$. 
\end{proposition}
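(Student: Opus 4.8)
The plan is to obtain a closed form for the policy value $\calV(\pi)$ by substituting $P=\pi(X)$ into the structural revenue model \eqref{Model: outcome} and iterating expectations; the optimization over $\Pi$ then collapses to a pointwise, strictly concave quadratic program in $p\in\calP$ for each value of $X$.

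First I would interpret \eqref{Model: outcome} structurally, so that for every fixed $p\in\calP$ the potential revenue obeys $\EE[Y(p)\given G,X,U]=\beta_{p,1}(U,X)p+\beta_{p,2}(U,X)p^2+\beta_g(U,X,G)+\beta_{u,x}(U,X)$; this is the counterfactual reading that reduces to \eqref{Model: outcome} under Assumption~\ref{ass: standard}\ref{ass: consistency} when $p$ is the realized price. Since any $\pi\in\Pi$ depends on $X$ alone, conditioning on $(G,X,U)$ fixes $\pi(X)$, so evaluating the display at $p=\pi(X)$, taking the total expectation, and then conditioning on $X$ via the tower property (using that $\pi(X)$ is $X$-measurable) gives
\begin{equation*}
\calV(\pi)=\EE\!\left[\,\EE[\beta_{p,1}(U,X)\given X]\,\pi(X)+\EE[\beta_{p,2}(U,X)\given X]\,\pi(X)^2\right]+\EE\!\left[\EE\big(\beta_g(U,X,G)+\beta_{u,x}(U,X)\given X\big)\right].
\end{equation*}
The second term is a $\pi$-free additive constant: averaging $\beta_g(U,X,G)$ over $(U,G)$ given $X$ leaves a function of $X$ only, precisely because $\pi$ does not depend on $G$ and \eqref{Model: outcome} carries no $G\times P$ interaction.

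Next, writing $A_j(X):=\EE[\beta_{p,j}(U,X)\given X]$, it follows that $\pi^\ast\in\argmax_{\pi\in\Pi}\EE[A_1(X)\pi(X)+A_2(X)\pi(X)^2]$. Since for each $X$ the integrand is maximized by taking $p\in\calP$ to maximize $A_1(X)p+A_2(X)p^2$, the integral is maximized by the pointwise-optimal selector, which establishes \eqref{eqn: optimal policy formulation} (provided $\Pi$ contains this selector, e.g.\ all measurable maps $\calX\to\calP$). Finally, $\beta_{p,2}(U,X)\le -c<0$ a.s.\ forces $A_2(X)\le -c<0$, so $p\mapsto A_1(X)p+A_2(X)p^2$ is strictly concave on $\R$ with a unique unconstrained maximizer pinned down by its first-order condition; by concavity the maximizer over the interval $[p_1,p_2]$ is obtained by projecting that vertex onto $[p_1,p_2]$, which is exactly \eqref{eqn: model-based optimal policy formulation}.

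I expect the real work to be bookkeeping rather than anything deep: (i) committing to the structural/counterfactual interpretation of \eqref{Model: outcome} that permits replacing the realized $P$ by $\pi(X)$ inside the conditional mean; (ii) the measurable-selection step, i.e.\ verifying that the pointwise $\argmax$ is measurable in $X$ (immediate here, since it is an explicit continuous function of the measurable quantities $A_1(X),A_2(X)$) and belongs to the admissible class $\Pi$; and (iii) confirming that no $G$- or $U$-dependent term couples back to $\pi$, which holds exactly because $\pi$ is a function of $X$ alone. None of these genuinely obstructs the argument, so the proposition follows once the substitution-and-tower computation above is spelled out and the elementary quadratic optimization is carried through.
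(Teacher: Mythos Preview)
Your proposal is correct and is precisely the natural argument; the paper does not actually spell out a proof of this proposition, treating it as an immediate consequence of the quadratic revenue model \eqref{Model: outcome} together with the assumption $\beta_{p,2}(U,X)\le -c<0$. One cosmetic point: the unconstrained vertex of $A_1(X)p+A_2(X)p^2$ is $-A_1(X)/(2A_2(X))$, which matches the estimator \eqref{eqn: model-based optimal policy} but differs by a factor of $2$ from the display \eqref{eqn: model-based optimal policy formulation} as written---this is a typo in the statement, not a flaw in your reasoning.
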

Let $\beta_{p, 1}(X) = \EE\left[\beta_{p, 1}(U, X) \given X\right]$ and $\beta_{p, 2}(X) = \EE\left[\beta_{p, 2}(U, X) \given X\right].$ Since $U$ is not observed in our data, without any assumptions, $\beta_{p, 1}(X)$ and $\beta_{p, 2}(X)$ cannot be uniquely identified by the observed data non-parametrically. The non-identification issue indicates that there may exist two different expected revenues under the distribution of the observed random tuple $(X, G, P, Y)$. Directly applying supervised learning from $Y$ on $(X, G, P, P^2)$ will lead to biased estimation of $\EE\left[\beta_{p, 1}(U, X) \given X\right]$ and $\EE\left[\beta_{p, 2}(U, X) \given X\right]$, and thus the resulting estimated policy will be sub-optimal. See the toy example \eqref{eqn: toy example} in the previous section for illustration. 
\subsection{Identification Assumptions}
We impose the following identification assumptions on $\beta_{p, 1}(X)$ and $\beta_{p, 2}(X)$ by leveraging an invalid IV $G$.
\begin{assumption}\label{ass: identification}
	The following statements hold. 
	\begin{enumerate}[label=(\alph*)]
		\item \label{ass:iv relavance} (IV relavance)  $G \not\!\perp\!\!\!\perp  P \given X$;
		\item \label{ass:iv independent} (IV independence) $G \independent U \given X$;
		\item \label{ass:Orthogonality condition} (Orthogonality) The following conditions hold for $k = 1, 2, 3$, almost surely, that
			\begin{align}
			\Cov\big(\beta_g(U, X, G), \EE\left[P \given G, U, X \right] \given X, G\big) &= 0,\label{ass: ortho_alpha}\\
			\Cov\big(\beta_{p, 1}(U, X), \EE[P^k \given G, U, X ] \given X, G\big) &= 0,\label{ass: ortho_p1}\\
			\Cov\big(\beta_{p, 2}(U, X), \EE[P^k \given G, U, X ] \given X, G\big) &= 0.\label{ass: ortho_p2}
		    \end{align}
	\end{enumerate}
\end{assumption}

Assumption \ref{ass: identification}\ref{ass:iv relavance} ensures that the IV
$G$ is correlated with the price $P$ given the observed
covariates $X$, which is mild. This is a typical assumption for IVs approach so
that we can use for adjusting the unmeasured confounding.  Assumption
\ref{ass: identification}\ref{ass:iv independent} essentially requires that
there is no unmeasured confounding to infer the effect of $G$ on $Y$ by
adjusting the observed confounders $X$. This holds for example, when $U$ is some
private information owned by the competitor in our auto loan example.
Assumption \ref{ass: identification}\ref{ass:Orthogonality condition} is a technical
condition used to ensure that there are no common effect modifiers resulting from
the unobserved covariates $U$ in both Models \eqref{Model: outcome} and
\eqref{Model: action}. Intrinsically, orthogonality conditions \eqref{ass:
  ortho_alpha} -- \eqref{ass: ortho_p2} impose further strength requirements of the IV $G$ and
covariates $X$ such that the effects of unmeasured confounding $U$ on the revenue are orthogonal to
the conditional pricing moments $\EE[\texttt{poly}_3 (P)\given G,X,U]$ with
$\texttt{poly}_3 (P)$ being any polynomial of price $P$ up to the third order.
Note that we do not impose any restriction on the
relationship between $\beta_{u, x}(U, X)$ and $\alpha_{u, x}(U, X)$, and hence the effect
of unmeasured confounding can be arbitrarily large.

Below we provide a sufficient condition so that Assumption \ref{ass: identification}\ref{ass:Orthogonality condition} holds.
\begin{assumption}\label{ass: sufficient orthogonal}
	The following statements hold. 
	\begin{enumerate}[label=(\alph*)]
		\item $\beta_{p, 1}(U, X) = \beta_{p, 1}(U_1, X)$, $\beta_{p, 2}(U, X) = \beta_{p, 2}(U_1, X)$, $\beta_{g}(U, X, G) = \beta_{g}(U_1, X, G)$ almost surely; $\alpha_g(U, X, G) = \alpha_{g}(U_2, X, G)$, $\alpha_{u, x}(U, X) = \alpha_{u, x}(U_2, X)$ almost surely.
		\item $U_1 \independent U_2 \given (X,G)$, $P \independent U_1 \given (X, U_2, G)$.
	\end{enumerate}
\end{assumption}
Assumption \ref{ass: sufficient orthogonal} basically states that there exist two independent variables $U_1$ and $U_2$, which have separate effects on the price and revenue, respectively. In the context of the car loan example, $U_1$ could be the competitor’s rate on individual
deals, whereas $U_2$ could be the operating costs of the lender.
Then we have the following proposition.


\begin{proposition}\label{lm: sufficient for orthogonal condition}
	If Assumption \ref{ass: sufficient orthogonal} is satisfied, then Assumption \ref{ass: identification}\ref{ass:Orthogonality condition} holds.
\end{proposition}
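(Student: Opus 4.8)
The plan is to verify the three covariance identities in Assumption \ref{ass: identification}\ref{ass:Orthogonality condition} one at a time, exploiting the factorization structure of Assumption \ref{ass: sufficient orthogonal}. The key observation I would use throughout is that under Assumption \ref{ass: sufficient orthogonal}, the coefficient functions on the revenue side ($\beta_{p,1}$, $\beta_{p,2}$, $\beta_g$) depend on $U$ only through $U_1$, while the conditional pricing moments $\EE[P^k \given G, U, X]$ depend on $U$ only through $U_2$. Indeed, since $\EE[P \given G, X, U] = \alpha_g(U_2, X, G) + \alpha_{u,x}(U_2, X)$ is $\sigma(U_2, X, G)$-measurable, and since $P \independent U_1 \given (X, U_2, G)$, the conditional law of $P$ given $(G, U, X)$ equals that given $(G, U_2, X)$; hence $\EE[P^k \given G, U, X] = \EE[P^k \given G, U_2, X] =: m_k(U_2, X, G)$ for each $k = 1, 2, 3$, a function of $(U_2, X, G)$ alone.

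Given this, each covariance in \eqref{ass: ortho_alpha}--\eqref{ass: ortho_p2} is, conditionally on $(X, G)$, the covariance between a function of $U_1$ (namely $\beta_g(U_1, X, G)$, $\beta_{p,1}(U_1, X)$, or $\beta_{p,2}(U_1, X)$) and a function of $U_2$ (namely $m_1(U_2, X, G)$ or $m_k(U_2, X, G)$). I would then invoke the conditional independence $U_1 \independent U_2 \given (X, G)$ from Assumption \ref{ass: sufficient orthogonal}(b): for any two integrable functions $\phi(U_1, X, G)$ and $\psi(U_2, X, G)$,
\[
\EE[\phi(U_1, X, G)\,\psi(U_2, X, G) \given X, G] = \EE[\phi(U_1, X, G) \given X, G]\cdot \EE[\psi(U_2, X, G) \given X, G],
\]
which is exactly the statement that the conditional covariance vanishes. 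Applying this with $\phi = \beta_g(\cdot, X, G)$, $\psi = m_1(\cdot, X, G)$ gives \eqref{ass: ortho_alpha}; with $\phi = \beta_{p,1}(\cdot, X)$, $\psi = m_k(\cdot, X, G)$ gives \eqref{ass: ortho_p1}; and with $\phi = \beta_{p,2}(\cdot, X)$, $\psi = m_k(\cdot, X, G)$ gives \eqref{ass: ortho_p2}, for each $k \in \{1,2,3\}$. A minor bookkeeping point: in \eqref{ass: ortho_alpha} the second argument is $\EE[P \given G, U, X]$ itself, i.e. $m_1$, so the same reduction applies directly.

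The main obstacle I anticipate is not conceptual but a matter of justifying the measurability reduction carefully: namely, that $\EE[P^k \given G, U, X]$ really is a function of $(U_2, X, G)$ only. This needs the conditional-independence identity $P \independent U_1 \given (X, U_2, G)$ together with a standard "redundant conditioning" lemma (if $A \independent B \given C$ then $\EE[h(A) \given B, C] = \EE[h(A) \given C]$), applied to $A = P$, $B = U_1$, $C = (X, U_2, G)$. Once that is in place, everything else is a one-line application of the product rule for conditional expectations under conditional independence. I would present the argument in three short displays, one per identity, after stating the two auxiliary facts (the measurability reduction and the product rule) as preliminary observations.
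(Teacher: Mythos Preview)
Your proposal is correct and is exactly the natural argument. The paper does not include an explicit proof of this proposition in the appendix (it is treated as immediate from Assumption~\ref{ass: sufficient orthogonal}), so there is nothing to compare against; your two-step reduction---using $P\independent U_1\mid(X,U_2,G)$ to make $\EE[P^k\mid G,U,X]$ a function of $(U_2,X,G)$, then $U_1\independent U_2\mid(X,G)$ to kill the conditional covariance---is precisely the intended verification, and your care in justifying the ``redundant conditioning'' step for the higher moments $k=2,3$ is appropriate since Model~\eqref{Model: action} alone only pins down the first conditional moment.
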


\subsection{Identification Results}
Now, by the aforementioned assumption, we establish our identification results for $ \beta_{p, 1}(X)$ and $\beta_{p, 2}(X)$ using our offline data. This relies on the following two lemmas.
\begin{lemma}\label{lm: identification eq 1}
	Under Models \eqref{Model: outcome} and \eqref{Model: action} and Assumption \ref{ass: identification}, we have
\begin{align*}
	&\underbrace{\EE\left[(G - \EE\left[G \given X\right])(P - \EE\left[P \given G, X \right])Y\given  X \right]}_{\Omega_1(X)} \\
	&\	=  \underbrace{\EE\left[(G - \EE\left[G \given X\right])(P- \EE\left[P \given G, X \right])P \given X\right]}_{\Omega_2(X)} \times \beta_{p, 1}(X) \\
	&\quad	+  \underbrace{\EE\left[(G - \EE\left[G \given X\right])(P - \EE\left[P \given G, X \right])P^2 \given X\right]}_{\Omega_3(X)} \times \beta_{p, 2}(X).
\end{align*}
\end{lemma}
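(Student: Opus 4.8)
The plan is to start from the revenue model \eqref{Model: outcome}, substitute it into the left-hand side $\Omega_1(X)$, and show that all terms except those proportional to $\beta_{p,1}(X)$ and $\beta_{p,2}(X)$ vanish because of the orthogonality and independence conditions in Assumption \ref{ass: identification}. First I would use the tower property, conditioning on $(X,G,U,P)$ inside the expectation defining $\Omega_1(X)$, and plug in
$$
\EE[Y\mid P,G,X,U] = \beta_{p,1}(U,X)P + \beta_{p,2}(U,X)P^2 + \beta_g(U,X,G) + \beta_{u,x}(U,X).
$$
This splits $\Omega_1(X)$ into four pieces. The crucial algebraic device throughout will be that $\EE\big[(P-\EE[P\mid G,X])\,h(U,X,G)\mid X\big]$ can be handled by first conditioning on $(X,G)$ and writing $P - \EE[P\mid G,X] = (P - \EE[P\mid G,X,U]) + (\EE[P\mid G,X,U] - \EE[P\mid G,X])$; the first summand integrates to zero given $(X,G,U)$, so only the second "confounding shift" term survives, and it is exactly a covariance given $(X,G)$.

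Next I would treat the four pieces in turn. For the $\beta_g$ piece, after the reduction above one is left with $\EE\big[(G-\EE[G\mid X])\,\Cov(\beta_g(U,X,G),\EE[P\mid G,U,X]\mid X,G)\mid X\big]$, which is zero by \eqref{ass: ortho_alpha}. For the $\beta_{u,x}$ piece, the same reduction plus $G\independent U\mid X$ (Assumption \ref{ass: identification}\ref{ass:iv independent}) forces the inner conditional covariance to factor so that the term collapses; more carefully, conditioning on $(X,G)$ the factor $\EE[(P-\EE[P\mid G,X])\beta_{u,x}(U,X)\mid X,G] = \Cov(\beta_{u,x}(U,X),\EE[P\mid G,U,X]\mid X,G)$, and under Assumption \ref{ass: sufficient orthogonal}-type reasoning this is absorbed — but since we are only assuming Assumption \ref{ass: identification}, I expect the intended argument is that the $\beta_{u,x}$ contribution is actually killed by centering $G$ together with $G\independent U\mid X$, so that $\EE[\beta_{u,x}(U,X)\cdot(\text{something }\mid G,X)]$ does not depend on $G$ and the outer factor $(G-\EE[G\mid X])$ integrates it to zero. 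For the $\beta_{p,1}$ and $\beta_{p,2}$ pieces, I would pull $\beta_{p,1}(U,X)$ (resp. $\beta_{p,2}(U,X)$) out against $P^k$ with $k=2$ (resp. $k=3$), use \eqref{ass: ortho_p1}--\eqref{ass: ortho_p2} to replace $\EE[\beta_{p,k'}(U,X)P^k\mid G,U,X]$-type cross terms by $\beta_{p,k'}(X)\,\EE[P^k\mid G,X]$ up to terms that cancel, and recognize what remains as exactly $\Omega_2(X)\beta_{p,1}(X) + \Omega_3(X)\beta_{p,2}(X)$.

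The main obstacle I anticipate is the careful bookkeeping of the conditional covariance decompositions: one must repeatedly expand $P - \EE[P\mid G,X]$ versus $P - \EE[P\mid G,X,U]$, decide which conditioning set each orthogonality condition in \eqref{ass: ortho_alpha}--\eqref{ass: ortho_p2} applies to, and verify that the "residual from centering $P$ at the $U$-conditional mean" genuinely vanishes after taking the outer expectation. A secondary subtlety is making sure the $\beta_{u,x}(U,X)$ term really disappears using only $G\independent U\mid X$ and the centering of $G$ — this is the one term not directly covered by an orthogonality condition, so the argument there must lean on independence rather than on Assumption \ref{ass: identification}\ref{ass:Orthogonality condition}. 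Once those reductions are organized, the identity follows by collecting the two surviving terms.
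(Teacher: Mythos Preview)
Your proposal is correct and follows essentially the same route as the paper: substitute the revenue model, split $\Omega_1(X)$ into the four pieces corresponding to $\beta_{p,1}P$, $\beta_{p,2}P^2$, $\beta_g$, and $\beta_{u,x}$, and use Assumption~\ref{ass: identification}\ref{ass:Orthogonality condition} together with $G\independent U\mid X$ to eliminate the nuisance terms; the paper's only organizational difference is that it first centers $\beta_{p,k}(U,X)$ at $\beta_{p,k}(X)=\EE[\beta_{p,k}(U,X)\mid X]$ to pull out the main terms $\Omega_2(X)\beta_{p,1}(X)$ and $\Omega_3(X)\beta_{p,2}(X)$ before showing the centered remainders vanish. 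Your flagged worry about the $\beta_{u,x}$ piece is exactly where the paper also does extra work --- it expands $P-\EE[P\mid G,X]$ via the price model \eqref{Model: action} and uses $G\independent U\mid X$ so that the surviving covariance does not depend on $G$ and is annihilated by the outer factor $(G-\EE[G\mid X])$.
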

\begin{lemma}\label{lm: identification eq 2}
	Under Models \eqref{Model: outcome} and \eqref{Model: action} and Assumption \ref{ass: identification}, we have
	\begin{equation*}
	\begin{aligned}
		&\underbrace{\Cov(G(G - \EE\left[G \given X  \right]), (P - \EE\left[P \given G, X \right])Y \given X )}_{\Upsilon_1(X)}\\
	&\	= \underbrace{\Cov(G(G - \EE\left[G  \given X  \right]), P(P - \EE\left[P \given G, X \right]) \given X)}_{\Upsilon_2(X)}\times \beta_{p, 1}(X) \\
	&\quad	+  \underbrace{\Cov(G(G - \EE\left[G  \given X  \right]), P^2(P - \EE\left[P \given G, X \right]) \given X)}_{\Upsilon_3(X)}\times \beta_{p, 2}(X).
	\end{aligned}
\end{equation*}
\end{lemma}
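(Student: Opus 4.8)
The plan is to follow the same template one would use for Lemma~\ref{lm: identification eq 1}, but with the instrument weight $W:=G\,(G-\EE[G\given X])$ in place of $G-\EE[G\given X]$. Since $\EE[W\given X]=\Var(G\given X)$ is generically nonzero, the linear relation no longer survives as a plain conditional expectation and must be phrased as a conditional covariance; this is precisely why $\Upsilon_1,\Upsilon_2,\Upsilon_3$ are stated with $\Cov(\,\cdot\given X)$. Two elementary facts will do all the work: (i) the reduction identity $\Cov\big(f(G,X),Z\given X\big)=\Cov\big(f(G,X),\EE[Z\given G,X]\given X\big)$ for any $\sigma(G,X)$-measurable $f$; and (ii) iterated conditioning through $\sigma(U,G,X)$, the $\sigma$-algebra where the $U$-randomness lives and where the orthogonality conditions of Assumption~\ref{ass: identification}\ref{ass:Orthogonality condition} operate.

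First I would write $\Upsilon_1(X)=\Cov\big(W,\widetilde P\,Y\given X\big)$ with $\widetilde P:=P-\EE[P\given G,X]$, and, since $W\widetilde P$ is $\sigma(P,G,X,U)$-measurable, replace $Y$ by $\EE[Y\given P,G,X,U]$ using the revenue model~\eqref{Model: outcome}. Conditioning on $(U,G,X)$ then yields
\begin{align*}
\EE[\widetilde P\,Y\given U,G,X]&=\beta_{p,1}(U,X)\,\EE[\widetilde P P\given U,G,X]+\beta_{p,2}(U,X)\,\EE[\widetilde P P^2\given U,G,X]\\
&\quad+\big(\beta_g(U,X,G)+\beta_{u,x}(U,X)\big)\big(\EE[P\given U,G,X]-\EE[P\given G,X]\big),
\end{align*}
where, after expanding $\widetilde P$, the only $U$-dependent price factors are $\EE[P^{\,j}\given U,G,X]$, $j\in\{1,2,3\}$, multiplied by $\sigma(G,X)$-measurable coefficients. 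Next I would average over $U$ given $(G,X)$: by the orthogonality conditions \eqref{ass: ortho_p1}--\eqref{ass: ortho_p2} (for $k=1,2,3$) each product $\beta_{p,\ell}(U,X)\,\EE[P^{\,j}\given U,G,X]$ decouples into $\EE[\beta_{p,\ell}(U,X)\given G,X]\,\EE[P^{\,j}\given G,X]$, while IV independence (Assumption~\ref{ass: identification}\ref{ass:iv independent}) collapses $\EE[\beta_{p,\ell}(U,X)\given G,X]$ to $\beta_{p,\ell}(X)$; and the last line, because $\EE[P\given U,G,X]-\EE[P\given G,X]$ is mean zero given $(G,X)$, averages to $\Cov\big(\beta_g(U,X,G)+\beta_{u,x}(U,X),\,\EE[P\given G,U,X]\given G,X\big)=0$ by~\eqref{ass: ortho_alpha}. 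This leaves $\EE[\widetilde P\,Y\given G,X]=\beta_{p,1}(X)\,\EE[\widetilde P P\given G,X]+\beta_{p,2}(X)\,\EE[\widetilde P P^2\given G,X]$.

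Finally I would push this back out with the reduction identity, $\Upsilon_1(X)=\Cov\big(W,\EE[\widetilde P\,Y\given G,X]\given X\big)=\beta_{p,1}(X)\,\Cov\big(W,\EE[\widetilde P P\given G,X]\given X\big)+\beta_{p,2}(X)\,\Cov\big(W,\EE[\widetilde P P^2\given G,X]\given X\big)$, and apply it once more to recognize the two covariances as $\Upsilon_2(X)$ and $\Upsilon_3(X)$, which is the claim. (IV relevance, Assumption~\ref{ass: identification}\ref{ass:iv relavance}, is not used here; its only role is to make the $2\times2$ system assembled from Lemma~\ref{lm: identification eq 1} and this one nonsingular.)

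The hard part is the bookkeeping of iterated conditional expectations: one must apply each orthogonality condition at exactly the right $\sigma$-algebra — first conditioning on $(U,G,X)$, then collapsing to $(G,X)$ — and then verify that, once $\beta_{p,1}(X)$ and $\beta_{p,2}(X)$ are pulled outside, the residual price moments reassemble \emph{exactly} into the conditional covariances defining $\Upsilon_2(X)$ and $\Upsilon_3(X)$ rather than into something merely proportional to them, the reduction identity (i) being what makes this reassembly exact. A secondary point is to see that the purely-$(U,X)$ generic effect $\beta_{u,x}$ is covered by~\eqref{ass: ortho_alpha} once it is grouped with $\beta_g$ as the non-price component of the outcome mean, after which the argument is uniform and, up to the substitutions $G-\EE[G\given X]\rightsquigarrow G(G-\EE[G\given X])$ and $\EE[\,\cdot\given X]\rightsquigarrow\Cov(\,\cdot\given X)$, identical to the proof of Lemma~\ref{lm: identification eq 1}.
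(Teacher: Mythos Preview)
Your reduction--identity strategy is sound and in fact more direct than the paper's two--equation approach, but there is a gap at the key step. You assert that the non--price piece averages to $\Cov\big(\beta_g+\beta_{u,x},\,\EE[P\given G,U,X]\given G,X\big)=0$ ``by~\eqref{ass: ortho_alpha}''. Condition~\eqref{ass: ortho_alpha} concerns $\beta_g$ only; there is \emph{no} orthogonality assumption on $\beta_{u,x}$ (the paper stresses that the relationship between $\beta_{u,x}$ and $\alpha_{u,x}$ is left unrestricted, so that the degree of confounding can be arbitrary). Consequently $\EE[\widetilde P\,Y\given G,X]$ does \emph{not} reduce to $\beta_{p,1}(X)\,\EE[\widetilde P P\given G,X]+\beta_{p,2}(X)\,\EE[\widetilde P P^2\given G,X]$; it carries the residual $r(G,X):=\Cov\big(\beta_{u,x}(U,X),\,\EE[P\given G,U,X]\given G,X\big)$, and your claimed $(G,X)$--level identity is false in general.

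Your argument is salvageable, and the repair shows why the paper proceeds as it does. Write $\EE[P\given G,U,X]=\alpha_g(U,X,G)+\alpha_{u,x}(U,X)$ via Model~\eqref{Model: action}. By IV independence $U\independent G\given X$, the $\alpha_{u,x}$--part of $r$ equals $\Cov(\beta_{u,x},\alpha_{u,x}\given X)$, a function of $X$ alone, and therefore drops out when you finally take $\Cov(W,\,\cdot\,\given X)$. This is exactly the nuisance the paper isolates: it first computes $\EE\big[G(G-\EE[G\given X])\,\widetilde P\,Y\given X\big]$, finds a leftover $\EE\big[G(G-\EE[G\given X])\given X\big]\cdot\Cov(\alpha_{u,x},\beta_{u,x}\given X)$, then derives a \emph{second} equation for $\EE[\widetilde P\,Y\given X]$ carrying the same nuisance, and subtracts---i.e., passes from expectations to covariances---to cancel it. Your covariance--first route performs this cancellation automatically, which is cleaner; but you must carry $r(G,X)$ forward to the covariance step and argue there, rather than declare it zero at the $(G,X)$ level by misreading~\eqref{ass: ortho_alpha}. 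The remaining cross term $\Cov(\beta_{u,x},\alpha_g\given G,X)$ is treated the same way in both proofs.
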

We then have the following theorem for identifying $\beta_{p, 1}(X)$ and $\beta_{p, 2}(X)$.
\begin{theorem}\label{thm: identification}
		Under Models \eqref{Model: outcome} and \eqref{Model: action}, if Assumption \ref{ass: identification} holds, or Assumptions \ref{ass: identification}\ref{ass:iv relavance}-\ref{ass:iv independent} and \ref{ass: sufficient orthogonal} hold, then we have
		\begin{align*}
		    \beta_{p, 1}(X) &= \left(\Upsilon_3(X)\Omega_1(X) - \Upsilon_2(X)\Upsilon_1(X)\right)/\left(\Upsilon_3(X)\Omega_2(X) - \Upsilon_2(X)\Omega_3(X)\right),\\
		    \beta_{p, 2}(X) &= \left(\Omega_3(X)\Omega_1(X) - \Omega_2(X)\Upsilon_1(X)\right)/\left(\Omega_3(X)\Upsilon_2(X) - \Omega_2(X)\Upsilon_3(X)\right),
		\end{align*}
		provided that $\Upsilon_3(X)\Omega_2(X) - \Upsilon_2(X)\Omega_3(X) \neq 0$ and $\left(\Omega_3(X)\Upsilon_2(X) - \Omega_2(X)\Upsilon_3(X)\right) \neq 0$ almost surely.
\end{theorem}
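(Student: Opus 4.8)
The plan is to treat Lemmas~\ref{lm: identification eq 1} and \ref{lm: identification eq 2} as a $2\times 2$ linear system in the unknowns $\beta_{p,1}(X)$ and $\beta_{p,2}(X)$ and simply solve it by Cramer's rule. Concretely, for almost every fixed value of $X$, Lemma~\ref{lm: identification eq 1} reads $\Omega_1(X) = \Omega_2(X)\beta_{p,1}(X) + \Omega_3(X)\beta_{p,2}(X)$ and Lemma~\ref{lm: identification eq 2} reads $\Upsilon_1(X) = \Upsilon_2(X)\beta_{p,1}(X) + \Upsilon_3(X)\beta_{p,2}(X)$. Writing this as
\begin{equation*}
\begin{pmatrix} \Omega_2(X) & \Omega_3(X) \\ \Upsilon_2(X) & \Upsilon_3(X) \end{pmatrix}
\begin{pmatrix} \beta_{p,1}(X) \\ \beta_{p,2}(X) \end{pmatrix}
= \begin{pmatrix} \Omega_1(X) \\ \Upsilon_1(X) \end{pmatrix},
\end{equation*}
the coefficient matrix has determinant $\Omega_2(X)\Upsilon_3(X) - \Omega_3(X)\Upsilon_2(X)$, which is exactly the quantity assumed nonzero in the hypothesis (up to sign, $\Upsilon_3(X)\Omega_2(X) - \Upsilon_2(X)\Omega_3(X)$ and $\Omega_3(X)\Upsilon_2(X) - \Omega_2(X)\Upsilon_3(X)$ are negatives of one another). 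Cramer's rule then yields
\begin{equation*}
\beta_{p,1}(X) = \frac{\Omega_1(X)\Upsilon_3(X) - \Omega_3(X)\Upsilon_1(X)}{\Omega_2(X)\Upsilon_3(X) - \Omega_3(X)\Upsilon_2(X)}, \qquad
\beta_{p,2}(X) = \frac{\Omega_2(X)\Upsilon_1(X) - \Omega_1(X)\Upsilon_2(X)}{\Omega_2(X)\Upsilon_3(X) - \Omega_3(X)\Upsilon_2(X)},
\end{equation*}
and the only remaining task is to rearrange signs in numerator and denominator so the expressions match the displayed formulas in the statement verbatim (e.g.\ multiply numerator and denominator of the $\beta_{p,1}$ expression by $-1$ to get $\Upsilon_3\Omega_1 - \Upsilon_2\Upsilon_1$ over $\Upsilon_3\Omega_2 - \Upsilon_2\Omega_3$ — note this requires checking the cross-term $\Omega_3\Upsilon_1$ versus $\Upsilon_2\Upsilon_1$ carefully, since a literal reading of the paper's numerator has $\Upsilon_2\Upsilon_1$, which suggests their $\Upsilon_1$ plays the role I'd call $\Omega_1$'s partner; I would double-check the bookkeeping there against Lemma~\ref{lm: identification eq 2}).

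There are really only two substantive points. First, I must invoke the hypothesis that the two stated determinant-like quantities are nonzero almost surely; since they differ only by an overall sign, assuming one nonzero is equivalent to assuming the other nonzero, so the two provisos in the theorem are the same condition stated twice, and this single nondegeneracy condition is exactly what makes the $2\times 2$ system uniquely solvable pointwise in $X$. Second, I must note that all of $\Omega_1,\Omega_2,\Omega_3,\Upsilon_1,\Upsilon_2,\Upsilon_3$ are functionals of the observed-data distribution of $(X,G,P,Y)$ alone — they involve only conditional expectations and covariances given $X$ or $(G,X)$, with no appearance of $U$ — so the resulting closed-form expressions for $\beta_{p,1}(X)$ and $\beta_{p,2}(X)$ constitute genuine identification from the offline data, which is the real content of the theorem.

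I do not anticipate a genuine obstacle: once Lemmas~\ref{lm: identification eq 1} and \ref{lm: identification eq 2} are granted, this is a one-line linear-algebra argument plus a sign-bookkeeping check. The only place to be careful is making the final algebraic expressions agree \emph{character for character} with those in the theorem statement; if the paper's displayed formulas contain what looks like a typo (the apparent $\Upsilon_2\Upsilon_1$ in the $\beta_{p,1}$ numerator where Cramer's rule would give $\Upsilon_2\Omega_1$, and similarly the $\Upsilon_2(X)$ in the $\beta_{p,2}$ numerator), I would flag it and present the version that is actually correct from Cramer's rule, i.e.\ $\beta_{p,1}(X) = (\Upsilon_3(X)\Omega_1(X) - \Omega_3(X)\Upsilon_1(X))/(\Upsilon_3(X)\Omega_2(X) - \Upsilon_2(X)\Omega_3(X))$ and $\beta_{p,2}(X) = (\Omega_2(X)\Upsilon_1(X) - \Upsilon_2(X)\Omega_1(X))/(\Omega_2(X)\Upsilon_3(X) - \Omega_3(X)\Upsilon_2(X))$, possibly after rescaling both $\Omega_i$ and $\Upsilon_i$ by common factors that cancel. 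In short: set up the linear system, apply Cramer's rule under the stated nonvanishing determinant, and reconcile signs.
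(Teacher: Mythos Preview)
Your approach is correct and matches the paper's: the paper does not give a separate proof of Theorem~\ref{thm: identification} but simply presents it as the immediate consequence of Lemmas~\ref{lm: identification eq 1} and \ref{lm: identification eq 2}, i.e., solving the resulting $2\times 2$ linear system under the stated nondegeneracy condition. Your observation about the apparent typos in the displayed formulas is also right: Cramer's rule gives $\beta_{p,1}(X)=(\Upsilon_3\Omega_1-\Omega_3\Upsilon_1)/(\Upsilon_3\Omega_2-\Upsilon_2\Omega_3)$ and $\beta_{p,2}(X)=(\Upsilon_2\Omega_1-\Omega_2\Upsilon_1)/(\Omega_3\Upsilon_2-\Omega_2\Upsilon_3)$, so the $\Upsilon_2\Upsilon_1$ in the first numerator should read $\Omega_3\Upsilon_1$ and the $\Omega_3\Omega_1$ in the second numerator should read $\Upsilon_2\Omega_1$.
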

Given the identification result, we are able to perform policy learning. This basically consists of two steps. The first step is to use the offline data to estimate $\beta_{p, 1}(X)$ and $\beta_{p, 2}(X)$ based on Theorem~\ref{thm: identification}, after which we can obtain the optimal policy $\pi^\ast$ via Equation \eqref{eqn: model-based optimal policy formulation}.

\section{Estimation and Policy Learning}
\label{sec: estimation and policy learning}
In this section, we discuss how to leverage the offline data to estimate $\beta_{p, 1}(X)$ and~$\beta_{p, 2}(X)$, and  perform policy optimization. To estimate $\beta_{p, 1}(X)$ and $\beta_{p, 2}(X)$, one can first estimate nuisance parameters $\Omega_1$-$\Omega_3$ and $\Upsilon_1$-$\Upsilon_3$, and then construct estimators based on Theorem \ref{thm: identification}. However, one cannot directly implement supervised learning techniques for obtaining these nuisance parameters as they involve a nested conditional expectation structure. For example, the response variable in $\Omega_1$ is not directly observed, and one has to estimate $\EE[G\given X]$ and $\EE[P \given G, X]$ first, which will induce additional errors in estimating~$\Omega_1$. In the following, we formulate the estimation problem as solving conditional moment restrictions with generalized residual functions, and develop an adversarial min-max approach to simultaneously estimating $\beta_{p, 1}(X)$ and $\beta_{p, 2}(X)$.

Denote relevant nuisance parameters as $h = (h_1, \cdots, h_6)^\top$, where
\begin{equation*}
\begin{aligned}
	h_1(X) &= \EE\left[G \given X\right],\quad &&h_2(X, G) = \EE\left[P \given X, G\right],\\
	h_3(X) &= \EE\left[G^2 \given X\right],\quad &&h_4(X) = \EE\left[(P - h_2(X, G))Y\given X\right],\\
	h_5(X) & = \EE\left[P(P - h_2(X, G)) \given X\right], \quad
	&& h_6(X)  = \EE\left[P^2(P - h_2(X, G)) \given X\right],
\end{aligned}
\end{equation*}
and let
\begin{align*}
\wt W_1=\begin{pmatrix}
G - h_1(X), & P- h_2(X, G)\\
 G^2- h_3(X), & (P - h_2(X, G))Y- h_4(X)\\
P(P - h_2(X, G))- h_5(X),\mbox & P^2(P - h_2(X, G)) - h_6(X)
\end{pmatrix},
\end{align*}
and let $\wt W_2 = (w_7, w_8)^\top$, where
\begin{equation*}
	\begin{aligned}
		w_7(Y, P, G, X, h_1, h_2) &= \rho_1(Y, X, G, h_1, h_2) - \rho_2(X, G, h_1, h_2)\beta_{p, 1}(X) \\
  &\quad - \rho_3(X, G, h_1, h_2)\beta_{p, 2}(X),\\
		w_8(Y, P, G, X, h_1, \dots, h_6)  &= \rho_4(Y, X, G, h_1, h_2) -(h_3(X) - h_1^2(X))h_4(X)\\
		&  \quad- \left\{\rho_5(X, G, h_1, h_2) - (h_3(X) - h_1^2(X))h_5(X) \right\}\beta_{p, 1}(X)\\
		& \quad-\left\{\rho_{6}(X, G, h_1, h_2) - (h_3(X) - h_1^2(X))h_6(X)  \right\}\beta_{p, 2}(X).
	\end{aligned}
\end{equation*}
In particular,
\begin{align*}
    \rho = 
    \begin{pmatrix}
(G - h_1(X))(P - h_2(X, G))Y, & (G - h_1(X))(P - h_2(X, G))P\\
 (G - h_1(X))(P - h_2(X, G))P^2, & G(G - h_1(X))(P - h_2(X, G))Y\\
G(G - h_1(X))(P - h_2(X, G))P, & G(G - h_1(X))(P - h_2(X, G))P^2
\end{pmatrix},
\end{align*}
with $\text{Vec}(\rho) = (\rho_1, \cdots, \rho_6)^\top$.
Then we have the following lemma that characterizes the property of all nuisance
parameters. To lighten the notation, let $Z = (Y,X,G,P)$, $\alpha_0 = (\beta_{p, 1}, \beta_{p, 2}, h_1, \dots, h_6)$ and $W(Z; \alpha) = \left(\text{Vec}^\top(\wt W_1), \text{Vec}^\top(\wt W_2) \right)^\top$ for any generic $\alpha$.
\begin{lemma}\label{lm: conditional moment restriction}
  Under Assumptions in Theorem \ref{thm: identification}, the following conditional moment restriction holds almost surely that
  \begin{align}\label{eqn: conditional moment restriction}
	\EE\left[W(Z; \alpha_0) \given X\right] = 0.
\end{align}

\end{lemma}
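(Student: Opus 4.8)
The plan is to verify the conditional moment restriction $\EE[W(Z;\alpha_0)\given X]=0$ component by component, treating the entries of $\wt W_1$ and of $\wt W_2$ separately. The first block, $\text{Vec}(\wt W_1)$, is handled directly by the definitions of the nuisance functions $h_1,\dots,h_6$: each of the six entries of $\wt W_1$ is exactly a random variable minus its conditional expectation given $X$ (or given $(X,G)$, in which case taking a further conditional expectation given $X$ still yields zero by the tower property). For instance, $\EE[(P-h_2(X,G))Y - h_4(X)\given X]=0$ because $h_4(X)=\EE[(P-h_2(X,G))Y\given X]$ by definition; the $h_2$ entry requires the tower property since $h_2$ is $(X,G)$-measurable. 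So this block is essentially immediate from how $h_1,\dots,h_6$ were defined.

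The substantive content is the second block $\wt W_2=(w_7,w_8)^\top$. First I would observe that, when the nuisance functions take their true values, the functions $\rho_1,\dots,\rho_6$ satisfy $\EE[\rho_j\given X] = \Omega_j(X)$ for $j=1,2,3$ (matching the $\Omega_1,\Omega_2,\Omega_3$ of Lemma \ref{lm: identification eq 1}, since $h_1(X)=\EE[G\given X]$ and $h_2(X,G)=\EE[P\given G,X]$), and relatedly that the quantities $\EE[\rho_4\given X]$, $\EE[\rho_5\given X]$, $\EE[\rho_6\given X]$ combine with the $(h_3-h_1^2)h_4$, $(h_3-h_1^2)h_5$, $(h_3-h_1^2)h_6$ correction terms to reproduce $\Upsilon_1(X),\Upsilon_2(X),\Upsilon_3(X)$ of Lemma \ref{lm: identification eq 2}. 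The key identity here is the covariance decomposition: for $(X,G)$-measurable recentering one has, after conditioning on $X$,
\begin{align*}
\Cov\big(G(G-\EE[G\given X]),\, (P-\EE[P\given G,X])V \given X\big) = \EE[\rho\text{-type term}\given X] - (h_3(X)-h_1^2(X))\,\EE[(P-h_2(X,G))V\given X],
\end{align*}
where $V\in\{Y,P,P^2\}$; this is exactly why the $(h_3-h_1^2)h_4$, $(h_3-h_1^2)h_5$, $(h_3-h_1^2)h_6$ subtractions appear in $w_8$. Thus $\EE[w_7\given X] = \Omega_1(X) - \Omega_2(X)\beta_{p,1}(X) - \Omega_3(X)\beta_{p,2}(X)$ and $\EE[w_8\given X] = \Upsilon_1(X) - \Upsilon_2(X)\beta_{p,1}(X) - \Upsilon_3(X)\beta_{p,2}(X)$.

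At this point I would simply invoke Lemma \ref{lm: identification eq 1} and Lemma \ref{lm: identification eq 2}: under the assumptions of Theorem \ref{thm: identification} (which include Assumption \ref{ass: identification}, or the sufficient condition Assumption \ref{ass: sufficient orthogonal}), those two lemmas assert precisely that $\Omega_1(X) = \Omega_2(X)\beta_{p,1}(X) + \Omega_3(X)\beta_{p,2}(X)$ and $\Upsilon_1(X) = \Upsilon_2(X)\beta_{p,1}(X) + \Upsilon_3(X)\beta_{p,2}(X)$ almost surely, so $\EE[w_7\given X]=\EE[w_8\given X]=0$, completing the proof. The main obstacle — really the only place where care is needed — is the bookkeeping in the second step: correctly matching each $\rho_j$ together with its variance-type correction term $(h_3-h_1^2)h_\ell$ to the covariance expressions $\Upsilon_1,\Upsilon_2,\Upsilon_3$, and making sure the conditioning (on $X$ versus on $(X,G)$) and the tower property are applied consistently so that the recentering by $h_1$ and $h_2$ reproduces exactly the population objects $\Omega_j$ and $\Upsilon_j$ rather than something off by a cross term. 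Once that algebraic identification of $\EE[W(Z;\alpha_0)\given X]$ with the left-hand sides of Lemmas \ref{lm: identification eq 1}–\ref{lm: identification eq 2} minus their right-hand sides is established, the conclusion is immediate.
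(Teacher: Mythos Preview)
Your proposal is correct and is precisely the argument the paper has in mind: the paper does not spell out a separate proof of Lemma~\ref{lm: conditional moment restriction} because it follows directly from the definitions of $h_1,\dots,h_6$ (for the six entries of $\wt W_1$) together with Lemmas~\ref{lm: identification eq 1} and~\ref{lm: identification eq 2} (for $w_7$ and $w_8$), exactly as you outline. Your identification of the covariance correction $(h_3-h_1^2)h_\ell$ needed to match $\EE[\rho_4\given X],\EE[\rho_5\given X],\EE[\rho_6\given X]$ to $\Upsilon_1,\Upsilon_2,\Upsilon_3$ is the only nontrivial bookkeeping step, and you have it right.
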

Equation \eqref{eqn: conditional moment restriction} is called a non-parametric
non-linear instrumental variable problem \citep{chen2012estimation}, where
$Z = (Y, X, P, G)$ are endogenous variables, and $X$ is an instrumental
variable. The IV $X$ in \eqref{eqn: conditional moment restriction} is for estimating all the nuisance parameters $\alpha$, and the IV $G$ in the revenue model \eqref{Model: outcome} is  for identifying $\beta_{p, 1}(X)$ and $\beta_{p, 2}(X)$ in the presence of the unobserved confounder $U$. While both of them are called IVs, they serve different purposes. The non-parametric nonlinear IV problem is much more challenging than the standard non-parametric IV regression, i.e., $W$ is a linear function of $\alpha$, which has been extensively studied in statistics and econometric literature. 
In contrast, the general non-parametric nonlinear IV problem is much less
studied theoretically, where \citet{chen2012estimation,chen2015sieve} only
studied this problem under the linear sieve model. Given a wide range of machine
learning approaches, it is essential to study this estimation problem under
flexible non-parametric function classes such as reproducing kernel Hilbert
spaces (RKHSs), neural networks, high-dimensional linear models, etc. Motivated
by these and also \citet{dikkala2020minimax,bennett2020variational}, we reformulate Equation
\eqref{eqn: conditional moment restriction} into an unconditional moment
restriction via a min-max criterion, based on which we estimate $\alpha_0$ via
solving a zero-sum game. 

Specifically, suppose that we are given  $n$ independent and identically
distributed samples
$\calD_n  = \left\{Z_i = (X_i, G_i, P_i, Y_i)  \right\}_{i = 1}^n$, and a
initial guess $\wt{\alpha}$ of $\alpha_0$. We propose to obtain $\wh{\alpha}_n$ as an estimator of $\alpha_0$ via solving
\begin{equation}\label{eqn: min-max estimation}
\min_{\alpha \in\mH}\sup_{f\in\mF} \quad  \Psi_n(\alpha,f) - \norm{f}_{\wt{\alpha},n}^2- \lambda_n
\norm{f}_{\mF}^2 + \mu_n \norm{\alpha}_{\mH}^2,
\end{equation}
where
$
\Psi_n(\alpha,f) = \frac{1}{n} \sum_{i=1}^n W(Z_i, \alpha)^{\top}f(X_i),
$
\begin{equation*}
\calF = \{f: \mathbb{R}^{d_X} \rightarrow \mathbb{R}^{d_W} \, \given \, f = (f_1, \cdots, f_{d_W})^\top \, \text{with} \, f_i \in \calF_i  \text{
  for
}  1 \leq i \leq d_W=8 \},
\end{equation*}
and $\calH$ and $\calF_i$ are some user defined function spaces. Examples of $\calF_i$ and $\calH$ include RKHSs, random forests, neural networks, and many others. The norm
associated with $\calF$ is defined as
$\norm{f}_{\calF}^2 = \sum_{i = 1}^{d_W} \norm{f_i}_{\calF_i}^2$, where
$\norm{\bullet}_{\calF_i}$ is some functional norm, and $\norm{\bullet}_{\calH}$ is the norm
associated with the space $\calH$, and $\lambda_n,\mu_n>0$ are regularization parameters. In addition,
$\norm{f}_{\wt{\alpha},n}^2 = \frac{1}{n} \sum_{i=1}^n \{f(X_i)^{\top}W(Z_i;\wt{\alpha})\}^2$,
where $W(Z_i;\wt{\alpha})$ is used to balance the weights of conditional moment restrictions.  The validity of using this min-max criterion in finding $\alpha_0$ can be verified by the following lemma.
\begin{lemma}
  Suppose that \(\lambda_n,\mu_n \rightarrow 0 \) as $n\rightarrow\infty$, and for every $\wt \alpha \in \calH$,
  $\EE\left[W(Z; \wt \alpha) \given X\right] \in \calF$. If $\alpha_0\in\mH$, then we have that
  \begin{align*}
      \alpha_0 \in \argmin_{\alpha \in\mH}\sup_{f\in\mF} \quad  \lim_{n \rightarrow \infty} \left\{\Psi_n(\alpha,f) - \norm{f}_{\wt{\alpha},n}^2- \lambda_n
\norm{f}_{\mF}^2 + \mu_n \norm{\alpha}_{\mH}^2\right\},
  \end{align*}
  for any $\wt{\alpha}$ in the neighborhood of $\alpha_0$ that satisfies Assumption \ref{ass: identification and spaces} in Section \ref{sec: theory}.
\end{lemma}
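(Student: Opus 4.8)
The plan is to pass to the population (large-$n$) limit of the min-max criterion in \eqref{eqn: min-max estimation} and to verify directly that $\alpha_0$ attains its global minimum value, which will turn out to be $0$. First I would fix an arbitrary pair $\alpha\in\mH$, $f\in\mF$ and invoke the strong law of large numbers: under the moment conditions collected in Assumption \ref{ass: identification and spaces}, the variables $W(Z;\alpha)^\top f(X)$ and $\{f(X)^\top W(Z;\wt{\alpha})\}^2$ are integrable, so almost surely
\[
\Psi_n(\alpha,f)\longrightarrow \EE\!\left[W(Z;\alpha)^\top f(X)\right],\qquad \norm{f}_{\wt{\alpha},n}^2\longrightarrow \EE\!\left[\{f(X)^\top W(Z;\wt{\alpha})\}^2\right],
\]
while the penalties $\lambda_n\norm{f}_{\mF}^2$ and $\mu_n\norm{\alpha}_{\mH}^2$ vanish because $\lambda_n,\mu_n\to 0$ and, for a fixed feasible pair, $\norm{f}_{\mF}$ and $\norm{\alpha}_{\mH}$ are finite (the hypothesis $\alpha_0\in\mH$ being exactly what makes $\norm{\alpha_0}_{\mH}<\infty$ and $\alpha_0$ a feasible minimizer). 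Hence the inner limit in the lemma equals
\[
J(\alpha,f):=\EE\!\left[W(Z;\alpha)^\top f(X)\right]-\EE\!\left[\{f(X)^\top W(Z;\wt{\alpha})\}^2\right],
\]
and it remains to show $\alpha_0\in\argmin_{\alpha\in\mH}J(\alpha)$, where $J(\alpha):=\sup_{f\in\mF}J(\alpha,f)$.

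Next I would record the trivial lower bound: since each coordinate class $\mF_i$ (an RKHS, a neural-network class, a high-dimensional linear model, a random forest, and so on) contains the zero function, $0\in\mF$, and taking $f\equiv 0$ gives $J(\alpha,0)=0$; therefore $J(\alpha)\ge 0$ for \emph{every} $\alpha\in\mH$. Then I would evaluate the criterion at $\alpha_0$ via Lemma \ref{lm: conditional moment restriction}, which gives $\EE[W(Z;\alpha_0)\given X]=0$ almost surely. By the tower property, for every $f\in\mF$,
\[
\EE\!\left[W(Z;\alpha_0)^\top f(X)\right]=\EE\!\left[\EE[W(Z;\alpha_0)\given X]^\top f(X)\right]=0,
\]
so $J(\alpha_0,f)=-\EE[\{f(X)^\top W(Z;\wt{\alpha})\}^2]\le 0$ for all $f$, with equality at $f\equiv 0$; hence $J(\alpha_0)=0$. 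Combining, $J(\alpha_0)=0\le J(\alpha)$ for all $\alpha\in\mH$, i.e.\ $\alpha_0\in\argmin_{\alpha\in\mH}J(\alpha)$, which is the assertion; note this holds for any weighting $\wt{\alpha}$ in a neighborhood of $\alpha_0$, since $\wt{\alpha}$ enters only through the nonnegative term $\EE[\{f(X)^\top W(Z;\wt{\alpha})\}^2]$, which vanishes at $f\equiv 0$ regardless.

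The hard part, such as it is, lies not in this argument but in Lemma \ref{lm: conditional moment restriction}, which I may take as given. The only genuine care needed here is the interchange of $\lim_{n\to\infty}$ with the rest of the expression; but since the lemma already places the limit \emph{inside} both the supremum over $f$ and the argmin over $\alpha$, pointwise almost-sure convergence is enough, and that is precisely what the law of large numbers delivers under Assumption \ref{ass: identification and spaces} — no uniformity over $f$ or $\alpha$ is needed. I would also point out that the stated hypothesis ``$\EE[W(Z;\wt{\alpha})\given X]\in\mF$ for all $\wt{\alpha}\in\mH$'' is not actually used for this direction; rather, it is the ingredient for the converse statement that every minimizer of $J$ satisfies the conditional moment restriction \eqref{eqn: conditional moment restriction}, so the claim here reduces essentially to Lemma \ref{lm: conditional moment restriction} together with $0\in\mF$.
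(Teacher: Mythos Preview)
Your argument is correct and is precisely the intended one. The paper does not supply a separate proof of this lemma in the appendix; it is treated as a direct consequence of Lemma \ref{lm: conditional moment restriction}, and your write-up makes the implicit steps explicit: the law of large numbers for the pointwise limit, the lower bound $J(\alpha)\ge 0$ from $f\equiv 0\in\mF$, and the upper bound $J(\alpha_0)\le 0$ from the tower property applied to $\EE[W(Z;\alpha_0)\given X]=0$. Your observation that the hypothesis $\EE[W(Z;\wt\alpha)\given X]\in\mF$ is superfluous for this direction, and is instead what one would use to argue the converse (that any population minimizer must satisfy \eqref{eqn: conditional moment restriction}), is also on point. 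The only mildly informal step is the claim $0\in\mF$; in the paper's theoretical setup (Section \ref{sec: theory}) the classes $\mF_B$ are balls in normed spaces and the analysis uses star-shapedness, so $0\in\mF$ is implicit, but it is worth stating it as an assumption rather than a property of ``neural networks'' or ``random forests'' per se.
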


After obtaining $\widehat \alpha_n$, we compute an estimator of $\pi^\ast$ by  policy learning  that
\begin{equation}
  \label{eqn: model-based optimal policy}
	\wh{\pi} = \min\{\max\{p_1, - \wh{\beta}_{p,1}/[2\wh{\beta}_{p,2}]\}, p_2\}.
\end{equation}

Notice that $\wt{\alpha}$ may not serve as an accurate initial guess of $\alpha_0$,
which may lead to an unsatisfactory estimator $\wh{\alpha}_n$. However, we can update it iteratively (see Algorithm \ref{alg: min-max}).

\begin{algorithm}[!htpb]
  \caption{Estimation of $\wh{\pi}$ by solving a zero-sum game.}
  \label{alg: min-max}
  \textbf{Input:} Batch data
  $\mD_n = \left\{ Z_i=(Y_i,X_i,G_i,P_i) \right\}_{i=1}^n$. Initial
  $\wh{\alpha}^{(0)}$. Tuning
  parameters $\lambda_n,\mu_n>0$. Price range $[p_1,p_2]$. Maximum iteration $K$. \\
  \textbf{For} $k\in \left\{ 0,\dots, K-1 \right\}$:\\
  \Indp
  $\wh{f}^{(k)} \in \argmax_{f\in\mF} \left\{ \Psi_n(\wh{\alpha}^{(k)},f) - \norm{f}_{\wh{\alpha}^{(k)},n}^2 - \lambda_n\nmF{f}^2 \right\}$;\\
  $\wh{\alpha}^{(k+1)}\in\argmin_{\alpha\in\mH} \left\{ \Psi_n(\alpha, \wh{f}^{(k)}) + \mu_n\nmH{\alpha}^2 \right\}$;\\
  \Indm
  \textbf{End For}\\
  \textbf{Output:} Pricing policy
  $\wh{\pi} = \min\{\max\{p_1, - \wh{\beta}_{p,1}^{(K)}/[2\wh{\beta}_{p,2}^{(K)}]\}, p_2\}$.
\end{algorithm}

\section{Theoretical Analysis}
\label{sec: theory}

In this section, we evaluate the performance of the learned policy $\wh{\pi}$ by
\eqref{eqn: model-based optimal policy}. We show that with a given function space
$\mH$ that contains $\alpha_0$ and a proper adversary space $\mF$ containing
a function that can well approximate the projected generalized residual
functions $m(X;\alpha)\triangleq\EE[W(Z;\alpha)\given X]$, the min-max estimator $\wh{\alpha}_n$
obtained by \eqref{eqn: min-max estimation}
is consistent to $\alpha_0$ in terms of $\mL^2$-error. Based on the consistency of $\wh{\alpha}_n$, we further obtain an asymptotic rate of $\wh{\alpha}_n$ measured by
a pseudometric $\norm{\bullet}_{ps,\alpha_0}$ defined in \eqref{eq: pseudometric} below, in terms of the critical radii of spaces $\mH$ and $\mF$.
Finally, we show that an asymptotic bound for regret of the learned policy $\wh{\pi}$ by
imposing a link condition for the pseudometric and $\mL^2$ norm.

We start with some preliminary notations and definitions.
For a given normed functional space $\mG$ with norm $\norm{\bullet}_{\mG}$, let
$\mG_M = \left\{ g\in\mG: \norm{g}_{\mG}^2 \leq M \right\}$. For a vector valued
functional class $\mG = \left\{ \mX \rightarrow \RR^d \right\}$, let $\mG|_k$ be the
$k$-th coordinate projection of $\mG$. Define the
population norm  as $\norm{g}_{p,q} = \norm{g}_{\mL^p(\ell^q,P_X)} = [\EE\norm{g(X)}_{\ell^q}^p]^{1/p}$ and the empirical norm as
$\norm{g}_{n,p,q} = \norm{g}_{\mL^p(\ell^q, X_{1:n})} = [\frac{1}{n}\sum_{i=1}^n  \norm{g(X_i)}_{\ell^q}^p]^{1/p}$.

Our main results rely on some quantities from empirical process theory \citep{wainwright2019high}.
Let $\mF$ be a class of uniformly bounded real-valued functions defined on a random
vector $X$. The \textit{local Rademacher complexity} of the function class $\mF$
is defined as
\begin{equation*}
\mR_n(\delta,\mF) \triangleq \EE_{\left\{ \epsilon_i \right\}_{i=1}^n, \left\{ X_i \right\}_{i=1}^n} \sup_{f\in\mF, \norm{f}_{n,2}\leq \delta} \left| \frac{1}{n} \sum_{i=1}^n \epsilon_i f(X_i) \right|,
\end{equation*}
where $\left\{ \epsilon_i \right\}_{i=1}^n$ are i.i.d. Rademacher random variables
taking values in $\left\{ -1,1 \right\}$ with equiprobability, and
$\left\{ X_i \right\}_{i=1}^n$ are i.i.d. samples of $X$.
The \textit{critical radius} $\delta_n$ of
$\mG = \left\{ g:\mX \rightarrow \RR^d, \norm{g}_{\infty,2}\leq 1 \right\}$ is the largest
possible $\delta$ such that
$
\max_{k=1,\dots,d}\mR_n(\delta, \text{star}(\mG|_k)) \leq \delta^2,
$
where the \textit{star convex hull} of class $\mF$ is defined as $\text{star}(\mF) \triangleq \left\{ rf: f\in\mF, r\in[0,1] \right\}$.


Without loss of generality, suppose that there exist constants $A,B$ such that
$\norm{\alpha}_{\infty,2}\leq 1$ and $W(\bullet,\alpha)\in[-1,1]^{d_W}$ for all $\alpha\in\mH_A$, and
$\norm{f}_{\infty,2}\leq 1$ for all $f\in\mF_B$.
Furthermore, for any $\alpha\in\mH$, let $\Psi(\alpha,f) \triangleq \EE [W(Z,\alpha)^{\top} f(X)]$ and
$\norm{f}_{\alpha}^2 \triangleq \EE [ \{f(X)^{\top} W(Z;\alpha)\}^2 ]$,
which are populational analogues of $\Psi_n(\alpha,f)$ and $\norm{f}_{n,\alpha}$, respectively.
By letting $\Sigma_{\alpha}(X) \triangleq \EE [ W(Z;\alpha) W(Z;\alpha)^{\top} \given X ]$, we
have that $\norm{f}_{\alpha}^2 = \EE \left[ f(X)^{\top}\Sigma_{\alpha}(X) f(X) \right]$. In addition, let $\Phi(\alpha) = \mathbb E[m(X;\alpha)^{\top} \{\Sigma_{\alpha_0}(X)\}^{-1}m(X;\alpha)]$.

\subsection{Preliminary Assumptions}
\label{subsec: preliminary assumptions}
In this section, we introduces some preliminary assumptions to establish the
convergence rates of the adversarial min-max estimator and the corresponding regret bounds.
The assumptions are two fold: First, Assumptions \ref{ass: identification and spaces} -- \ref{ass:
  parameters} form the foundation for the consistency of the estimator, which is
summarized in Lemma \ref{lem: consistency}.
Second, with the definition of a pseudometric and a link condition,
Assumption \ref{ass: local curvature} facilitates the convergence rates and regret bounds.

We first impose some basic assumptions on identification, function
spaces, sample criteria, and penalty parameters.

\begin{assumption} (Identifiability and space conditions) The following conditions hold.
\label{ass: identification and spaces}
\begin{enumerate}[label=(\alph*)]
\item \label{ass: identifiability}
The true $\alpha_0\in\mH$ and $m(X;\alpha_0) \triangleq \EE \left[ W(Z;\alpha_0) \given X \right] = 0$. For any
$\alpha \in \mH$ with $\EE\left[ W(Z;\alpha) \given X \right] = 0$, we
have that $\norm{\alpha - \alpha_0}_{2,2} =0$.
\item \label{ass: f space}
For any $\alpha\in\mH$, let
$
f_{\alpha} \in \argmin_{f\in\mF: \nmF{f} \leq L \nmH{\alpha-\alpha_0}} \norm{f(\bullet)-m(\bullet;\alpha)}_{2,2}.
$
There exists $0<\eta_L = \smallO(1)$ such that
$\norm{f_{\alpha}(\bullet)-m(\bullet;\alpha)}_{2,2} \leq \eta_L$ for all $\alpha\in\mH_A$.
\item \label{ass: nondegenerate}
  For all $\alpha\in \left\{ \alpha\in\mH: \norm{\alpha-\alpha_0}_{2,2}, \norm{\alpha-\alpha_0}_{\mH}\leq \eta_{\Sigma}\right\}$,
  the smallest singular value $\sigma_{\min}\left( \Sigma_{\alpha}(X) \right) \geq c_{\eta_{\Sigma}} >0$
  almost surely for all $X$. We assume the initial $\wt{\alpha}$ belongs to this set.
\end{enumerate}
\end{assumption}

\begin{assumption} (Sample criterion)
  \label{ass: sample criterion}
  $\Phihat_n(\wh{\alpha}_n) \leq \inf_{\alpha\in\mH} \Phihat_n (\alpha) + \bigOp(\eta_n)$, where $0<\eta_n=\smallO(1)$,
  and
  $\Phihat_n(\alpha) =\sup_{f\in\mF} \Psi_n(\alpha,f) - \norm{f}_{\wt{\alpha},n}^2 - \lambda_n\nmF{f}^2 + \mu_n\nmH{\alpha}^2$.
\end{assumption}

\begin{assumption} (Parameters)
  \label{ass: parameters}
  Let $\delta_n = \bar{\delta}_n + \bigO(\sqrt{\log (1-\zeta)/n})$, where $\bar{\delta}_n$ upper
  bounds the critical radii of $\mF_B$, and
  $\left\{ W(\cdot, \alpha)f(*): h\in\mH_A, f\in\mF_B, \nmF{f}\leq L \nmH{\alpha - \alpha_0} \right\}$. The
  parameters $\eta_L$, $\lambda_n,\mu_n$ and
  $\eta_n$ in Assumptions \ref{ass: identification and spaces} and \ref{ass: sample criterion}
  satisfy that $\eta_L\lesssim\eta_n\asymp\lambda_n\asymp\mu_n\asymp\delta_n^2$, as $n\rightarrow\infty$.
\end{assumption}

Assumption \ref{ass: identification and spaces}\ref{ass: identifiability}
states that the  true $\alpha_0$ can be captured by the user-defined space
$\mH$, and the solution of conditional moment restrictions is unique in
$(\mH,\norm{\bullet}_{2,2})$. The uniqueness is typically required in the literature
on conditional moment problems \citep[e.g.,][]{ai2003efficient,chen2012estimation}.
With Assumption \ref{ass: identification and spaces}\ref{ass: identifiability}
on the space $\mF$, we are able to find a good adversary function $f$ for any
$\alpha$ in the neighborhood of $\alpha_0$. Assumption \ref{ass: identification and spaces}\ref{ass: f space}
guarantees that the change of $\alpha$ measured by $W(Z;\alpha)$ can be continuously projected in $\mF$ space.
Assumption \ref{ass: identification and spaces}\ref{ass: nondegenerate} is
also commonly imposed in the literature such that there is no degenerate
conditional moment restrictions.

Assumption \ref{ass: sample criterion} is a common assumption in the
M-estimation theory \citep{van2000asymptotic} and holds by implementing the optimization algorithm correctly. Assumption \ref{ass: parameters}
imposes the conditions on the asymptotic rate of tuning parameters according to the
critical radii of user-defined function classes, which are typically required in
penalized M-estimation methods. The following lemma on the consistency of the min-max estimator follows directly
from Assumptions \ref{ass: identification and spaces} -- \ref{ass: parameters}.
\begin{lemma}(Consistency)
\label{lem: consistency}
Suppose that Assumptions \ref{ass: identification and spaces} -- \ref{ass: parameters} hold. Then $\wh{\alpha}_n$ obtained by \eqref{eqn: min-max estimation} is a consistent
estimator of $\alpha_0$ in norm $\norm{\bullet}_{2,2}$, i.e.,
$\norm{\wh{\alpha}_n-\alpha_0}_{2,2} = \smallOp(1)$ as $n\rightarrow\infty$. Furthermore,
$\norm{\wh{\alpha}_n}_{\mH} = \bigOp(1)$.
\end{lemma}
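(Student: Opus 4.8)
The plan is to follow the standard two-step recipe for penalized M-estimation / min-max estimators, adapted to the nonlinear conditional-moment setting as in \citet{dikkala2020minimax} and \citet{bennett2020variational}. First I would establish a \emph{localization} argument giving a basic inequality. By Assumption \ref{ass: sample criterion}, $\widehat\alpha_n$ nearly minimizes the penalized sample objective $\Phihat_n(\cdot)$, so $\Phihat_n(\widehat\alpha_n) \le \Phihat_n(\alpha_0) + \bigOp(\eta_n)$. Since $m(X;\alpha_0)=0$ by Assumption \ref{ass: identification and spaces}\ref{ass: identifiability}, the supremum defining $\Phihat_n(\alpha_0)$ is small — of order $\delta_n^2$ up to the penalty $\mu_n\nmH{\alpha_0}^2 = \bigO(\mu_n)$ — using a uniform concentration bound for $\sup_{f\in\mF_B}\{\Psi_n(\alpha_0,f)-\norm{f}_{\wt\alpha,n}^2-\lambda_n\nmF{f}^2\}$ controlled by the critical radius $\bar\delta_n$ of $\mF_B$. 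This yields $\Phihat_n(\widehat\alpha_n) = \bigOp(\delta_n^2)$, and in particular, because $\mu_n\nmH{\widehat\alpha_n}^2 \le \Phihat_n(\widehat\alpha_n) + (\text{a nonnegative quantity absorbed into the sup})$... more precisely, by taking $f=0$ in the sup one gets $\Phihat_n(\widehat\alpha_n) \ge \mu_n\nmH{\widehat\alpha_n}^2 - \bigOp(\delta_n^2)$, so $\nmH{\widehat\alpha_n}^2 = \bigOp(\delta_n^2/\mu_n) = \bigOp(1)$ by Assumption \ref{ass: parameters} ($\mu_n\asymp\delta_n^2$). This already delivers the second claim $\nmH{\widehat\alpha_n}=\bigOp(1)$.

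Next I would convert the smallness of the penalized criterion into smallness of a population discrepancy. The key is to choose the adversary $f = f_{\widehat\alpha_n}$ guaranteed by Assumption \ref{ass: identification and spaces}\ref{ass: f space}, which approximates $m(\bullet;\widehat\alpha_n)$ in $\norm{\bullet}_{2,2}$ up to $\eta_L$ and has $\nmF{f_{\widehat\alpha_n}} \le L\nmH{\widehat\alpha_n - \alpha_0} = \bigOp(1)$. Plugging this into the sup and using a uniform (localized) empirical-process bound over the class $\{W(\cdot,\alpha)f(\ast): \alpha\in\mH_A, f\in\mF_B, \nmF{f}\le L\nmH{\alpha-\alpha_0}\}$ — whose critical radius is also bounded by $\bar\delta_n$ per Assumption \ref{ass: parameters} — one shows $\Psi_n(\widehat\alpha_n,f_{\widehat\alpha_n})$ is close to its population version $\Psi(\widehat\alpha_n,f_{\widehat\alpha_n}) = \EE[W(Z;\widehat\alpha_n)^\top f_{\widehat\alpha_n}(X)] = \EE[m(X;\widehat\alpha_n)^\top f_{\widehat\alpha_n}(X)]$, and similarly $\norm{f_{\widehat\alpha_n}}_{\wt\alpha,n}^2$ concentrates around $\norm{f_{\widehat\alpha_n}}_{\wt\alpha}^2 = \EE[f_{\widehat\alpha_n}(X)^\top\Sigma_{\wt\alpha}(X)f_{\widehat\alpha_n}(X)]$. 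Combining with the approximation bound $\norm{f_{\widehat\alpha_n}-m(\bullet;\widehat\alpha_n)}_{2,2}\le\eta_L$ and a Cauchy–Schwarz step, the lower bound on $\Phihat_n(\widehat\alpha_n)$ becomes, up to $\bigOp(\delta_n^2 + \eta_L + \lambda_n\nmF{f_{\widehat\alpha_n}}^2)$ terms, something of the form $\norm{m(\bullet;\widehat\alpha_n)}_{2,2}^2 - \norm{m(\bullet;\widehat\alpha_n)}_{\wt\alpha}^2$ — i.e.\ the ``optimistically biased'' quadratic term. Using the nondegeneracy Assumption \ref{ass: identification and spaces}\ref{ass: nondegenerate} ($\sigma_{\min}(\Sigma_{\wt\alpha})\ge c_{\eta_\Sigma}$ for $\wt\alpha$ in the specified neighborhood, which also needs $\wt\alpha$ close enough — handled by the hypothesis that $\wt\alpha$ lies in that set) together with $W\in[-1,1]^{d_W}$ to bound $\Sigma$ from above, I get a two-sided comparison $c_1\norm{m(\bullet;\widehat\alpha_n)}_{2,2}^2 \le \norm{f_{\widehat\alpha_n}}_{\wt\alpha}^2 \le c_2\norm{m(\bullet;\widehat\alpha_n)}_{2,2}^2$ modulo $\eta_L$, and the whole chain collapses to $\norm{m(\bullet;\widehat\alpha_n)}_{2,2}^2 = \bigOp(\delta_n^2 + \eta_L + \eta_n) = \smallOp(1)$ by Assumption \ref{ass: parameters}.

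Finally I would pass from $\norm{m(\bullet;\widehat\alpha_n)}_{2,2}\to 0$ to $\norm{\widehat\alpha_n-\alpha_0}_{2,2}\to 0$. Because $\nmH{\widehat\alpha_n}=\bigOp(1)$, with probability tending to one $\widehat\alpha_n$ lies in the bounded ball $\mH_A$, which I would take to be (sequentially) compact in a topology making $\alpha\mapsto\norm{m(\bullet;\alpha)}_{2,2}$ lower semicontinuous; then any subsequential limit $\alpha^\ast$ satisfies $\norm{m(\bullet;\alpha^\ast)}_{2,2}=0$, hence by the uniqueness part of Assumption \ref{ass: identification and spaces}\ref{ass: identifiability}, $\norm{\alpha^\ast-\alpha_0}_{2,2}=0$; since every subsequence has a further subsequence converging to $\alpha_0$ in $\norm{\bullet}_{2,2}$, the whole sequence does, giving $\norm{\widehat\alpha_n-\alpha_0}_{2,2}=\smallOp(1)$. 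The main obstacle is the middle step: carefully bookkeeping the localized empirical-process deviations so that all cross terms and the negative self-normalizing term $-\norm{f}_{\wt\alpha,n}^2$ combine to leave a genuinely positive multiple of $\norm{m(\bullet;\widehat\alpha_n)}_{2,2}^2$ rather than being swamped — this is exactly where Assumptions \ref{ass: identification and spaces}\ref{ass: f space}, \ref{ass: identification and spaces}\ref{ass: nondegenerate}, and the rate matching in Assumption \ref{ass: parameters} are all used in concert, and where the nonlinearity of $W$ in $\alpha$ (so that $m(\cdot;\alpha)$ is not linear in $\alpha-\alpha_0$) prevents a direct appeal to the linear-sieve arguments of \citet{chen2012estimation}.
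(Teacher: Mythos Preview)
Your overall architecture matches the paper's: first bound $\nmH{\wh\alpha_n}$ via the basic inequality (taking $f=0$ in the sup and comparing to $\Phihat_n(\alpha_0)$ is exactly the paper's ``bounded penalty'' lemma), then lower-bound the sample criterion at $\wh\alpha_n$ by a population discrepancy, then invoke identifiability. The first step is fine.

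The genuine gap is precisely where you flag the ``main obstacle.'' Plugging in the unscaled test function $f=f_{\wh\alpha_n}\approx m(\bullet;\wh\alpha_n)$ gives a lower bound of the form $\Psi(\wh\alpha_n,f_{\wh\alpha_n})-\norm{f_{\wh\alpha_n}}_{\wt\alpha}^2 \approx \norm{m(\bullet;\wh\alpha_n)}_{2,2}^2 - \EE[m^\top\Sigma_{\wt\alpha}\,m]$. The upper bound $W\in[-1,1]^{d_W}$ only yields $\Sigma_{\wt\alpha}\preceq d_W I$ with $d_W=8$, so this difference is in general \emph{negative}; your ``two-sided comparison'' does not make the chain collapse to a positive multiple of $\norm{m}_{2,2}^2$. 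The paper resolves this by exploiting star-convexity of $\mF$ and evaluating the supremum at a \emph{scaled} test function $rf_\alpha$, splitting into two cases. When $\norm{f_\alpha}_{2,2}\ge\delta_n$ it takes $r=\delta_n/(2\norm{f_\alpha}_{2,2})\in(0,1/2]$, so the quadratic penalty $r^2\norm{f_\alpha}_{\wt\alpha}^2=\bigO(\delta_n^2)$ is harmless and the linear part gives $r\Psi(\alpha,f_\alpha)\gtrsim \tfrac{\delta_n}{2}\sqrt{\Phi(\alpha)/c_{\eta_\Sigma}}$; when $\norm{f_\alpha}_{2,2}<\delta_n$ it takes $r=1$ and the penalty is already $\bigO(\delta_n^2)$, giving $\Phi(\alpha)/c_{\eta_\Sigma}-\bigO(\delta_n^2)$. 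The combined lower bound is $\min\{\tfrac{\delta_n}{2\sqrt{c_{\eta_\Sigma}}}\sqrt{\Phi(\alpha)},\,\Phi(\alpha)/c_{\eta_\Sigma}\}-\bigOp(\delta_n^2)$, and this scaling trick is the missing ingredient in your step~2.

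For the final step the paper does not use your compactness/subsequence route; it sets $\varphi_0(\epsilon)=\inf_{\alpha\in\mH_{M_0}:\norm{\alpha-\alpha_0}_{2,2}>\epsilon}\sqrt{\Phi(\alpha)}$ and argues directly that $\prob[\min\{\delta_n\varphi_0(\epsilon),\varphi_0(\epsilon)^2\}\le\bigOp(\delta_n^2)]\to0$ once $\varphi_0(\epsilon)>0$. Your explicit compactness hypothesis and the paper's implicit well-separatedness are the same extra structure in different clothing; neither is literally contained in Assumption~\ref{ass: identification and spaces}\ref{ass: identifiability} as stated, so your making it explicit is reasonable.
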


Given the consistency results in Lemma \ref{lem: consistency}, to
obtain a local convergence rate, we can restrict the space $\mH$ to a neighborhood of $\alpha_0$
defined as
\begin{equation*}
\mH_{\alpha_0,M_0,\epsilon}\triangleq \left\{ \alpha\in\mH: \norm{\alpha-\alpha_0}_{2,2} \leq \epsilon, \norm{\alpha}_{2,2} \leq M_0, \norm{\alpha}_{\mH}^2\leq M_0 \right\}.
\end{equation*}

In this restricted space, following \citet{chen2012estimation}, we define the pseudometric $\norm{\alpha_1-\alpha_2}_{ps,\alpha}$
for any $\alpha_1,\alpha_2\in\mH_{\alpha_0,M_0,\epsilon}$ as
\begin{equation}
\label{eq: pseudometric}
\norm{\alpha_1-\alpha_2}_{ps,\alpha} \triangleq \sqrt{\EE \left[ \left( \frac{dm(X; \alpha_0)}{d\alpha}[\alpha_1-\alpha_2] \right) ^{\top}\Sigma_{\alpha}(X)^{-1} \left(\frac{dm(X; \alpha_0)}{d\alpha}[\alpha_1-\alpha_2] \right) \right]},
\end{equation}
where the pathwise derivative in the direction $\alpha-\alpha_0$ evaluated at $\alpha_0$ is defined as
\begin{equation*}
\frac{dm(X;\alpha_0)}{d\alpha}[\alpha - \alpha_0] \triangleq \frac{d\EE [W(Z;(1-r)\alpha_0 + r\alpha) \given X]}{dr}\Given_{r=0}, \text{~and}
\end{equation*}
\begin{equation*}
\frac{dm(X; \alpha_0)}{d\alpha}[\alpha_1-\alpha_2] \triangleq \frac{dm(X; \alpha_0)}{d\alpha}[\alpha_1-\alpha_0] - \frac{dm(X; \alpha_0)}{d\alpha}[\alpha_2-\alpha_0].
\end{equation*}

We further introduce the following assumption for the restricted space $\mH_{\alpha_0,M_0,\epsilon}$.
\begin{assumption}(Local curvature)
  \label{ass: local curvature}
  The following conditions hold.
\begin{enumerate}[label=(\alph*)]
  \item \label{ass: pathwise differentiable}
    The local space $\mH_{\alpha_0,M_0,\epsilon}$ is convex, and $m(X;\alpha)$ is continuously pathwise
    differentiable with respect to $\alpha\in\mH_{\alpha_0,M_0,\epsilon}$.
  \item \label{ass: ps lower bound}
  There exists a finite constant $c_{curv}>0$ such that $\norm{\alpha-\alpha_0}_{ps,\alpha_0} \leq c_{curv} \sqrt{\Phi(\alpha)}$.
\end{enumerate}
\end{assumption}
Assumption \ref{ass: local curvature}\ref{ass: pathwise differentiable} and
Assumption \ref{ass: identification and spaces}\ref{ass: nondegenerate}
ensure that the pseudometric is well defined in the neighborhood of $\alpha_0$, and
Assumption \ref{ass: local curvature}\ref{ass: ps lower bound} restricts local
curvature of $\Phi(\alpha)$ at $\alpha_0$, which enables us to attain a fast convergence rate
in the sense of $\norm{\bullet}_{ps,\alpha_0}$.


\subsection{Convergence Rates and Regret Bounds}
\label{subsec: regret bounds}
In this subsection, we establish general convergence rate of the min-max estimator
and the regret bound of learned policy $\wh{\pi}$, which depend on sample size, complexities of
spaces related to $\mF$, $\mH$, and the local modulus of continuity. Furthermore,
concrete examples and results will be given in Subsection \ref{subsec: regret bounds for RKHSs}.

To derive the convergence rate in $\norm{\bullet}_{2,2}$, we
introduce the local modulus of continuity $\omega(\delta,
\mH_M)$ at $\alpha_0$, which is defined as
\begin{equation}
  \label{eq: modulus of continuity}
  \omega(\delta, \mH_M)\triangleq \sup_{\alpha\in\mH_M: \norm{\alpha-\alpha_0}_{ps, \alpha_0}\leq \delta} \norm{\alpha - \alpha_0}_{2,2}.
\end{equation}
The local modulus of continuity $\omega(\delta, \mH_{\alpha_0,M_0,\epsilon})$ enables us to link the local errors
quantified by $\norm{\wh{\alpha}_n - \alpha_0}_{ps,\alpha_0}$ and $\norm{\wh{\alpha}_n - \alpha_0}_{2,2}$.
For a detailed discussion, see \citet{chen2012estimation}. We now present a
general theorem for the convergence rate in $\norm{\bullet}_{ps,\alpha_0}$ and $\norm{\bullet}_{2,2}$.

\begin{theorem}(Convergence rate)
  \label{thm: convergence rate}
Suppose that Assumptions \ref{ass: identification and spaces} -- \ref{ass: local
curvature} hold, we have that
\begin{align*}
\norm{\wh{\alpha}_n - \alpha_0}_{ps,\alpha_0} = \bigOp(\delta_n), \text{~and}\
\norm{\wh{\alpha}_n-\alpha_0}_{2,2} = \bigOp(\omega(\delta_n, \mH_{\alpha_0,M_0,\epsilon})).
\end{align*}
\end{theorem}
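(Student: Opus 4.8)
The plan is to follow the localization-and-offset strategy for penalized adversarial min-max estimators, adapting the arguments of \citet{dikkala2020minimax} and \citet{chen2012estimation} to our generalized (nonlinear) residual $W(Z;\alpha)$, in three stages: reduce to a neighborhood of $\alpha_0$; turn the sample optimality of $\wh\alpha_n$ into $\Phi(\wh\alpha_n)=\bigOp(\delta_n^2)$; and then translate this first into the $\norm{\bullet}_{ps,\alpha_0}$ rate and then into the $\norm{\bullet}_{2,2}$ rate. For the first stage, Lemma~\ref{lem: consistency} already gives $\norm{\wh\alpha_n-\alpha_0}_{2,2}=\smallOp(1)$ and $\norm{\wh\alpha_n}_{\mH}=\bigOp(1)$, so with probability tending to one $\wh\alpha_n\in\mH_{\alpha_0,M_0,\epsilon}$ for a suitable $M_0$ and any fixed small $\epsilon$; I work on this event henceforth, and also use that the fixed initializer $\wt\alpha$ lies in the nondegeneracy neighborhood of Assumption~\ref{ass: identification and spaces}\ref{ass: nondegenerate}, so that $\Sigma_{\wt\alpha}(X)$ is uniformly comparable to $\Sigma_{\alpha_0}(X)$ with smallest eigenvalue bounded below.

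The core is to relate $\Phihat_n$ to the population criterion $\tfrac14\Phi$. In population, with $\mF$ rich enough (Assumption~\ref{ass: identification and spaces}\ref{ass: f space}), the inner maximizer of $\Psi(\alpha,f)-\norm{f}_{\alpha_0}^2$ is $f^\ast_\alpha=\tfrac12\Sigma_{\alpha_0}^{-1}m(\cdot;\alpha)$, with value $\tfrac14\Phi(\alpha)$ and $\mF$-norm of order $\norm{m(\cdot;\alpha)}_{2,2}\lesssim\norm{\alpha-\alpha_0}_{\mH}$, so $f^\ast_\alpha$ is, up to the $\eta_L$-slack of Assumption~\ref{ass: identification and spaces}\ref{ass: f space}, feasible for the constrained inner problem of the composite class in Assumption~\ref{ass: parameters}. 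I would then bound the discrepancy between $\Phihat_n(\alpha)$ and $\tfrac14\Phi(\alpha)$ from four sources, uniformly over $\alpha\in\mH_A$: (a) $\sup|\Psi_n(\alpha,f)-\Psi(\alpha,f)|$ and $\sup|\norm{f}_{\wt\alpha,n}^2-\norm{f}_{\wt\alpha}^2|$, which the critical-radius / localized-Rademacher bounds of \citet{wainwright2019high}, applied to the star hulls of the coordinate classes and of $\{W(\cdot,\alpha)f(\cdot)\}$, control by $\bigOp(\delta_n(\norm{f}_{n,2}+\delta_n))$; (b) $\norm{f}_{\wt\alpha}^2-\norm{f}_{\alpha_0}^2$, negligible because $\wt\alpha$ is near $\alpha_0$ and $\alpha\mapsto\Sigma_\alpha$ is continuous; (c) the penalties $\lambda_n\norm{f}_{\mF}^2=\bigO(\lambda_n)$ and $\mu_n\norm{\alpha}_{\mH}^2=\bigO(\mu_n)$ on the localized set; and (d) the approximation slack $\eta_L$. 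For the upper side I use the offset trick: since $\norm{f}_{\wt\alpha,n}^2\gtrsim\norm{f}_{n,2}^2$ by nondegeneracy, $\Psi_n(\alpha_0,f)-\norm{f}_{\wt\alpha,n}^2\le\bigOp(\delta_n\norm{f}_{n,2}+\delta_n^2)-c\norm{f}_{n,2}^2=\bigOp(\delta_n^2)$ by AM--GM, and since $m(\cdot;\alpha_0)=0$ this gives $\Phihat_n(\alpha_0)=\bigOp(\delta_n^2+\mu_n)$; for the lower side, plugging a near-optimal feasible adversary $\approx f^\ast_{\wh\alpha_n}$ into the sup gives $\Phihat_n(\wh\alpha_n)\ge c\,\Phi(\wh\alpha_n)-\bigOp(\delta_n^2+\lambda_n+\mu_n+\eta_L)$. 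Combining with the sample criterion $\Phihat_n(\wh\alpha_n)\le\Phihat_n(\alpha_0)+\bigOp(\eta_n)$ (Assumption~\ref{ass: sample criterion}) and the rate balancing $\eta_L\lesssim\eta_n\asymp\lambda_n\asymp\mu_n\asymp\delta_n^2$ (Assumption~\ref{ass: parameters}) yields $\Phi(\wh\alpha_n)=\bigOp(\delta_n^2)$.

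To finish, the link condition $\norm{\alpha-\alpha_0}_{ps,\alpha_0}\le c_{curv}\sqrt{\Phi(\alpha)}$ of Assumption~\ref{ass: local curvature}\ref{ass: ps lower bound}, well posed on the convex, pathwise-differentiable local space by Assumption~\ref{ass: local curvature}\ref{ass: pathwise differentiable} together with Assumption~\ref{ass: identification and spaces}\ref{ass: nondegenerate}, immediately upgrades this to $\norm{\wh\alpha_n-\alpha_0}_{ps,\alpha_0}=\bigOp(\delta_n)$. Finally, since $\wh\alpha_n\in\mH_{\alpha_0,M_0,\epsilon}$ with probability tending to one and $\delta\mapsto\omega(\delta,\mH_{\alpha_0,M_0,\epsilon})$ is nondecreasing, the definition \eqref{eq: modulus of continuity} gives $\norm{\wh\alpha_n-\alpha_0}_{2,2}\le\omega(\norm{\wh\alpha_n-\alpha_0}_{ps,\alpha_0},\mH_{\alpha_0,M_0,\epsilon})=\bigOp(\omega(\delta_n,\mH_{\alpha_0,M_0,\epsilon}))$, which is the second assertion.

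I expect the bottleneck to be the uniform control in the second stage: one must simultaneously handle the bilinear coupling $W(Z;\alpha)^\top f(X)$ of the two complexities, which is precisely why Assumption~\ref{ass: parameters} bounds the critical radius of the composite class $\{W(\cdot,\alpha)f(\cdot)\}$; show that the empirical inner value is genuinely sandwiched between constant multiples of $\Phi(\alpha)$, the lower side hinging on $f^\ast_\alpha$ being bounded in $\mF$-norm and approximately feasible (where $\eta_L$ enters); and propagate the $\wt\alpha$-versus-$\alpha_0$ mismatch through every estimate, using only that $\wt\alpha$ is a fixed point of the nondegeneracy neighborhood rather than exactly $\alpha_0$.
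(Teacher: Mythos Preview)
Your overall strategy---localize via consistency, sandwich $\Phihat_n$ between constant multiples of $\Phi$ plus $\bigOp(\delta_n^2)$ error, combine with the sample criterion and rate balancing, then apply the link condition and the modulus of continuity---is exactly the paper's approach. The paper packages the sandwich as Lemma~\ref{lem: relating Phi_n and Phi}: part~(a) gives your upper bound $\Phihat_n(\alpha_0)=\bigOp(\delta_n^2)$ by essentially the same offset argument, and part~(b) supplies the lower bound.

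The one place your argument diverges is the lower bound. You plug in the population-optimal test $f^\ast_\alpha=\tfrac12\Sigma_{\alpha_0}^{-1}m(\cdot;\alpha)$ and claim $\Phihat_n(\alpha)\ge c\,\Phi(\alpha)-\bigOp(\delta_n^2)$ after AM--GM. But Assumption~\ref{ass: identification and spaces}\ref{ass: f space} only guarantees that $m(\cdot;\alpha)$, not $\Sigma_{\alpha_0}^{-1}m(\cdot;\alpha)$, is approximable in $\mF$ with $\mF$-norm $\le L\nmH{\alpha-\alpha_0}$; so your $f^\ast_\alpha$ need not be feasible and the claimed $\mF$-norm bound for it is not justified. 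The paper instead uses the feasible $f_\alpha\approx m(\cdot;\alpha)$ and, because $\Psi(\alpha,f_\alpha)-\norm{f_\alpha}_{\talpha}^2$ is not automatically a constant multiple of $\Phi(\alpha)$, performs a two-case split: when $\norm{f_\alpha}_{2,2}\ge\delta_n$ it plugs in the rescaled $r f_\alpha$ with $r=\delta_n/(2\norm{f_\alpha}_{2,2})$ to get a lower bound $\gtrsim\delta_n\sqrt{\Phi(\alpha)}$; when $\norm{f_\alpha}_{2,2}<\delta_n$ it plugs in $f_\alpha$ directly and gets $\gtrsim\Phi(\alpha)$. The resulting bound
\[
\sup_{f\in\mF}\{\Psi_n(\alpha,f)-\norm{f}_{\talpha,n}^2-\lambda_n\nmF{f}^2\}\ \ge\ \min\Big\{\tfrac{\delta_n}{2\sqrt{c_{\eta_\Sigma}}}\sqrt{\Phi(\alpha)},\ \tfrac{\Phi(\alpha)}{c_{\eta_\Sigma}}\Big\}-\bigOp(\delta_n^2)
\]
is weaker than yours but, combined with the upper side and $\norm{\alpha-\alpha_0}_{ps,\alpha_0}^2\le c_{curv}\Phi(\alpha)$, still yields $\norm{\wh\alpha_n-\alpha_0}_{ps,\alpha_0}=\bigOp(\delta_n)$. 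Your cleaner $c\,\Phi(\alpha)$ bound would go through if you additionally assumed that $\Sigma_{\alpha_0}^{-1}m(\cdot;\alpha)$ admits an $\mF$-approximant with $\mF$-norm $\lesssim\nmH{\alpha-\alpha_0}$; absent that, the paper's scaled-test argument is what closes the gap.
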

Theorem \ref{thm: convergence rate} states that the convergence rate in
pseudometric depends on the critical radii, which measures the complexities, of
spaces related to $\mF$ and $\mH$. The convergence rate in $\norm{\bullet}_{2,2}$
is a direct consequence by applying the local modulus of continuity. Following Theorem \ref{thm: convergence rate}, we have the following general regret bound for our estimated policy $\widehat \pi$.
\begin{theorem}(Regret bound)
  \label{thm: regret bound}
Under Models \eqref{Model: outcome} and \eqref{Model: action}, if Assumption
\ref{ass: identification} holds, or Assumptions \ref{ass:
  identification}\ref{ass:iv relavance}-\ref{ass:iv independent} and \ref{ass:
  sufficient orthogonal} hold, with technical assumptions in Theorem \ref{thm:
  convergence rate} and that $-\beta_{p,2} \geq c_{p,2}$ for some constant $c_{p,2}>0$, we have the following
  regret bound:
\begin{equation*}
\mV(\pi^{*}) - \mV(\wh{\pi}) = \bigOp\left(\omega(\delta_n,\mH_{\alpha_0,M_0,\epsilon})\right).
\end{equation*}
\end{theorem}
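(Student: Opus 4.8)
The plan is to translate the $\mL^2$-consistency rate of the nuisance estimator $\wh\alpha_n$ (Theorem \ref{thm: convergence rate}) into a regret bound for the induced pricing policy $\wh\pi$ via the explicit closed form \eqref{eqn: model-based optimal policy}. First I would recall that, by the proposition on the optimal policy, $\mV(\pi^\ast) = \EE[Q(X,\pi^\ast(X))]$ where $Q(X,p) = \beta_{p,1}(X)p + \beta_{p,2}(X)p^2$ up to terms not depending on $p$, and similarly $\mV(\wh\pi) = \EE[\beta_{p,1}(X)\wh\pi(X) + \beta_{p,2}(X)\wh\pi(X)^2]$ (note: the policy value must be evaluated with the \emph{true} coefficients, not the estimated ones). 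Thus
\begin{equation*}
\mV(\pi^\ast) - \mV(\wh\pi) = \EE\big[ g_X(\pi^\ast(X)) - g_X(\wh\pi(X))\big],
\qquad g_X(p) \triangleq \beta_{p,1}(X)p + \beta_{p,2}(X)p^2.
\end{equation*}

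The key structural fact I would exploit is that $g_X$ is a concave quadratic in $p$ with leading coefficient $\beta_{p,2}(X) \le -c_{p,2} < 0$, and $\pi^\ast(X)$ is its maximizer over $[p_1,p_2]$. By a standard second-order Taylor expansion at the (constrained) maximizer — using that the first-order term is nonpositive by optimality of $\pi^\ast(X)$ on the interval — one gets the pointwise bound $g_X(\pi^\ast(X)) - g_X(\wh\pi(X)) \le |\beta_{p,2}(X)|\,(\pi^\ast(X)-\wh\pi(X))^2 \le C\,(\pi^\ast(X)-\wh\pi(X))^2$ for a constant $C$ depending on $p_1,p_2$ and the uniform bound on $\beta_{p,2}$. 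So the regret is controlled by $\EE[(\pi^\ast(X)-\wh\pi(X))^2]$. The next step is to show that the map from $(\beta_{p,1},\beta_{p,2})$ to the clipped ratio $\min\{\max\{p_1, -\beta_{p,1}/(2\beta_{p,2})\}, p_2\}$ is Lipschitz: since truncation to $[p_1,p_2]$ is $1$-Lipschitz, and since $-b_1/(2b_2)$ is Lipschitz in $(b_1,b_2)$ on the region where $|b_2|$ is bounded below by $c_{p,2}/2$ and the numerator is bounded (which holds on the relevant neighborhood by consistency, Lemma \ref{lem: consistency}, together with the boundedness assumptions $\norm{\alpha}_{\infty,2}\le 1$), we obtain $|\pi^\ast(X)-\wh\pi(X)| \lesssim |\beta_{p,1}(X)-\wh\beta_{p,1}(X)| + |\beta_{p,2}(X)-\wh\beta_{p,2}(X)|$ pointwise on an event of probability tending to one. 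Squaring, taking expectations, and using that $\beta_{p,1},\beta_{p,2}$ are components of $\alpha_0$, this is $\lesssim \norm{\wh\alpha_n - \alpha_0}_{2,2}^2$.

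Combining the two displays yields $\mV(\pi^\ast) - \mV(\wh\pi) = \bigOp\big(\norm{\wh\alpha_n-\alpha_0}_{2,2}^2\big)$, and then Theorem \ref{thm: convergence rate} gives $\norm{\wh\alpha_n-\alpha_0}_{2,2} = \bigOp(\omega(\delta_n,\mH_{\alpha_0,M_0,\epsilon}))$. Here one must be slightly careful: the crude bound gives the \emph{square} of the modulus, but the concavity/quadratic-maximizer structure is exactly what lets us avoid losing the extra factor in a way that matters — in fact since $g_X'(\pi^\ast(X))(\pi^\ast(X)-\wh\pi(X)) \le 0$ the quadratic remainder bound is tight, and one could alternatively note that when $\pi^\ast(X)$ lies in the interior the estimation error enters quadratically, giving $\bigOp(\omega^2)$, which is dominated by $\bigOp(\omega)$ as $\omega = \smallO(1)$; so the stated $\bigOp(\omega(\delta_n,\mH_{\alpha_0,M_0,\epsilon}))$ holds a fortiori. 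The main obstacle I anticipate is handling the boundary case cleanly — when $\pi^\ast(X)$ sits at an endpoint $p_1$ or $p_2$, the first-order optimality condition is an inequality rather than an equality, so the Taylor argument must be phrased as $g_X(\pi^\ast(X)) - g_X(p) \le \tfrac12|g_X''|\,(\pi^\ast(X)-p)^2$ using concavity directly (valid for all $p$ in the interval regardless of where the max is attained), and one must verify the clipping-Lipschitz step uniformly over $X$, including the measure-zero-limiting behavior where $\Upsilon_3\Omega_2 - \Upsilon_2\Omega_3$ or the analogous denominator in Theorem \ref{thm: identification} could be small; this is absorbed by the identification nondegeneracy already assumed in Assumption \ref{ass: identification and spaces}\ref{ass: nondegenerate} and the hypotheses of Theorem \ref{thm: identification}.
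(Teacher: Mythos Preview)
Your approach differs from the paper's: the paper does not use the concavity/Taylor argument at all. It simply writes $\mV(\pi^*)-\mV(\wh\pi) = \EE\big[\{\beta_{p,1}+\beta_{p,2}(\pi^*+\wh\pi)\}(\pi^*-\wh\pi)\big]$, applies Cauchy--Schwarz to obtain the \emph{linear} bound $\le C_\beta\,\|\pi^*-\wh\pi\|_2$ with $C_\beta=\|\beta_{p,1}\|_2+2p_2\|\beta_{p,2}\|_2$, and then uses exactly the Lipschitz-of-clipping step you describe to get $\|\pi^*-\wh\pi\|_2 \lesssim \|\wh\alpha_n-\alpha_0\|_{2,2}$, finishing via Theorem~\ref{thm: convergence rate}. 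No boundary case analysis is needed.

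Your quadratic route is more ambitious (it would deliver $\bigOp(\omega^2)$ on the interior event), but your boundary treatment has a sign error. At $\pi^*(X)=p_1$ optimality gives $g_X'(p_1)\le 0$ while $\pi^*(X)-\wh\pi(X)=p_1-\wh\pi(X)\le 0$, so the first-order term $g_X'(\pi^*)(\pi^*-\wh\pi)$ is \emph{nonnegative}, not nonpositive; the same occurs at $p_2$. Hence at the boundary $g_X(\pi^*)-g_X(p) \ge |\beta_{p,2}|(\pi^*-p)^2$, the reverse of what you assert, and the quadratic upper bound genuinely fails (e.g.\ $g(p)=-p^2-10p$ on $[0,1]$ has $g(0)-g(p)=p^2+10p$, not $\le p^2$). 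The ``concavity directly'' fallback you propose does not rescue this. The easy repair is to bound the first-order term by $|g_X'(\pi^*)|\cdot|\pi^*-\wh\pi|\lesssim |\pi^*-\wh\pi|$ (all coefficients are bounded), contributing $\bigOp(\|\wh\alpha_n-\alpha_0\|_{2,2})$ on the boundary set; this recovers the theorem's $\bigOp(\omega)$ but not the sharper $\omega^2$ you were reaching for. The paper's Cauchy--Schwarz route is simpler precisely because it sidesteps this case split, at the cost of forgoing the quadratic improvement in the interior.
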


Theorem \ref{thm: regret bound} shows that bounding the regret of $\wh{\pi}$ is as
easy as bounding $\norm{\bullet}_{2,2}$ error of the nuisance functions $\beta_{p,1}$ and $\beta_{p,2}$.

\subsection{Regret Bounds for RKHSs}
\label{subsec: regret bounds for RKHSs}
To apply the general convergence rate and regret bound in Theorems \ref{thm:
  convergence rate} and \ref{thm: regret bound}, we need to compute the upper bound
of critical radii $\delta_n$, and local modulus of continuity
$\omega(\delta_n,\mH_{\alpha_0,M_0,\epsilon})$. In this subsection, we focus on the case when $\mH$
and $\mF$ are some RKHSs and provide sufficient conditions to bound these terms,
which lead to concrete results on convergence rates and regret bounds.

\begin{assumption}(Polynomial eigen-decay RKHS kernels)
  \label{ass: RKHS eigen-decay}
  Let $\mH$ and $\mF$ be RKHSs endowed with Mercer kernels $K_{\mH}$ and
  $K_{\mF}$, with non-increasingly sorted eigenvalue sequences
  $\left\{ \lambda_j(K_{\mH}) \right\}_{j=1}^{\infty}$ and
  $\left\{ \lambda_j(K_{\mF}) \right\}_{j=1}^{\infty}$, respectively.
  There exist constants $\gamma_{\mH},\gamma_{\mF}>1/2$ and $c_{\text{R}}>0$ such that
  $\lambda_j(K_{\mH})\leq c_{\text{R}}\cdot j^{-2\gamma_{\mH}}$ and $\lambda_j(K_{\mF})\leq c_{\text{R}}\cdot j^{-2\gamma_{\mF}}$.
\end{assumption}

RKHSs endowed with a kernel of polynomial eigenvalue decay rate are commonly used
in practice. For example, the $\gamma$-order Sobolev space has a polynomial eigen decay rate.
Neutral networks with a ReLU activation function can approximate functions
in $H_A^{\gamma}([0,1]^d)$, the $A$-ball in
the Sobolev space with an order $\gamma$ on
the input space $[0,1]^d$ \citep{korostelev2011mathematical}.

\begin{corollary}(Convergence rate for RKHSs in pseudometric)
  \label{cor: RKHS pseudometric convergence rate}
  Suppose that assumptions in Theorem \ref{thm: convergence rate} hold. Then, together with  Assumption
  \ref{ass: RKHS eigen-decay}, we have that
\begin{equation*}
\norm{\alpha - \alpha_0}_{ps,\alpha_0} = \bigOp \left( n^{-\frac{1}{2+1/\min(\gamma_{\mH},\gamma_{\mF})}}\log(n)\right).
\end{equation*}
\end{corollary}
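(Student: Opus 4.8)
The plan is to derive the stated rate by instantiating the abstract bound $\norm{\wh{\alpha}_n-\alpha_0}_{ps,\alpha_0}=\bigOp(\delta_n)$ of Theorem \ref{thm: convergence rate} in the RKHS setting, that is, by explicitly evaluating the two critical radii that enter the definition of $\delta_n$. Recall from Assumption \ref{ass: parameters} that $\delta_n=\bar{\delta}_n+\bigO(\sqrt{\log(1-\zeta)/n})$, where $\bar{\delta}_n$ simultaneously upper bounds the critical radius of the ball $\mF_B$ and that of the product class $\mathcal{G}\triangleq\{W(\cdot,\alpha)f(*):\alpha\in\mH_A,\ f\in\mF_B,\ \nmF{f}\le L\nmH{\alpha-\alpha_0}\}$. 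With $\zeta$ chosen so that $\sqrt{\log(1-\zeta)/n}=\bigO(\sqrt{\log n/n})$ --- which is of smaller order than $\bar{\delta}_n$ because $\gamma_{\mH},\gamma_{\mF}>1/2$ --- it suffices to show $\bar{\delta}_n=\bigO\!\big(n^{-1/(2+1/\min(\gamma_{\mH},\gamma_{\mF}))}\log n\big)$, after which the conclusion follows from Theorem \ref{thm: convergence rate}.

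First I would record the critical radius of a ball in an RKHS whose kernel eigenvalues satisfy $\lambda_j\le c_{\text{R}}j^{-2\gamma}$. By the eigenvalue characterization of the local Rademacher complexity of a kernel class \citep{wainwright2019high}, up to absolute constants,
\begin{equation*}
\mR_n(\delta,\text{star}(\cdot))\ \lesssim\ \frac{1}{\sqrt{n}}\Big(\sum_{j\ge1}\min\{\delta^2,\ c_{\text{R}}\,j^{-2\gamma}\}\Big)^{1/2}.
\end{equation*}
Splitting the series at the crossover index $j^{\ast}\asymp\delta^{-1/\gamma}$ and using $\gamma>1/2$ gives $\sum_{j\ge1}\min\{\delta^2,\,c_{\text{R}}j^{-2\gamma}\}\asymp\delta^{\,2-1/\gamma}$, so $\mR_n(\delta,\text{star}(\cdot))\lesssim n^{-1/2}\delta^{\,1-1/(2\gamma)}$; solving the fixed-point inequality $\mR_n(\delta,\text{star}(\cdot))\le\delta^2$ then gives a critical radius of order $n^{-1/(2+1/\gamma)}$. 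Applied coordinatewise to $\mF$ (using $\norm{f}_{\infty,2}\le1$ on $\mF_B$), this bounds the critical radius of $\mF_B$ by $\bigO(n^{-1/(2+1/\gamma_{\mF})})$.

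Next I would control the product class $\mathcal{G}$, which is the substantive part. For each fixed bounded value of $Z=(Y,X,G,P)$, the map $\alpha\mapsto W(Z;\alpha)$ is a fixed multivariate polynomial in the finitely many scalar values $\beta_{p,1}(X),\beta_{p,2}(X),h_1(X),\dots,h_6(X)$ --- entering through products such as $(G-h_1(X))(P-h_2(X,G))Y$ and $\rho_k\beta_{p,j}$. Since $\norm{\alpha}_{\infty,2}\le1$ on $\mH_A$ and the data are bounded, these arguments lie in a compact set, so $W(\cdot,\alpha)$ is a uniformly bounded, uniformly Lipschitz function of the coordinates of $\alpha$; a vector-contraction inequality for Rademacher complexities then transfers the localized complexity of $\{W(\cdot,\alpha):\alpha\in\mH_A\}$ to that of a ball in $\mH$, whose critical radius is $\bigO(n^{-1/(2+1/\gamma_{\mH})})$ by the previous paragraph. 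Multiplying by $f\in\mF_B$ (bounded by $1$, with the constraint $\nmF{f}\le L\nmH{\alpha-\alpha_0}$ supplying the localization) and invoking the standard bound for the critical radius of a product of two uniformly bounded classes in terms of the critical radii of the factors \citep{dikkala2020minimax}, the critical radius of $\mathcal{G}$ is $\bigO\!\big(\max\{n^{-1/(2+1/\gamma_{\mH})},\,n^{-1/(2+1/\gamma_{\mF})}\}\,\log n\big)$. Since $\gamma\mapsto2+1/\gamma$ is decreasing, the maximum is attained at $\gamma=\min(\gamma_{\mH},\gamma_{\mF})$, whence $\bar{\delta}_n=\bigO\!\big(n^{-1/(2+1/\min(\gamma_{\mH},\gamma_{\mF}))}\log n\big)$, and the corollary follows.

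The hard part is the product-class step: unlike ordinary nonparametric IV, here $W(Z;\alpha)$ is genuinely nonlinear in several components of $\alpha$ that all live in $\mH$, so $\mathcal{G}$ is not itself a kernel class, and the argument must carefully chain (i) the Lipschitz/contraction reduction of $\{W(\cdot,\alpha)\}$ to the coordinates of $\mH$, (ii) the localization tying $\nmH{\alpha-\alpha_0}$ to $\nmF{f}$, and (iii) a product-class critical-radius lemma --- throughout keeping track that the standing boundedness conditions $W(\cdot,\alpha)\in[-1,1]^{d_W}$, $\norm{\alpha}_{\infty,2}\le1$ and $\norm{f}_{\infty,2}\le1$ are in force, so that no polynomial-in-$n$ overhead enters beyond the logarithmic factor.
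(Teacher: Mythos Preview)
Your approach is correct and mirrors the paper's: apply Theorem \ref{thm: convergence rate} and bound $\bar{\delta}_n$ by computing the relevant RKHS critical radii. The paper's own proof is much terser --- it simply invokes Lemma \ref{lem:criticalradRKHS} for the kernel-ball radius and defers the product-class step (your (i)--(iii)) wholesale to Example~1 of \citet{miao2022off} --- whereas you spell out the Lipschitz/contraction reduction and the product-class critical-radius argument explicitly; the underlying strategy is the same.
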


To obtain the rate of convergence in $\norm{\bullet}_{2,2}$, we now quantify the local
modulus of continuity for RKHS $\mH$.
By Mercer's theorem, we can decompose
$\Delta\alpha = \alpha - \alpha_0\in\mH_{M_0}$ by the eigen decomposition of kernel $K_{\mH}$ by $\Delta\alpha = \sum_{j=1}^{\infty} a_j e_j$, where
$e_j: \mZ\rightarrow \RR^d$ with $\norm{e_j}_{2,2}^2=1$ and $\left\langle e_i,e_j \right\rangle_{2,2}=0$. Then
$\norm{\alpha-\alpha_0}_{2,2}^2 = \sum_{j=1}^{\infty}a_j^2$ and $\norm{\alpha-\alpha_0}_{\mH}^2 = \sum_{j=1}^{\infty} a_j^2/\lambda_j(K_{\mH}) \leq M_0$.
Therefore, $\sum_{j\geq m} a_j^2 \leq \lambda_m(K_{\mH})M_0$. In addition,
\begin{equation*}
\norm{\alpha - \alpha_0}_{ps,\alpha_0}^2 = \sum_{i,j} a_ia_j \EE \left[\left(  \frac{dm(X;\alpha_0)}{d\alpha}[e_i]  \right)^{\top}\Sigma_{\alpha_0}(X)^{-1}\left(  \frac{dm(X;\alpha_0)}{d\alpha}[e_j]  \right)\right].
\end{equation*}
For any positive integer $m$, let 
\begin{equation*}
\Gamma_m = \EE \left[ \left(  \frac{dm(X;\alpha_0)}{d\alpha}[e_i]  \right)^{\top}\Sigma_{\alpha_0}(X)^{-1}\left(  \frac{dm(X;\alpha_0)}{d\alpha}[e_j]
\right)\right]_{i,j\in[m]}.
\end{equation*}
Then we have the following lemma on the local modulus of continuity.
\begin{lemma}
  \label{lem: modulus of continuity}
For any positive integer $m$, suppose that $\lambda_{\min}(\Gamma_m)\geq \tau_m>0$ and for all
$i\leq m <j$, there exists $c>0$ such that
\begin{equation*}
\EE \left[ \left(  \frac{dm(X;\alpha_0)}{d\alpha}[e_i]  \right)^{\top}\Sigma_{\alpha_0}(X)^{-1}\left(  \frac{dm(X;\alpha_0)}{dh}[e_j]
\right)\right] \leq c\tau_m.
\end{equation*}
Then
\begin{align*}
\omega(\eta, \mH_{M_0})^2 &= \max_{\Delta\alpha\in\mH_{M_0}: \norm{\Delta\alpha}_{ps,\alpha_0}\leq \eta}\norm{\Delta\alpha}_{2,2}^2
\\
&\leq \min_{m\in \NN_+} \frac{\eta^2}{\tau_m} + M_0 \left[ 2c \sqrt{\sum_{i=1}^{\infty}\lambda_i(K_{\mH})} \sqrt{\sum_{j=m+1}^{\infty}\lambda_j(K_{\mH})} +\lambda_{m+1}(K_{\mH}) \right].
\end{align*}
\end{lemma}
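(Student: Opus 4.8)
Building on the decomposition $\Delta\alpha=\alpha-\alpha_0=\sum_{j\ge 1}a_j e_j$ set up just before the lemma, the plan is to fix the integer $m$, split $\Delta\alpha$ into its head $\sum_{j\le m}a_j e_j$ and tail $\sum_{j>m}a_j e_j$, and handle the two pieces separately. The tail contributes at most $\lambda_{m+1}(K_{\mH})M_0$ to $\norm{\Delta\alpha}_{2,2}^2=\sum_j a_j^2$ (this is the displayed inequality $\sum_{j\ge m+1}a_j^2\le\lambda_{m+1}(K_{\mH})M_0$, from the RKHS-norm budget $\sum_j a_j^2/\lambda_j(K_{\mH})\le M_0$ and monotonicity of the eigenvalues), while the head is controlled by ``inverting'' the pseudometric restricted to the first $m$ coordinates. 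Two elementary inputs will be reused: by Cauchy--Schwarz and the norm budget, $\sum_j\abs{a_j}\le(\sum_j a_j^2/\lambda_j(K_{\mH}))^{1/2}(\sum_j\lambda_j(K_{\mH}))^{1/2}\le\sqrt{M_0}\,(\sum_j\lambda_j(K_{\mH}))^{1/2}$, and likewise $\sum_{j>m}\abs{a_j}\le\sqrt{M_0}\,(\sum_{j>m}\lambda_j(K_{\mH}))^{1/2}$.

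For the head, write $G_{ij}=\EE[(\frac{dm(X;\alpha_0)}{d\alpha}[e_i])^{\top}\Sigma_{\alpha_0}(X)^{-1}(\frac{dm(X;\alpha_0)}{d\alpha}[e_j])]$, so that $\norm{\Delta\alpha}_{ps,\alpha_0}^2=\sum_{i,j}a_ia_jG_{ij}$ and $\Gamma_m=(G_{ij})_{i,j\in[m]}$. Since $\Sigma_{\alpha_0}(X)^{-1}$ is positive definite (Assumption \ref{ass: identification and spaces}\ref{ass: nondegenerate}) and $\frac{dm(X;\alpha_0)}{d\alpha}[\cdot]$ is linear, $(u,v)\mapsto\EE[u(X)^{\top}\Sigma_{\alpha_0}(X)^{-1}v(X)]$ is a (semi-)inner product, so the infinite Gram matrix $G=(G_{ij})$ is positive semidefinite, and hence so is its tail block $(G_{ij})_{i,j>m}$. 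Splitting the quadratic form,
\begin{equation*}
\norm{\Delta\alpha}_{ps,\alpha_0}^2 = a_{\le m}^{\top}\Gamma_m a_{\le m} + 2\sum_{i\le m<j}a_ia_jG_{ij} + \sum_{i,j>m}a_ia_jG_{ij},
\end{equation*}
I would drop the nonnegative last term, bound the first via $a_{\le m}^{\top}\Gamma_m a_{\le m}\ge\lambda_{\min}(\Gamma_m)\sum_{j\le m}a_j^2\ge\tau_m\sum_{j\le m}a_j^2$, and bound the cross term in absolute value by $2c\tau_m(\sum_{i\le m}\abs{a_i})(\sum_{j>m}\abs{a_j})\le 2c\tau_m M_0(\sum_i\lambda_i(K_{\mH}))^{1/2}(\sum_{j>m}\lambda_j(K_{\mH}))^{1/2}$, using the off-diagonal bound $\abs{G_{ij}}\le c\tau_m$ for $i\le m<j$ (the hypothesis, applied with absolute values) together with the Cauchy--Schwarz estimates above. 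Rearranging, whenever $\norm{\Delta\alpha}_{ps,\alpha_0}\le\eta$ this yields $\sum_{j\le m}a_j^2\le\eta^2/\tau_m+2cM_0(\sum_i\lambda_i(K_{\mH}))^{1/2}(\sum_{j>m}\lambda_j(K_{\mH}))^{1/2}$.

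Adding the head and tail estimates gives $\norm{\Delta\alpha}_{2,2}^2=\sum_{j\le m}a_j^2+\sum_{j>m}a_j^2\le\eta^2/\tau_m+2cM_0(\sum_i\lambda_i(K_{\mH}))^{1/2}(\sum_{j>m}\lambda_j(K_{\mH}))^{1/2}+\lambda_{m+1}(K_{\mH})M_0$; since $m\in\NN_+$ was arbitrary, taking the infimum over $m$ and then the supremum defining $\omega(\eta,\mH_{M_0})^2$ delivers the claimed bound.

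I expect the main obstacle to be the head--tail cross term in the pseudometric: it carries no definite sign, so it must be absorbed as an error term, and turning the per-entry bound $\abs{G_{ij}}\le c\tau_m$ plus an $\ell^1$ bound on the coefficients into exactly the advertised eigenvalue-tail factor $M_0\,\sqrt{\sum_i\lambda_i(K_{\mH})}\,\sqrt{\sum_{j>m}\lambda_j(K_{\mH})}$ — rather than a weaker quantity carrying an extra $\sqrt m$ — is the step that requires the careful Cauchy--Schwarz weighting, and it is precisely where the RKHS-norm budget $M_0$ is spent. A minor point to verify is the absolute convergence of the double series $\sum_{i,j}a_ia_jG_{ij}$ and the legitimacy of discarding its tail block; both follow from positive semidefiniteness of $G$ and the continuous pathwise differentiability granted by Assumption \ref{ass: local curvature}\ref{ass: pathwise differentiable}.
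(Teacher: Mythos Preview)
Your proposal is correct and follows essentially the same route as the paper's own proof: split the pseudometric quadratic form into the $\Gamma_m$ block, the cross block, and the positive-semidefinite tail block; lower-bound the first by $\tau_m\sum_{j\le m}a_j^2$, drop the third, and control the cross block via the entrywise bound $c\tau_m$ together with the Cauchy--Schwarz estimate $\sum_{j\in S}\abs{a_j}\le\sqrt{M_0}\,(\sum_{j\in S}\lambda_j(K_{\mH}))^{1/2}$; then add the tail bound $\sum_{j>m}a_j^2\le M_0\lambda_{m+1}(K_{\mH})$ and optimize over $m$. The only cosmetic difference is that you bound $\sum_{i\le m}\abs{a_i}$ directly by $\sqrt{M_0}\,(\sum_{i\ge 1}\lambda_i(K_{\mH}))^{1/2}$, whereas the paper first writes $\sqrt{\sum_{i\le m}\lambda_i(K_{\mH})}\sqrt{\sum_{i\le m}a_i^2/\lambda_i(K_{\mH})}$ before relaxing both factors to the same quantity.
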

Lemma \ref{lem: modulus of continuity} allows us to quantify the local modulus of continuity at $\alpha_0$ for the RKHS case by finding an optimal $m^{*}$, which is determined by
the decay rate of $\tau_m$. We consider two cases in the following corollary.


\begin{corollary}(Convergence rates and regret bounds for RKHS cases)
  \label{cor: RKHS convergence rate}
  Suppose that assumptions in Corollary \ref{cor: RKHS pseudometric convergence
    rate} and Lemma \ref{lem: modulus of continuity} hold.
    Then the following statements hold.
\begin{enumerate}[label=(\alph*)]
  \item \label{ass: mild illposedness}
        Mild ill-posedness: If $\tau_m \sim m^{-2b}$ for some $b>0$, then
\begin{equation*}
\norm{\alpha-\alpha_0}_{2,2} = \bigOp \bigg( \Big[ n^{-\frac{1}{2+1/\min(\gamma_{\mH},\gamma_{\mF})}}\log(n)\Big]^{ \frac{\gamma_{\mH}-1/2}{\gamma_{\mH}-1/2+b}}\bigg).
\end{equation*}
Furthermore, under Models \eqref{Model: outcome} and \eqref{Model: action}, if Assumption
\ref{ass: identification} holds, or Assumptions \ref{ass:
  identification}\ref{ass:iv relavance}-\ref{ass:iv independent} and \ref{ass:
  sufficient orthogonal} hold, then
  \begin{equation*}
\mV(\pi^{*})-\mV(\wh{\pi}) = \bigOp \bigg( \Big[ n^{-\frac{1}{2+1/\min(\gamma_{\mH},\gamma_{\mF})}}\log(n)\Big]^{ \frac{\gamma_{\mH}-1/2}{\gamma_{\mH}-1/2+b}}\bigg);
\end{equation*}
  \item \label{ass: severe illposedness}
        Severe ill-posedness: If $\tau_m \sim e^{-m^b}$ for some $b>0$, then
\begin{equation*}
\norm{\alpha-\alpha_0}_{2,2} = \bigOp \Big(\left( \log n \right)^{-\frac{\gamma_{\mH}-1/2}{2b}}\Big).
\end{equation*}
Furthermore, under Models \eqref{Model: outcome} and \eqref{Model: action}, if Assumption
\ref{ass: identification} holds, or Assumptions \ref{ass:
  identification}\ref{ass:iv relavance}-\ref{ass:iv independent} and \ref{ass:
  sufficient orthogonal} hold, then
  \begin{equation*}
\mV(\pi^{*})-\mV(\wh{\pi}) = \bigOp \Big(\left( \log n \right)^{-\frac{\gamma_{\mH}-1/2}{2b}}\Big).
\end{equation*}
\end{enumerate}
\end{corollary}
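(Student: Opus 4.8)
The plan is to assemble Corollary~\ref{cor: RKHS convergence rate} from four ingredients that are already in hand: the pseudometric rate of Corollary~\ref{cor: RKHS pseudometric convergence rate}, the modulus-of-continuity bound of Lemma~\ref{lem: modulus of continuity}, the link $\norm{\wh{\alpha}_n-\alpha_0}_{2,2}=\bigOp\big(\omega(\delta_n,\mH_{\alpha_0,M_0,\epsilon})\big)$ from Theorem~\ref{thm: convergence rate}, and the identity ``regret $=\bigOp\big(\omega(\delta_n,\mH_{\alpha_0,M_0,\epsilon})\big)$'' from Theorem~\ref{thm: regret bound}. Thus everything reduces to one quantitative task: evaluate $\omega(\delta_n,\cdot)$ under the polynomial kernel eigen-decay of Assumption~\ref{ass: RKHS eigen-decay} and the two prescribed decay regimes for the sequence $\tau_m$. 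Write $\delta_n=n^{-1/(2+1/\min(\gamma_{\mH},\gamma_{\mF}))}\log n$, so that $\norm{\wh{\alpha}_n-\alpha_0}_{ps,\alpha_0}=\bigOp(\delta_n)$ by Corollary~\ref{cor: RKHS pseudometric convergence rate}. Then the $\norm{\bullet}_{2,2}$ and regret statements in both parts~\ref{ass: mild illposedness} and~\ref{ass: severe illposedness} follow the instant we have a matching bound on $\omega(\delta_n,\mH_{\alpha_0,M_0,\epsilon})\leq\omega(\delta_n,\mH_{M_0})$, the inequality holding once $M_0$ is chosen large enough that $\mH_{\alpha_0,M_0,\epsilon}$ sits inside the $\mH$-ball used in Lemma~\ref{lem: modulus of continuity}.

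First I would insert $\lambda_j(K_{\mH})\leq c_{\mr R}\,j^{-2\gamma_{\mH}}$ into the right-hand side of Lemma~\ref{lem: modulus of continuity}. Because $\gamma_{\mH}>1/2$, the head series $\sum_{i\geq 1}\lambda_i(K_{\mH})$ converges and is $\bigO(1)$; an integral comparison gives the tail estimate $\sum_{j\geq m+1}\lambda_j(K_{\mH})\lesssim m^{1-2\gamma_{\mH}}$, and the isolated term $\lambda_{m+1}(K_{\mH})\lesssim m^{-2\gamma_{\mH}}$ is of strictly smaller order. Absorbing $c$, $c_{\mr R}$, $M_0$ into $\lesssim$, Lemma~\ref{lem: modulus of continuity} yields
\[
\omega(\delta_n,\mH_{M_0})^2 \;\lesssim\; \min_{m\in\NN_+}\Big\{\,\tfrac{\delta_n^{2}}{\tau_m} \,+\, m^{-\kappa}\,\Big\},
\]
where $\kappa>0$ is the explicit exponent produced by the eigen-decay tail. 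It then remains to run the one-variable minimization over $m$ and substitute $\delta_n$.

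In the mildly ill-posed case $\tau_m\sim m^{-2b}$, the two summands $\delta_n^{2}m^{2b}$ and $m^{-\kappa}$ balance at $m^{\star}\asymp\delta_n^{-2/(2b+\kappa)}$, giving $\omega(\delta_n,\mH_{M_0})\lesssim\delta_n^{\kappa/(2b+\kappa)}$; with the explicit value of $\kappa$ and of $\delta_n$ this reproduces the exponent in part~\ref{ass: mild illposedness}, and feeding it into Theorems~\ref{thm: convergence rate} and~\ref{thm: regret bound} (the latter under the stated identification hypotheses and $-\beta_{p,2}\geq c_{p,2}$) delivers both the $\norm{\bullet}_{2,2}$ rate and the regret bound. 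In the severely ill-posed case $\tau_m\sim e^{-m^{b}}$, the term $\delta_n^{2}e^{m^{b}}$ overtakes the polynomial term once $m^{b}\gtrsim\log(1/\delta_n)$, so the minimizer is $m^{\star}\asymp(\log(1/\delta_n))^{1/b}$; since $\log(1/\delta_n)\asymp\log n$, this produces a logarithmic rate for $\omega(\delta_n,\mH_{M_0})$, namely the one in part~\ref{ass: severe illposedness}, which Theorems~\ref{thm: convergence rate} and~\ref{thm: regret bound} again transfer to $\norm{\wh{\alpha}_n-\alpha_0}_{2,2}$ and to the regret. This proof is largely substitution plus scalar optimization; the only point requiring genuine care is checking, at the optimal $m^{\star}$ in each regime, that the isolated $\lambda_{m^{\star}+1}(K_{\mH})$ contribution and the non-dominant summand of the $\min_m$ expression are of lower order and hence do not degrade the exponent.
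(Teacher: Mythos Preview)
Your proposal is correct and mirrors the paper's proof exactly: bound $\omega(\delta_n,\mH_{M_0})$ via Lemma~\ref{lem: modulus of continuity} under the polynomial eigen-decay $\lambda_j(K_{\mH})\lesssim j^{-2\gamma_{\mH}}$, optimize over $m$ in the two $\tau_m$ regimes, substitute the pseudometric rate $\delta_n$ from Corollary~\ref{cor: RKHS pseudometric convergence rate}, and invoke Theorems~\ref{thm: convergence rate} and~\ref{thm: regret bound}. The paper's proof records the optimal $m^\star$ and the resulting $\omega$-rate in each case and then plugs in $\delta_n$ with $\zeta=1/n$, which is precisely your plan; the only arithmetic to watch is that the exponent in the statement corresponds to the tail contribution scaling as $m^{-(2\gamma_{\mH}-1)}$, so when you balance you should track which term of Lemma~\ref{lem: modulus of continuity} you are treating as dominant to land on the claimed $\frac{\gamma_{\mH}-1/2}{\gamma_{\mH}-1/2+b}$ rather than a shifted exponent.
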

Corollary \ref{cor: RKHS convergence rate} considers two scenarios of the local modulus of continuity. If $b$ is large for the mild ill-posed case or the severe ill-posed case is considered, the convergence rate of the regret can be much slower.
While for the mild ill-posed case with $b\rightarrow0$, we nearly attain the minimax optimal rate,
$n^{-\frac{1}{2+1/\min(\gamma_{\mH},\gamma_{\mF})}}$, in the classical non-parametric regression \citep{stone1982optimal}.


\section{Numerical Studies}
\label{sec: numerical}
In this section, we perform thorough simulation studies to evaluate the numerical performance of the proposed pricing policy learning method, PRINT, in terms of revenue regret. Two benchmark offline pricing policy learning methods are compared with our method under various simulation settings and a dataset from an online auto loan company.
\begin{enumerate}
    \item \citet{kallus2018policy} (and also \citet{chen2016personalized}) considered an inverse propensity score-based policy
learning method for continuous treatment under no unmeasured confounding. The method is implemented by first estimating the generalized
propensity scores $f_{P\given X,G}(p\given x,g)$, which is given by the ratio of two kernel
density estimators $\wh{Q}(x,g) = \wh{f}_{P,X,G}(p,x,g)/\wh{f}_{X,G}(x,g)$. 
Then one can  learn a linear policy
$\wh{\pi}_{KZ}$ that maximizes the estimated value
\begin{equation*}
\sum_{i=1}^n Y_i \frac{K \left( h^{-1}[P_i - \pi_{KL}(X_i)] \right)}{\wh{Q}(X_i,G_i)} \Bigg/\sum_{i=1}^n \frac{K \left( h^{-1}[P_i - \pi_{KL}(X_i)]\right)}{\wh{Q}(X_i,G_i)}.
\end{equation*}
\item 
We also compare with a regression-based method by using the following model that
\begin{align*}
\EE[Y \given P,G,X] &= \beta_{p,1}^{reg}(X) P + \beta_{p,2}^{reg}(X) P^2 + \beta_g^{reg}(X,G),
\end{align*}
where $\beta_{p,1}^{reg}$, $\beta_{p,2}^{reg}$, and $\beta_{g}^{reg}$ are some nonparametric functions.
Then the estimated optimal policy is given as 
\begin{equation*}
\wh{\pi}^{reg}(X) = \min\{
\max\{p_1, - \wh{\beta}^{reg}_{p,1}(X)/[2\wh{\beta}^{reg}_{p,2}(X)]\}, p_2
\}.
\end{equation*}
\end{enumerate}



\subsection{Simulation}
\label{sec: simulation}
\subsubsection{Simulated Data Generation}
We first generate $(G,X,U)$ to satisfy Assumptions \ref{ass: identification}\ref{ass:iv relavance}, 
\ref{ass:iv independent}, and
\ref{ass:Orthogonality condition}, 
in which Assumption \ref{ass: identification}\ref{ass:Orthogonality condition} can be guaranteed by Assumption \ref{ass: sufficient orthogonal}.
  Specifically, let $X \sim \mN((0.25,0.25)^{\top}, \Sigma_x=\bI_2)$.
  Then we generate $G$ and $U=(U_1,U_2)$ by
\begin{equation*}
  (G, U_1, U_2) \given X
\sim \mN \big(
  [\mu_g + c_g^{\top}X,
  \mu_{u_1} + c_{u_1}^{\top}X,
  \mu_{u_2} + c_{u_2}^{\top}X]^{\top},
  \text{diag} (\sigma_g^2, \sigma_{u_1}^2,\sigma_{u_2}^2)
\big).
\end{equation*}

According to structural equations \eqref{Model: outcome} and \eqref{Model: action}, we generate price $P$ and revenue $Y$ with previously generated $(G,X,U)$.
The coefficient functions 
\begin{align*}
& \beta_{p,1}(U,X)  = U_1^2-c_1^{\top}X, & \beta_{p,2}(U,X) = -\exp \left\{ U_1+c_2^{\top}X \right\},\\
& \beta_g(U,X,G) = (U_1^2+c_3^{\top}X)G + c_4G, & \beta_{u,x}(U,X) = \cos(U_1U_2 + c_5^{\top}X),\\
&\alpha_g(U,X,G)= (U_2^2+c_6^{\top}X)G + c_7G, & \alpha_{u,x}(U,X) = \cos U_2.
\end{align*}
It can be verified that they satisfy Assumption \ref{ass: identification}\ref{ass:iv relavance} and Assumption \ref{ass: sufficient orthogonal}
and that $\beta_{p,2}(u,x)\leq -c_{p,2}$ for some $c_{p,2}>0$ for all $(u,x)$. Based on structural equations \eqref{Model: outcome} and \eqref{Model: action}, we add two independent noises to generate $Y$ and $P$ respectively, i.e., 
\begin{align*}
	Y &= \beta_{p, 1}(U, X) P + \beta_{p, 2}(U, X) P^2 +\beta_g(U, X, G)+ \beta_{u, x}(U, X) + \epsilon_y, \\
	P &= \alpha_g(U, X, G) + \alpha_{u, x}(U, X) + \epsilon_p,
\end{align*}
where $\epsilon_y, \epsilon_p\sim \text{Uniform}[-1,1]$, independently.

Under the above setting, an optimal pricing policy is given below, which will be used to calculate the regret.
\begin{equation}
\label{eq:simu_oracle}
\pi^{*}(X) = \min \bigg\{p_2, \max \bigg\{p_1, \frac{-c_1^{\top} X + (\mu_{u_1}+c_{u_1}^{\top}X)^2 + \sigma_{u_1}^2}{\exp \{ \mu_{u_1} + \sigma_{u_1}^2/2 + (c_2+c_{u_1})^{\top}X \} } \bigg\} \bigg\}.
\end{equation}

To control the degree of violation to Assumption \ref{ass: valid IV}\ref{ass: IV Exclusion restriction} (IV exclusion restriction), in the function $\beta_g(U,X,G)$, we consider $c_4=1$ for a mild violation and $c_4=5$ for a severe violation.
To control the strength of the invalid IV $G$, in $\alpha_g(U,X,G)$, we set $c_7=1$ for a weak IV and $c_7=5$ for a strong IV.
The values of other parameters are given in Appendix~\ref{sec: simulation details}.

\subsubsection{Simulation Results}
We evaluate the learned policy 100 times by the Monte Carlo method on a noise-free testing dataset of size 10,000.
Figure \ref{fig: simu boxplots} shows the box plots of the regrets of revenue ($\mV(\pi^*) - \mV(\wh{\pi})$) of learned policies by different methods from 100 replicates of simulation. For all combinations of IV strength and the violation of exclusion restriction, the policies learned by the PRINT outperform the other two benchmark methods for both sample sizes $n=1,000$ and $2,000$ and can achieve better performance with a larger sample size $n=2,000$. When the IV is strongly relevant the price, the policies learned by the regression could achieve small regrets. However, when the IV is weakly relevant to the price, the performance of policies learned by regression is unstable. Finally, the overall performance of linear policies learned by \citet{kallus2018policy} is not stable, partially due to that the inverse of the generalized propensity scores is hard to estimate, and the confounding bias caused by the unmeasured $U$.
\begin{figure}[!h]
  \centering
  \includegraphics[width=0.7\textwidth]{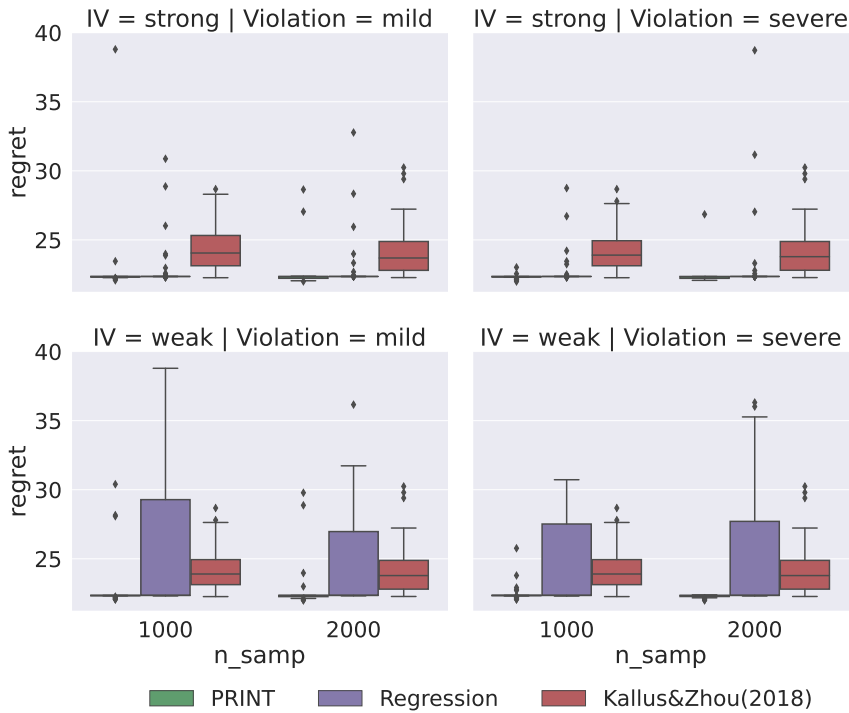}
    \caption{Boxplots of the regrets of learned pricing policy for 4 combinations of IV strength and violation of exclusion restriction.}
    \label{fig: simu boxplots}
\end{figure}

\subsection{Real Data Applications}
In this subsection, we study the numerical performance of the pricing strategy
of personalized loans for an anonymous US auto lending company. We compare our
method by \citet{kallus2018policy}, a regression-based method, and
the historical decision made by the company. A major difference between the real data application and the
previous simulation study is that the structural equations \eqref{Model: outcome} 
and \eqref{Model: action} are potentially misspecified in the real data.

We obtain the dataset \textit{CPRM-12-001: On-Line Auto Lending}  from the
Center for Pricing and Revenue Management at Columbia University\footnote{\url{https://www8.gsb.columbia.edu/cprm/research/datasets}}. It records the
online auto loan applications received by the company from Jul. 2002 to Nov.
2004. For each approved application, the requested term, loan amount, annual
percentage rate (APR), monthly London interbank offered rate (LIBOR), 
whether contracted or not, and some personal information (e.g., FICO score) are recorded. 
For detailed descriptions, we refer the readers to \citet{phillips2015effectiveness} and \citet[][Table 3]{ban2021personalized}. 

\subsubsection{Problem Settings and Evaluation Method}
For the pricing of the online auto loan company, we adopt the price defined in \citet{ban2021personalized}, which is 
the net present value of future payments less the loan amount, i.e.,
\begin{equation}
\label{eq: data price def}
P = \text{Monthly
  Payment} \times \sum_{\tau=1}^{\text{Term}} (1+ \text{monthly LIBOR})^{-\tau} - \text{Loan Amount}.
\end{equation}
We set the feasible price range to be $[\$0, \$40,000]$.

To evaluate the pricing policy, it is necessary to construct a generative model since the revenue depends on the price selected by the policy, which is not available in the dataset. This is in contrast to supervised learning, where a testing dataset can be used to evaluate prediction accuracy.
Since the outcomes of whether contracted or not are binary (accept/reject), instead of a linear
demand, we adopt a logistic demand model used by \citet{ban2021personalized} for generating the demand. Therefore the true expected revenue is not a quadratic function of the price, conditioning on other factors.
While this causes a model mis-specification for our method, we indeed find that our method works well and is robust to such a mis-specified revenue model. 
In addition, this serves as a complement to the previous simulation study, where we studied a correctly specified model.
In particular, for a given feature vector $x$ (including the constant term) and a price $p$, the probability of accepting the contract, i.e., the expected demand, is modeled by
$
\frac{1}{1+ \exp \{ -\alpha^{\top} x - \beta^{\top}x\times p\}}.
$
Then the expected revenue is $p/[1+ \exp \{ -\alpha^{\top} x - \beta^{\top}x\times p \}]$. 
The population demand model, which will be used to evaluate the revenue, is obtained by fitting a logistic regression with $\ell^2$-penalty with all records. We select the penalty parameter by 5-fold cross-validation. 

\subsubsection{Implementation}
It is worth noting that even a competitor's rate is available in this dataset. In practice, however, we actually do not have reliable
data to represent the competition in the auto lending industry during the
analyzed period of time \citep{phillips2015effectiveness}. Therefore, we intentionally treat the competitor's rate as an unmeasured confounding and exclude it from the feature vector.
Further, given the loan amount, term, and LIBOR, the price of a loan can be purely determined by the monthly payment. Therefore, to ensure that Assumption \ref{ass: valid IV} \ref{ass:iv relavance} holds,  we also exclude the monthly payment from the feature vector. 

For PRINT, we choose the APR (annual percentage rate) for a loan as the instrumental variable, which was used as the instrument for dealing with the endogeneity of the price  by \citet{blundell1992credit}. 
Meanwhile, the loan rate is continuous and has a strong direct effect on the continuous price, which indicates that it could be used as a strong relevant IV. 
However, using the loan rate as a valid IV is questionable, since it may have a direct effect on the eventual revenue, hence breaking the IV exclusion restriction. 
However, we can safely use the loan APR as the invalid IV for our method since the IV exclusion restriction has been relaxed. 

For comparison, in addition to the two benchmark methods compared in Section \ref{sec: simulation}, we also consider the company's actual pricing policy and the optimal policy by maximizing the revenue according to the learned population demand model.
We use the first 60,000 records (ordered by application time) of the dataset as the training data to learn the policies for PRINT and two benchmark methods. Then we apply those learned policies to the testing dataset of the rest 148,084 records and calculate the revenues by the learned demand model.



\subsubsection{Results and Discussion}
Table \ref{tab: data values} summarizes the expected revenue for the pricing policy learned by the PRINT against the benchmark policies, optimal policy, and policy used by the firm. 
The firm's historical policy could attain 77.3\% of the revenue if the optimal policy were used. 
This satisfactory revenue is reasonable since the firm may have external information at the pricing time that is not available in the offline records. 
The expected revenues obtained by the policies learned by direct regression and \citet{kallus2018policy} do not reach the firm's historical revenue, partially because of the unobserved confounding issue and insufficient offline data.
In contrast, the policy learned by PRINT attains about 82.3\% of the expected revenue of the optimal policy and improves the firm's revenue by 6.5\%. 

\begin{table}[!htbp]
    \caption{Values of pricing policies on the testing dataset.}
    \label{tab: data values}
    \centering
    \begin{tabular}{l|ccccc}
    \toprule
        Pricing Policy & Optimal & Firm   & \textbf{PRINT} & \citet{kallus2018policy} & Regression \\
    \midrule
        Revenue (\$1000)         & \textbf{1.0766}  & 0.8318 & \textbf{0.8858}   & 0.1452                   & 0.0913     \\
    \bottomrule
    \end{tabular}
\end{table}

\begin{figure}[htbp]
  \centering
  \includegraphics[width=0.4\textwidth]{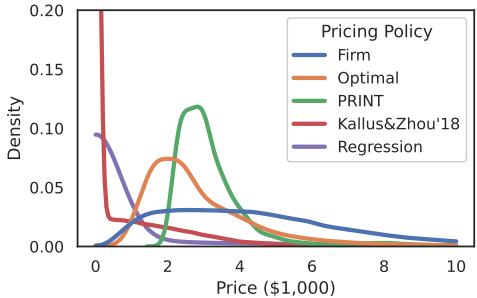}
    \caption{Personalized Pricing distributions on the testing dataset.}
    \label{fig: data price distribution}
\end{figure}

\begin{figure}[htbp]
  \centering
    \begin{subfigure}[b]{0.4\textwidth}
      \centering
      \includegraphics[width=\textwidth]{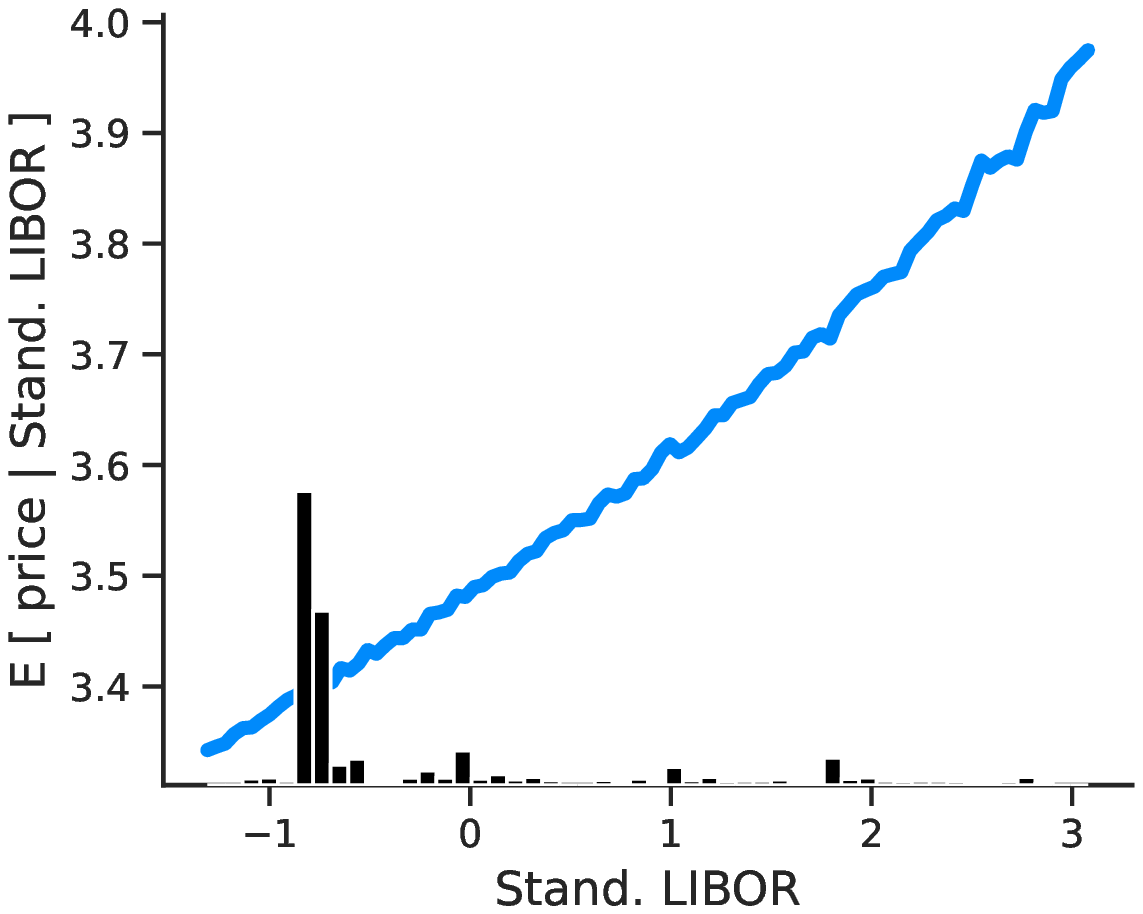}
      \caption{LIBOR}
    \end{subfigure}
    \begin{subfigure}[b]{0.4\textwidth}
      \centering
      \includegraphics[width=\textwidth]{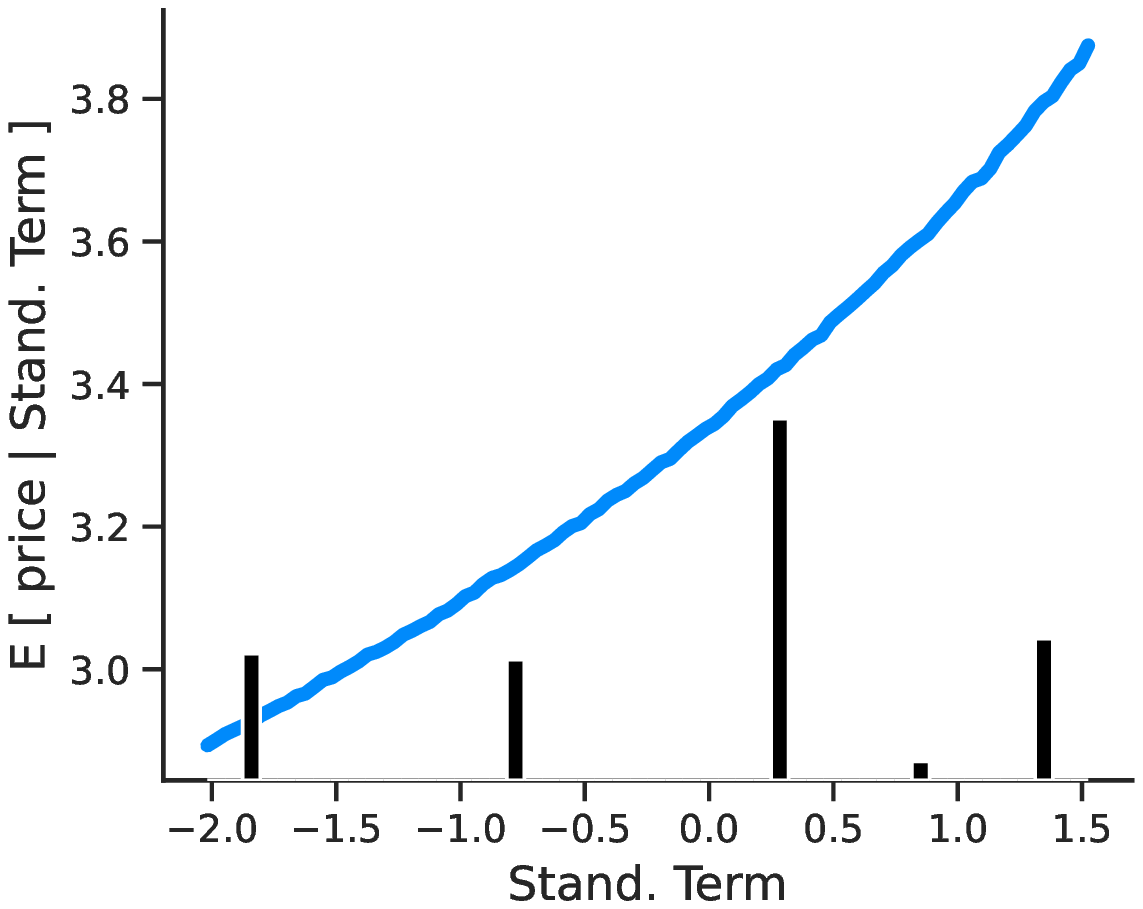}
      \caption{Term}
    \end{subfigure}
    \begin{subfigure}[b]{0.4\textwidth}
      \centering
      \includegraphics[width=\textwidth]{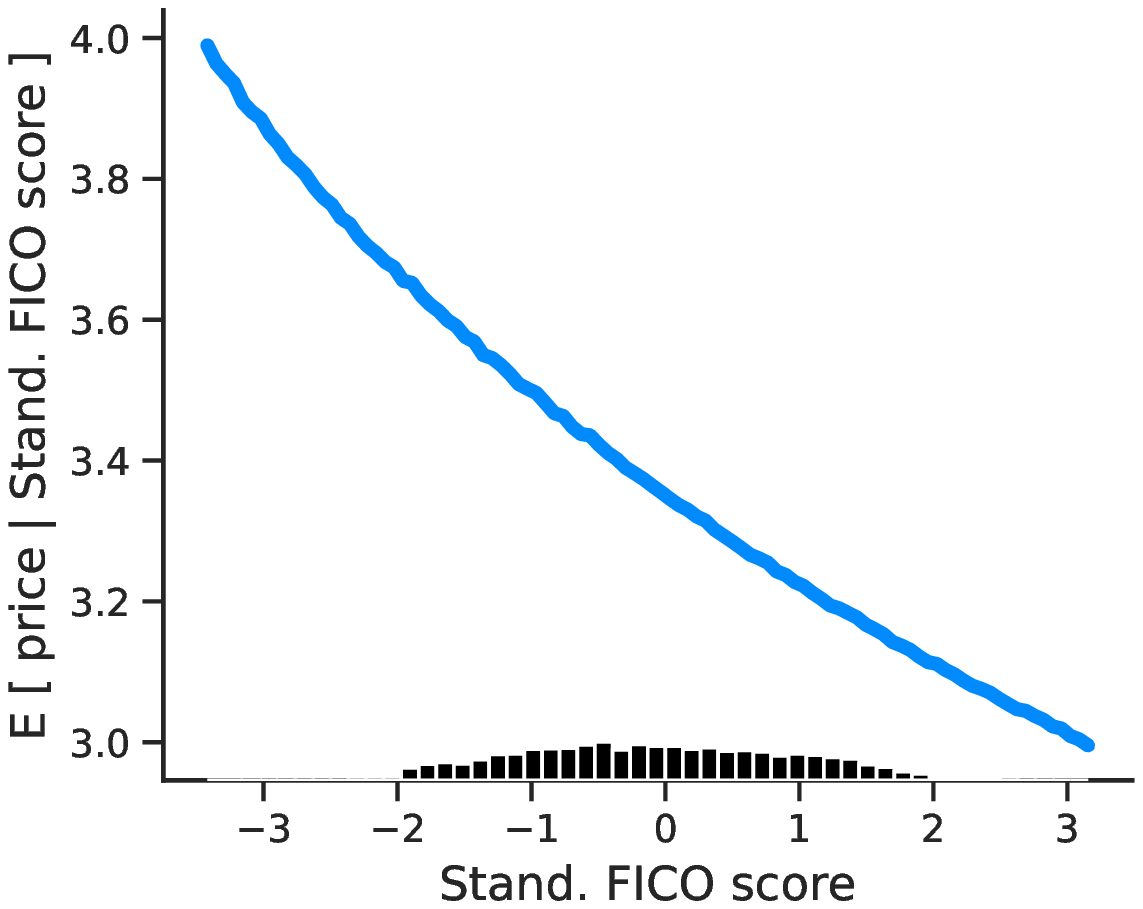}
      \caption{FICO score}
    \end{subfigure}
    \begin{subfigure}[b]{0.4\textwidth}
      \centering
      \includegraphics[width=\textwidth]{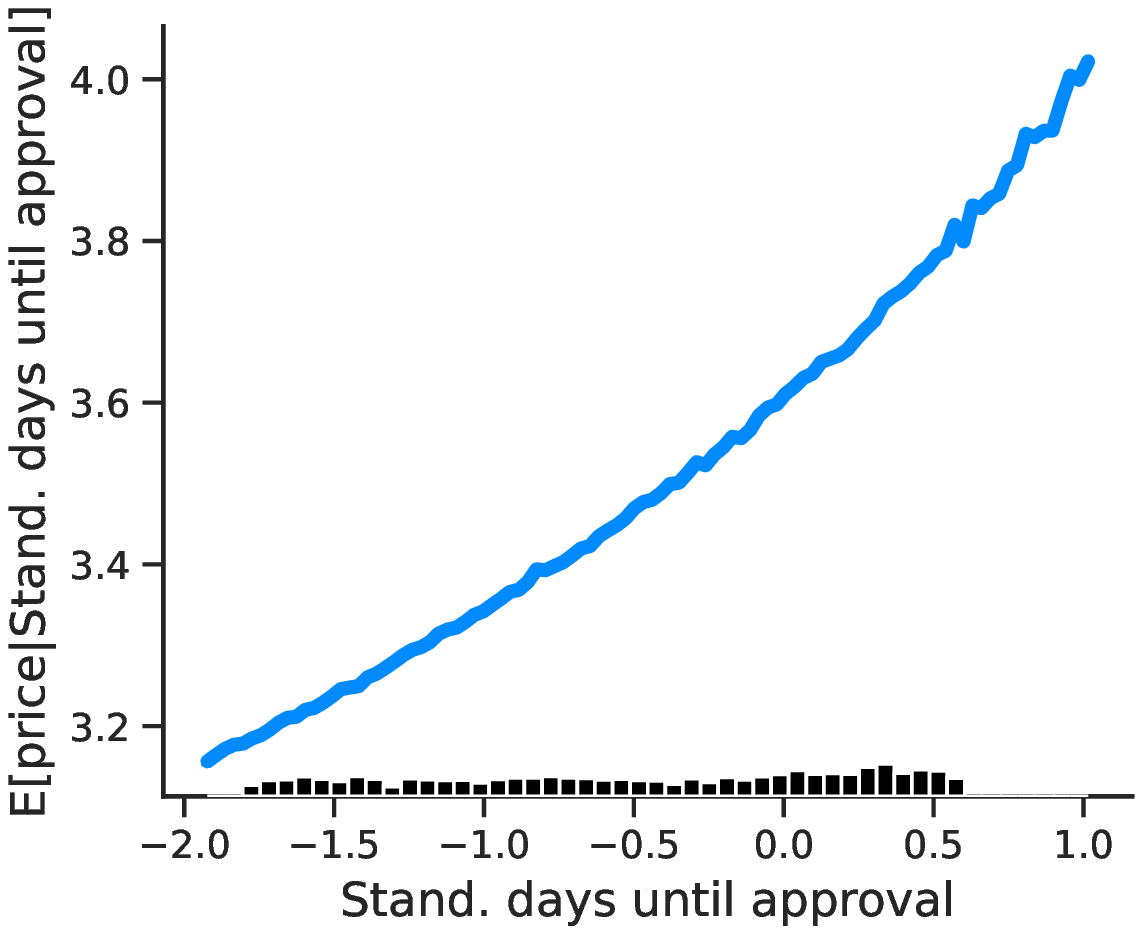}
      \caption{Days until approval}
    \end{subfigure}
    \caption{Partial dependence plots of by pricing strategy by PRINT on important factors. All factors are standardized by their mean and standard deviation, respectively.}
    \label{fig: data shapley}
\end{figure}

In Figure \ref{fig: data price distribution}, we provide the density plots of the prices by the five pricing policies on the testing dataset. It is clear that the  distribution of prices suggested by PRINT is the closest to the oracle optimal policy.

To study the interpretability of the policy learned by PRINT, in Figure \ref{fig: data shapley}, we provide the partial dependence plots of the learned policy on the four  most important features. First, as shown in Figure \ref{fig: data shapley}(a) and (b), the price increases with LIBOR and term, which agrees with the definition of the price in \eqref{eq: data price def}. Second, Figure \ref{fig: data shapley}(c) shows that for customers with higher FICO scores, we suggest lower prices as those customers have low risks of bad debts. Finally, as shown in Figure \ref{fig: data shapley}(d), our policy tends to set higher prices for the applications with longer processing times, this is likely because those applications have potential risks that require longer review processes.


\section{Conclusions}
\label{sec: conclusion}

In this paper, we study offline personalized pricing under endogeneity by
leveraging an instrumental variable. The key challenges are (a)
identification of the heterogeneous effect of continuous price on revenue under unmeasured
confounding; (b) a possibly invalid IV that may violate the exclusion restriction; (c) solving conditional moment restrictions of generalized residual functions. For (a) and (b), we generalized the
identification results in causal inference literature on relaxing exclusion
restriction for discrete treatment to continuous treatment. For (c), we develop
an adversarial min-max algorithm for learning the optimal pricing strategy.
Theoretically, we established the consistency and the convergence rate of the
proposed policy learning algorithm. For future work, it is interesting to extend our work to
learning a multi-stage policy strategy with offline data under endogeneity with
invalid IVs.

\bibliographystyle{ecta-fullname}
\bibliography{reference}

\appendix

\section{A Detailed Practical Algorithm}
In this section, we give a detailed practical implementation of Algorithm
\ref{alg: min-max} when the function classes $\mH$ and $\mF$ are neural
networks (NNs), in which the functions can be parameterized by some NN parameters.
Specifically, suppose that $\mH = \{ \alpha_{\theta_{\alpha}}(\bullet) = \alpha(\bullet;\theta_{\alpha}): \theta_{\alpha}\in\Theta_{\mH} \}$
and that $\mF = \{ f_{\theta_f}(\bullet)=f(\bullet;\theta_f): \theta_f\in\Theta_{\mF} \}$ for some Euclidean subspaces
$\Theta_{\mH}$ and $\Theta_{\mF}$. In the case of a large sample size, one can apply the stochastic
gradient ascent/descent with sub-sampled mini-batch data from the total batch data
$\mD_n$ in each iteration in Algorithm~\ref{alg: min-max}. Instead of updating
$\wh{f}^{(k)}$ and $\wh{\alpha}^{(k)}$ by the respective optimization points, we can
apply one-step update in each iteration. See details in Algorithm \ref{alg:
  detailed min-max}.

\begin{algorithm}
  \caption{Estimation of $\wh{\pi}$ by solving a zero-sum game.}
  \label{alg: detailed min-max}

  \textbf{Input:} Batch data
  $\mD_n = \left\{ Z_i=(Y_i,X_i,G_i,P_i) \right\}_{i=1}^n$. Initial
  $\wh{\theta}_f^{(-1)},\wh{\theta}_{\alpha}^{(0)}$. Tuning
  parameters $\lambda_{n_B},\mu_{n_B}>0$. Price range $[p_1,p_2]$. Maximum iteration $K$. Batch
  size $1\leq n_B \leq n$. Step sizes $\gamma_{\alpha},\gamma_f>0.$\\
  \textbf{For} $k\in \left\{ 0,\dots, K-1 \right\}$:\\
  \Indp
  Randomly sample a  mini-batch $D_I = \left\{ Z_i \right\}_{i\in I}$ with $|I| = n_B$.
  $$\wh{\theta}_f^{(k)} \leftarrow \wh{\theta}_f^{(k)}+\gamma_f \nabla_{\theta_f}\left[ \Psi_{n_B}(\alpha_{\wh{\theta}_{\alpha}^{(k)}},f_{\wh{\theta}_f^{(k-1)}}) - \norm{f_{\wh{\theta}_f^{(k-1)}}}_{\alpha_{\wh{\theta}_{\alpha}^{(k)}},n_B}^2 - \lambda_{n_B}\nmF{f_{\wh{\theta}_f^{(k-1)}}}^2 \right];$$\\
  $\wh{\theta}_{\alpha}^{(k+1)}\leftarrow \wh{\theta}_{\alpha}^{(k)} - \gamma_{\alpha}\nabla_{\theta_{\alpha}} \left[ \Psi_{n_B}(\alpha_{\wh{\theta}_{\alpha}^{(k)}}, f_{\wh{\theta}_f^{(k)}}) + \mu_{n_B}\nmH{\alpha_{\wh{\theta}_{\alpha}^{(k)}}}^2 \right]$;\\
  \Indm
  \textbf{End For}\\
  \textbf{Output:} Pricing policy
  $\wh{\pi} = \min\{\max\{p_1, - \wh{\beta}_{p,1}^{(K)}/[2\wh{\beta}_{p,2}^{(K)}]\}, p_2\}$.
\end{algorithm}

\section{Details of Numerical Study} \label{sec: simulation details}
Following the data generating procedure in Section \ref{sec: simulation},
the ground truth nuisance functions of policy are
\begin{align*}
\wt{\beta}_{p,1}(X) & = \EE \left[ \beta_{p,1}(U,X)\given X \right] = -c_1^{\top}X + (\mu_{u_1}+c_{u_1}^{\top}X)^2 + \sigma_{u_1}^2,\\
\wt{\beta}_{p,2}(X) & = \EE \left[ \beta_{p,2}(U,X)\given X \right] = -\exp\left\{\mu_{u_1}+\sigma_{u_1}^2/2+(c_2+c_{u_1})^{\top}X\right\}.
\end{align*}
Then the oracle optimal pricing policy is
\begin{align*}
\pi^{*}(X) &=\min \left\{p_2, \max \left\{p_1, -\wt{\beta}_{p,1}(X)/\wt{\beta}_{p,2}(X) \right\} \right\} \\
&= \min \left\{p_2, \max \left\{p_1, \frac{-c_1^{\top} X + (\mu_{u_1}+c_{u_1}^{\top}X)^2 + \sigma_{u_1}^2}{\exp \left\{ \mu_{u_1} + \sigma_{u_1}^2/2 + (c_2+c_{u_1})^{\top}X \right\} } \right\} \right\}.
\end{align*}

All parameters used for the data generating procedure are listed in Table \ref{tab: parameters}.

\begin{table}
  \caption{Simulation data parameters.}
  \label{tab: parameters}
  \begin{tabular}{c|ccccc}
	\toprule
	Parameter & $\mu_x$              & $\Sigma_x$              & $\mu_g$  & $\mu_{u_1}$ & $\mu_{u_2}$ \\
	\midrule
	Value     & $[0.25, 0.25]^{\top}$ & $\text{diag}(1,1)$ & $2$    & $0.5$    & $0.3$ \\
	\midrule
	\midrule
	Parameter & $c_g$              & $c_{u_1}$          & $c_{u_2}$ & $\sigma_g^2$ & $\sigma_{u_1}^2$ \\
	\midrule
	Value     & $[0.25,0.25]^{\top}$  & $[0.3,0.4]^{\top}$    & $[0.2,0.2]^{\top}$ & 1 & 3 \\
	\midrule
	\midrule
    Parameter & $\sigma_{u_2}^2$        & $c_1$              & $c_2$     & $c_3$   & $c_4$ \\
	\midrule
    Value     & 1                 & $[0.3,0.2]^{\top}$    & $[0.1, -0.3]^{\top}$ & $[0.2,-0.1]^{\top}$ & 1 or 5\\
	\midrule
	\midrule
    Parameter & $c_5$             & $c_6$              & $c_7$      &         & \\
	\midrule
    Value     & $[0.4,0.1]^{\top}$   & $[1.2, 0.4]^{\top}$   & 1 or 5     &         & \\
	\bottomrule
  \end{tabular}
\end{table}


\section{Technical Proofs}
\subsection{Proof of Identification Results.}
\begin{proof}[Proof of Lemma \ref{lm: identification eq 1}.]
	Without loss of generality, we assume $X = \emptyset$. By direct calculation, we have
	\begin{align*}
		&\EE\left[(G - \EE\left[G\right])(P - \EE\left[P \given G\right])Y\right]\\
		&\ = 
		\EE\left[(G - \EE\left[G\right])(P - \EE\left[P \given G \right])\left\{P\beta_{p, 1}(U) +P^2\beta_{p, 2}(U) + \beta_g(U, G) + \beta_u(U) \right\} \right]\\
		&\ =   \EE\left[(G - \EE\left[G\right])(P - \EE\left[P \given G\right])P\right] \times \EE\left[\beta_{p, 1}(U)\right] \\
	&\quad	+  \EE\left[(G - \EE\left[G\right])(P - \EE\left[P \given G\right])P\left(\beta_{p, 1}(U) - \EE\left[\beta_{p, 1}(U)\right]\right)\right] \tag{L1.A}\label{l1: A} \\
	&\quad	+   \EE\left[(G - \EE\left[G\right])(P - \EE\left[P \given G\right])P^2\right] \times \EE\left[\beta_{p, 2}(U)\right] \\
	&\quad	+  \EE\left[(G - \EE\left[G\right])(P - \EE\left[P \given G \right])P^2 \left(\beta_{p, 2}(U) - \EE\left[\beta_{p, 2}(U)\right]\right)\right]\tag{L1.B}\label{l1: B}\\
	&\quad	+  \EE\left[(G - \EE\left[G\right])(P - \EE\left[P \given G\right])\left\{\beta_g(U, G) + \beta_u(U) \right\} \right]\tag{L1.C}\label{l1: C}.
	\end{align*}
 It suffices to prove that \eqref{l1: A} -- \eqref{l1: C} are zeros.  By direct calculations,
	\begin{align*}
		\text{\eqref{l1: A}}  &=\EE\left[(G - \EE\left[G\right])(P - \EE\left[P \given G\right])P\left(\beta_{p, 1}(U) - \EE\left[\beta_{p, 1}(U)\right]\right)\right]  \\
		& \ = \EE\left[(G - \EE\left[G\right])P^2\left(\beta_{p, 1}(U) - \EE\left[\beta_{p, 1}(U)\right]\right)\right]  \\
	  &\quad	- \EE\left[(G - \EE\left[G\right])\EE\left[P\given G\right]P\left(\beta_{p, 1}(U) - \EE\left[\beta_{p, 1}(U)\right]\right)\right]\\
		& = \EE\left[(G - \EE\left[G\right])\Cov(\beta_{p, 1}(U), \EE\left[P^2 \given U, G\right]) \given G\right]  \\
	&\quad	- \EE\left[(G - \EE\left[G\right])\EE\left[P \given G\right]\left(\alpha_g(U, G) + \alpha_{u, x}(U) \right)\left(\beta_{p, 1}(U) - \EE\left[\beta_{p, 1}(U)\right]\right)\right]\\
	&\	=\EE\left[(G - \EE\left[G\right])\EE\left[P \given G\right]\Cov\left(\alpha_g(U, G), \beta_{p, 1}(U) \given G\right)\right] \\
	&\quad	+\EE\left[(G - \EE\left[G\right])\EE\left[P \given G\right]\Cov\left(\alpha_{u, x}(U), \beta_{p, 1}(U) \given G\right)\right] \\
	&\	=  0.
	\end{align*}
	Similarly, we can show that
	\begin{align*}
		\text{\eqref{l1: B}} & = \EE\left[(G - \EE\left[G\right])(P - \EE\left[P \given G\right])P^2\left(\beta_{p, 2}(U) - \EE\left[\beta_{p, 2}(U)\right]\right)\right]  \\
		&\ = \EE\left[(G - \EE\left[G\right])P^3\left(\beta_{p, 2}(U) - \EE\left[\beta_{p, 2}(U)\right]\right)\right]  \\
		&\quad - \EE\left[(G - \EE\left[G\right])\EE\left[P \given G\right]P^2\left(\beta_{p, 2}(U) - \EE\left[\beta_{p, 2}(U)\right]\right)\right]\\
		&\ = \EE\left[(G - \EE\left[G\right])\Cov(\beta_{p, 2}(U), \EE\left[P^3 \given U, G\right]) \given G\right]  \\
		&\ =  0.
	\end{align*}
	Lastly, we show that
	\begin{align*}
		\text{\eqref{l1: C}} &= \EE\left[(G - \EE\left[G\right])(P - \EE\left[P \given G\right])\left\{\beta_g(U, G) + \beta_u(U) \right\} \right] \\
		& \ = \EE\left[(G - \EE\left[G\right])(\alpha_g(U, G)- \EE\left[\alpha_g(U, G)\given G\right])\left\{\beta_g(U, G) + \beta_u(U) \right\} \right] \\
		& \quad + \EE\left[(G - \EE\left[G\right])(\alpha_{u, x}(U)- \EE\left[\alpha_{u, x}(U)\right])\left\{\beta_g(U, G) + \beta_u(U) \right\} \right] \\
	  &	= \EE\left[(G - \EE\left[G\right])\left\{\Cov(\alpha_g(U, G), \beta_g(U, G) \given G) + \Cov(\alpha_g(U, G), \beta_u(U) \given G) \right\} \right] \\
		 & \quad + \EE\left[(G - \EE\left[G\right])\left\{\Cov(\alpha_{u, x}(U), \beta_g(U, G) \given G) + \Cov(\alpha_{u, x}(U)), \beta_u(U) \given G) \right\} \right] \\
		& \ =  0.
	\end{align*}
\end{proof}

\begin{proof}[Proof of Lemma \ref{lm: identification eq 2}.]
	Without loss of generality, we assume $X = \emptyset$. By direct calculation, we have
	\begin{align*}
		&\EE\left[G(G - \EE\left[G\right])(P - \EE\left[P \given G\right])Y\right]\\
		& \ = 
		\EE\left[G(G - \EE\left[G\right])(P - \EE\left[P \given G \right])\left\{P\beta_{p, 1}(U) +P^2\beta_{p, 2}(U) + \beta_g(U, G) + \beta_u(U) \right\} \right]\\
		& \ =   \EE\left[G(G - \EE\left[G\right])(P - \EE\left[P \given G\right])P\right] \times \EE\left[\beta_{p, 1}(U)\right] \\
		& \quad +  \EE\left[G(G - \EE\left[G\right])(P - \EE\left[P \given G\right])P\left(\beta_{p, 1}(U) - \EE\left[\beta_{p, 1}(U)\right]\right)\right] \tag{L2.A}\label{l2: A}  \\
		& \quad +   \EE\left[(G - \EE\left[G\right])(P - \EE\left[P \given G\right])P^2\right] \times \EE\left[\beta_{p, 2}(U)\right] \\
	  & \quad 	+  \EE\left[G(G - \EE\left[G\right])(P - \EE\left[P \given G \right])P^2 \left(\beta_{p, 2}(U) - \EE\left[\beta_{p, 2}(U)\right]\right)\right] \tag{L2.B} \label{l2: B}\\
	& \quad	+  \EE\left[G(G - \EE\left[G\right])(P - \EE\left[P \given G\right])\left\{\beta_g(U, G) + \beta_u(U) \right\} \right]\tag{L2.C} \label{l2: C}.
	\end{align*}
	Note that
	\begin{align*}
		\text{\eqref{l2: A}} &=\EE\left[G(G - \EE\left[G\right])(P - \EE\left[P \given G\right])P\left(\beta_{p, 1}(U) - \EE\left[\beta_{p, 1}(U)\right]\right)\right]  \\
		& \ = \EE\left[G(G - \EE\left[G\right])P^2\left(\beta_{p, 1}(U) - \EE\left[\beta_{p, 1}(U)\right]\right)\right]  \\
		& \quad - \EE\left[G(G - \EE\left[G\right])\EE\left[P \given G\right]P\left(\beta_{p, 1}(U) - \EE\left[\beta_{p, 1}(U)\right]\right)\right]\\
		& \ = \EE\left[G(G - \EE\left[G\right])\Cov(\beta_{p, 1}(U), \EE\left[P^2 \given U, G\right]) \given G\right]  \\
		& \quad - \EE\left[G(G - \EE\left[G\right])\EE\left[P \given G\right]\left(\alpha_g(U, G) + \alpha_{u, x}(U) \right)\left(\beta_{p, 1}(U) - \EE\left[\beta_{p, 1}(U)\right]\right)\right]\\
		& \ =\EE\left[G(G - \EE\left[G\right])\EE\left[P \given G\right]\Cov\left(\alpha_g(U, G), \beta_{p, 1}(U) \given G\right)\right] \\
		& \quad +\EE\left[G(G - \EE\left[G\right])\EE\left[P \given G\right]\Cov\left(\alpha_{u, x}(U), \beta_{p, 1}(U) \given G\right)\right] \\
		& \ =  0.
	\end{align*}
	Similarly, we can show that
	\begin{align*}
		\text{\eqref{l2: B}} &=\EE\left[G(G - \EE\left[G\right])(P - \EE\left[P \given G\right])P^2\left(\beta_{p, 2}(U) - \EE\left[\beta_{p, 2}(U)\right]\right)\right]  \\
		&\ = \EE\left[G(G - \EE\left[G\right])P^3\left(\beta_{p, 2}(U) - \EE\left[\beta_{p, 2}(U)\right]\right)\right]  \\
		& \quad - \EE\left[G(G - \EE\left[G\right])\EE\left[P \given G\right]P^2\left(\beta_{p, 2}(U) - \EE\left[\beta_{p, 2}(U)\right]\right)\right]\\
		& \ = \EE\left[G(G - \EE\left[G\right])\Cov(\beta_{p, 2}(U), \EE\left[P^3 \given U, G\right]) \given G\right]  \\
		& =  0.
	\end{align*}
	Lastly, we can show that
	\begin{align*}
		\text{\eqref{l2: C}} &= \EE\left[G(G - \EE\left[G\right])(P - \EE\left[P \given G\right])\left\{\beta_g(U, G) + \beta_u(U) \right\} \right] \\
		& \ = \EE\left[G(G - \EE\left[G\right])(\alpha_g(U, G)- \EE\left[\alpha_g(U, G)\given G\right])\left\{\beta_g(U, G) + \beta_u(U) \right\} \right] \\
		&\quad + \EE\left[G(G - \EE\left[G\right])(\alpha_{u, x}(U)- \EE\left[\alpha_{u, x}(U)\right])\left\{\beta_g(U, G) + \beta_u(U) \right\} \right] \\
		&\ = \EE\left[G(G - \EE\left[G\right])\left\{\Cov(\alpha_g(U, G), \beta_g(U, G) \given G) + \Cov(\alpha_g(U, G), \beta_u(U) \given G) \right\} \right] \\
		&\quad + \EE\left[G(G - \EE\left[G\right])\left\{\Cov(\alpha_{u, x}(U), \beta_g(U, G) \given G) + \Cov(\alpha_{u, x}(U), \beta_u(U) \given G) \right\} \right] \\
		& \ =  \EE\left[G(G - \EE\left[G\right])\right]\Cov(\alpha_{u, x}(U), \beta_u(U)).
	\end{align*}
	Summarizing the first equation with \eqref{l2: A} -- \eqref{l2: C} together, we have that
 \begin{equation}
     \label{l2: eq1}
	\begin{aligned}
		&\EE\left[G(G - \EE\left[G\right])(P - \EE\left[P \given G \right])Y\right]\\
		& \ =  \EE\left[G(G - \EE\left[G \right])(P - \EE\left[P \given G\right])P\right] \times \EE\left[\beta_{p, 1}(U)\right] \\
		&\quad +  \EE\left[G(G - \EE\left[G \given X\right])(P - \EE\left[P \given G \right])P^2\right] \times \EE\left[\beta_{p, 2}(U)\right] \\
		&\quad +    \EE\left[G(G - \EE\left[G\right])\right]\Cov(\alpha_{u, x}(U), \beta_u(U)).
	\end{aligned}
 \end{equation}

	Next, we derive that
 \begin{equation}
 \label{l2: eq2}
	\begin{aligned}
		&\EE\left[(P - \EE\left[P \given G\right])Y\right]\\
		& \ = \EE\left[(P - \EE\left[P \given G\right])\left\{\beta_{p, 1}(U)P + \beta_{p, 2}(U)P^2  + \beta_{g}(U, G) + \beta_u(U)  \right\} \right] \\
		&\ = \EE\left[P(P - \EE\left[P \given G \right])\right] \times \EE\left[\beta_{p, 1}(U)\right]  + \EE\left[P^2(P - \EE\left[P \given G \right])\right] \times \EE\left[\beta_{p, 2}(U)\right] \\
		 & \quad +  \Cov(\alpha_u(U), \beta_u(U)).
	\end{aligned}
 \end{equation}
	By substituting \eqref{l2: eq2} into \eqref{l2: eq1}, we obtain that
	\begin{align*}
		&\EE\left[G(G - \EE\left[G\right])(P - \EE\left[P \given G\right])Y\right]\\
		&\ = \EE\left[G(G - \EE\left[G \right])(P - \EE\left[P \given G\right])P \right] \times \EE\left[\beta_{p, 1}(U)\right]\\
	  &\quad +\EE\left[G(G - \EE\left[ G\right])(P - \EE\left[P \given G\right])P^2\right] \times \EE\left[\beta_{p, 2}(U)\right]  \\
		 &\quad  +  \EE\left[G(G - \EE\left[G \right])\right]\\
		&\qquad  \times \left\{\EE\left[(P - \EE\left[P \given G\right])Y\right] -   \EE\left[P(P - \EE\left[P \given G \right])\right] \times \EE\left[\beta_{p, 1}(U)\right]\right.\\
	   &\qquad\qquad\qquad\qquad\qquad\qquad \left.- \EE\left[P^2(P - \EE\left[P \given G \right])\right] \times \EE\left[\beta_{p,2}(U)\right] \right\}\\
		&\ = \Cov(G(G - \EE\left[G \right]), P(P - \EE\left[P \given G \right]))\EE\left[\beta_{p, 1}(U)\right] \\
	  &\quad + \Cov(G(G - \EE\left[G \right]), P^2(P - \EE\left[P \given G \right]))\EE\left[\beta_{p, 2}(U)\right]\\
		 &\quad + \EE\left[G(G - \EE\left[G \right])\right] \times \EE\left[(P - \EE\left[P \given G\right])Y\right],
	\end{align*}
	which concludes our proof.
\end{proof}

\subsection{Proof of Estimation and Policy Learning}

\subsubsection{Proof of Consistency}
\hfill
\begin{proof}[Proof of Lemma \ref{lem: consistency}]
  First, by Lemma \ref{lem: bounded penalty}, $\nmH{\wh{\alpha}_n} = \bigOp(1)$. We
  consider
\begin{equation}
  \label{l5: eq1}
\prob \left[ \norm{\halpha_n - \alpha_0}_{2,2} > \epsilon \right] \leq \prob \left[ \norm{\halpha_n-\alpha_0}_{2,2} > \epsilon, \nmH{\halpha_n}^2\leq M_0 \right] + \prob \left[ \nmH{\halpha_n}^2 > M_0 \right].
\end{equation}
For any $b>0$, we select $M_0 \triangleq M_0(b)$ such that
$\prob \left[ \nmH{\halpha_n}^2 > M_0 \right] < b$ for sufficiently large~$n$ (by
Lemma \ref{lem: bounded penalty}). We only need to focus on the set
$\mH_{M_0} \triangleq \left\{ \alpha\in\mH: \nmH{\alpha}^2 \leq M_0 \right\}$. For the first part on
the RHS of the inequality \eqref{l5: eq1},
\begin{equation}
  \label{l5: eq2}
\begin{aligned}
  &\prob \left[ \norm{\halpha_n - \alpha_0}_{2,2} >\epsilon, \nmH{\halpha_n}^2 \leq M_0 \right]\\
  &\ = \prob \left[ \norm{\halpha_n - \alpha_0}_{2,2} >\epsilon, \nmH{\halpha_n}^2 \leq M_0, \Phihat_n(\halpha_n)\leq \inf_{\alpha\in\mH}\Phihat_n(\alpha) +\bigOp(\eta_n) \right] \\
  &\ \leq \prob \left[  \inf_{\alpha\in\mH_{M_0}: \norm{\alpha-\alpha_0}_{2,2}>\epsilon} \Phihat_n (\alpha) \leq  \inf_{\alpha\in\mH}\Phihat_n(\alpha) +\bigOp(\eta_n) \right]\\
  &\ \leq \prob \left[ \inf_{\alpha\in\mH_{M_0}: \norm{\alpha-\alpha_0}_{2,2}>\epsilon} \Phihat_n (\alpha) \leq \Phihat_n(\alpha_0) + \bigOp(\eta_n) \right]\\
  &\ \leq \prob
	\begin{bmatrix*}[r]
	  \min \left\{ \frac{\delta_n}{2 \sqrt{c_{\eta_{\Sigma}}}} \sqrt{\inf_{\alpha\in\mH_{M_0}: \norm{\alpha-\alpha_0}_{2,2} > \epsilon}\Phi(\alpha)}, \inf_{\alpha\in\mH_{M_0}: \norm{\alpha-\alpha_0}_{2,2} > \epsilon}\frac{\Phi(\alpha)}{c_{\eta_{\Sigma}}}\right\}\\
	  \leq \bigOp(\max \left\{ \delta_n^2,  \eta_n \right\})
	\end{bmatrix*}
\end{aligned}
\end{equation}
where the first equality is due to Assumption \ref{ass: sample criterion}, the
first inequality follows by the definition of $\wh{\alpha}_n$ and relaxing conditions
of the event, the second inequality is due to optimality, and the last
inequality follows by Lemma \ref{lem: relating Phi_n and Phi}.

Let $\varphi_0(\epsilon) = \inf_{\alpha\in\mH_{M_0}: \norm{\alpha-\alpha_0}_{2,2} > \epsilon} \sqrt{\Phi(\alpha)}$, which is strictly
positive since $\norm{\alpha_0-\alpha}_{2,2}=0$ only if $\Phi(\alpha)=0$ for $\alpha\in\mH$. Then, with
the Assumption \ref{ass: parameters} that $\eta_n\asymp \delta_n^2$, we have that
\begin{align*}
  &\prob \left[ \norm{\halpha_n - \alpha_0}_{2,2} >\epsilon, \nmH{\halpha_n}^2 \leq M_0 \right]\\
  & \leq \prob  \left[\min \left\{ \frac{\delta_n}{2 \sqrt{c_{\eta_{\Sigma}}}} \varphi_0(\epsilon), \frac{\varphi_0(\epsilon)^2}{c_{\eta_{\Sigma}}}\right\} \leq \bigOp(\delta_n^2) \right] \rightarrow 0, \text{
	as } n\rightarrow\infty.
\end{align*}
Then by letting $b \downarrow 0$, we have that
$\prob \left[ \norm{\halpha_n - \alpha_0}_{2,2} > \epsilon \right] \rightarrow 0$ for any $\epsilon>0$.
\end{proof}
\subsubsection{Proof for Convergence Rates}
\begin{proof}[Proof of Theorem \ref{thm: convergence rate}.]
  For all $M>1$, suppose $\left\{ r_n \right\}$ is a sequence of non-increasing positive
  numbers. Let $\wt{\mH} = \mH_{\alpha_0,M_0,\epsilon}$ for simplicity. By similar arguments
  of \eqref{l5: eq2} in the proof of Lemma \ref{lem: consistency},
\begin{equation}
\label{t2: eq1}
\begin{aligned}
  &  \prob \left[ \norm{\halpha_n - \alpha_0}_{ps,\alpha_0} \geq M r_n, \halpha_n\in\tmH \right]\\
  &\ \leq \prob \left[ \inf_{\alpha\in\tmH: \norm{\alpha-\alpha_0}_{ps,\alpha_0}\geq M r_n} \Phihat_n(\alpha) \leq \Phihat_n(\alpha_0) + \bigOp(\eta_n) \right]\\
  &\ \leq \prob
	\begin{bmatrix*}[r]
	  \min \left\{ \frac{\delta_n}{2 \sqrt{c_{\eta_{\Sigma}}}} \sqrt{\inf_{\alpha\in\tmH: \norm{\alpha-\alpha_0}_{ps,\alpha_0}\geq M r_n}\Phi(\alpha)}, \inf_{\alpha\in\tmH: \norm{\alpha-\alpha_0}_{ps,\alpha_0}\geq M r_n}\frac{\Phi(\alpha)}{c_{\eta_{\Sigma}}}\right\}\\
	  \leq \bigOp(\max \left\{ \delta_n^2,  \eta_n \right\})
	\end{bmatrix*}.
\end{aligned}
\end{equation}
By Assumption \ref{ass: local curvature} that
$\norm{\alpha - \alpha_0}_{ps,\alpha_0}^2 \leq c_{\text{curv}}\Phi(\alpha)$ for any $\alpha\in\wt{\mH}$, we have
\begin{align*}
\frac{\delta_n}{2 \sqrt{c_{\eta_{\Sigma}}c_{\text{curv}}}}  M r_n & \leq \frac{\delta_n}{2 \sqrt{c_{\eta_{\Sigma}}}} \sqrt{\inf_{\alpha\in\tmH: \norm{\alpha-\alpha_0}_{ps,\alpha_0}\geq M r_n}\Phi(\alpha)}, \text{
  and}\\
\frac{M^2r_n^2}{c_{\eta_{\Sigma}}c_{\text{curv}}} & \leq \inf_{\alpha\in\tmH: \norm{\alpha-\alpha_0}_{ps,\alpha_0}\geq M r_n}\frac{\Phi(\alpha)}{c_{\eta_{\Sigma}}}.
\end{align*}
Therefore, by taking $r_n\asymp \delta_n$, the probability in LHS of \eqref{t2: eq1} can be
arbitrarily small as $M\rightarrow\infty$. Hence,
\begin{align*}
\norm{\halpha_n - \alpha_0}_{ps,\alpha_0} & = \bigOp(\delta_n), \text{ and }\\
\norm{\wh{\alpha}_n-\alpha_0}_{2,2} & = \bigOp(\omega(\delta_n, \mH_{\alpha_0,M_0,\epsilon})),
\end{align*}
where we directly apply the definition of local modulus of continuity.
\end{proof}



\subsubsection{Proof of Regret Rates}
\begin{proof}[Proof of Theorem \ref{thm: regret bound}.]
By \eqref{eqn: optimal policy formulation} and \eqref{eqn: model-based optimal
  policy}, the regret can be bounded by
\begin{equation}
    \begin{aligned}
       \calV(\pi^*) - \calV(\pihat) &= \left(\EE^{\pi^*} - \EE^{\pihat}\right) \left\{\EE [Y\given X,G,P]\right\}\\
       &= \EE\{\beta_{p,1}(X)[\pi^*(X) - \pihat(X)] + \beta_{p,2}(X)[\pi^*(X)^2 - \pihat(X)^2]\}\\
       &= \EE \left( \{\beta_{1,p}(X) +\beta_{2,p}(X)[\pi^*(X) + \pihat(X)]\}[\pi^*(X) - \pihat(X)] \right)\\
       &\leq \norm{\beta_{1,p} + \beta_{2,p}[\pi^*+\pihat]}_2 \norm{\pi^* - \pihat}_2\\
       &\leq (\norm{\beta_{1,p}}_2 + 2p_2\norm{\beta_{2,p}}_2) \norm{\pi^* - \pihat}_2\\
       &= C_{\beta}\norm{\pi^* - \pihat}_2,
    \end{aligned}
\end{equation}
where $C_{\beta} \triangleq \norm{\beta_{1,p}}_2 + 2p_2\norm{\beta_{2,p}}_2 > 0$, the first
inequality is due to the Cauchy-Schwartz inequality, and the second inequality
is due to the triangle inequality and the upper bound of the policy price.

Since $-\beta_{p,2} \geq c_{p,2}$ for some constant $c_{p,2} >0$, we have that
\begin{align*}
  \norm{\pi^{*} - \wh{\pi}}_2 & =
\norm{\min\{\max\{p_1, - \wh{\beta}_{p,1}/[2\wh{\beta}_{p,2}]\}, p_2\} - \min\{\max\{p_1, - \beta_{p,1}/[2\beta_{p,2}]\}, p_2\}}_2\\
  & \lesssim \frac{\max\{1,p_2\}}{2c_{p,2}} \left( \norm{[\beta_{1,p} - \wh{\beta}_{1,p}, \beta_{2,p} - \wh{\beta}_{2,p}]}_{2,2}\right)\\
  & \lesssim \norm{\wh{\alpha}_n - \alpha_0}_{2,2},
\end{align*}
where the first inequality is due to the range of $\beta_{p,2}$ and that of policies
in $\Pi$, and the second inequality follows by the definition of $\alpha$.

The results follow by directly applying the definition of local modulus of continuity.
\end{proof}
\subsubsection{Proof of Results in RKHS cases}
\label{sec:convergence_rate_in_hilbert_norm}
\hfill


Then Corollary \ref{cor: RKHS pseudometric convergence rate} is a direct
consequence of Theorem \ref{thm: convergence rate} by Example 1 in \citet{miao2022off} and Lemma \ref{lem:criticalradRKHS}.

\begin{proof}[Proof of Corollary \ref{cor: RKHS convergence rate}.]
  Corollary \ref{cor: RKHS convergence rate} can be obtained by directly
  applying Lemma \ref{lem: modulus of continuity} and Corollary \ref{cor: RKHS pseudometric convergence rate}.

  Specifically, for part (a), i.e., the mild ill-posed case: given $\lambda_m(K_{\mH}) \sim m^{-2\gamma_{\mH}}$ and
  $\tau_m \sim m^{-2b}$, then the optimum $m_{*}$ that solves
\begin{align*}
 \min_{m\in \NN_+} \frac{\eta^2}{\tau_m} + M_0 \left[ 2c \sqrt{\sum_{i=1}^{\infty}\lambda_i(K_{\mH})} \sqrt{\sum_{j=m+1}^{\infty}\lambda_j(K_{\mH})} +\lambda_{m+1}(K_{\mH}) \right],
\end{align*}
is such that $m_{*} \sim \left( \eta^2\right)^{-\frac{1}{2(\gamma_{\mH}-1/2+b)}}$, and
$\omega(\eta, \mH_{M_0})^2 \sim \left( \eta^2 \right)^{\frac{\gamma_{\mH}-1/2}{\gamma_{\mH}-1/2+b}}$.

Thus, by Corollary \ref{cor: RKHS pseudometric convergence rate} and Theorem
\ref{thm: convergence rate}, we have with probability at least $1-\zeta$,
\begin{align*}
\norm{\alpha - \alpha_0}_{2,2} & \lesssim \omega(n^{-\frac{1}{2+1/\min(\gamma_{\mH},\gamma_{\mF})}}\log(n) + \sqrt{ \frac{\log(1/\zeta)}{n}},\mH_{\alpha_0,M_0,\epsilon})\\
& \lesssim \omega(n^{-\frac{1}{2+1/\min(\gamma_{\mH},\gamma_{\mF})}}\log(n) + \sqrt{
\frac{\log(1/\zeta)}{n}},\mH_{M_0})\\
  & \lesssim \left( n^{-\frac{1}{2+1/\min(\gamma_{\mH},\gamma_{\mF})}}\log(n)  + \sqrt{ \frac{\log(1/\zeta)}{n}}\right)^{ \frac{\gamma_{\mH}-1/2}{\gamma_{\mH}-1/2+b}}.
\end{align*}
Then the result follows by letting $\zeta = 1/n$.
By applying Theorem \ref{thm: regret bound}, $\mV(\pi^{*}) - \mV(\wh{\pi}) = \bigOp \left( n^{-\frac{1}{2+1/\min(\gamma_{\mH},\gamma_{\mF})}}\log(n) \right)$.

  For part (b), the severe ill-posed case, given $\lambda_m(K_{\mH}) \sim m^{-2\gamma_{\mH}}$ and
  $\tau_m \sim e^{-m^{b}}$, we can similarly obtain that the optimum $m_{*}\sim [\log(1/\eta)]^{1/b}$, and
  $\omega(\eta, \mH_{M_0})^2 \sim \left[ \log (1/\eta) \right]^{-\frac{2\gamma_{\mH}-1}{b}}$.
Thus, by Corollary \ref{cor: RKHS pseudometric convergence rate} and Theorem
\ref{thm: convergence rate}, we have with probability at least $1-\zeta$,
\begin{align*}
\norm{\alpha - \alpha_0}_{2,2} & \lesssim \omega(n^{-\frac{1}{2+1/\min(\gamma_{\mH},\gamma_{\mF})}}\log(n) + \sqrt{ \frac{\log(1/\zeta)}{n}},\mH_{\alpha_0,M_0,\epsilon})\\
& \lesssim \omega(n^{-\frac{1}{2+1/\min(\gamma_{\mH},\gamma_{\mF})}}\log(n) + \sqrt{
\frac{\log(1/\zeta)}{n}},\mH_{M_0})\\
  & \lesssim \left( \log n \right)^{-\frac{\gamma_{\mH}-1/2}{2b}}.
\end{align*}
By applying Theorem \ref{thm: regret bound}, $\mV(\pi^{*}) - \mV(\wh{\pi}) = \bigOp \left( \left( \log n \right)^{-\frac{\gamma_{\mH}-1/2}{2b}} \right)$.
\end{proof}

\subsection{Additional Lemmas} \hfill

\begin{lemma}(Relating empirical and population regularization)
\label{lem: Relating empirical and population regularization}
For any given $\talpha\in\mH$, we have that with probability at least $1-\zeta$,
uniformly for all $f\in\mF$,
\begin{equation}
\label{eq: relating emp and pop L2}
\left| \norm{f}_{n,\talpha}^2 - \norm{f}_{\talpha}^2 \right| \leq \frac{1}{2} \norm{f}_{\talpha}^2 + \delta_n^2 \max \left\{ 1, \frac{\nmF{f}^2}{B} \right\},
\end{equation}
where $\delta_n = \bar{\delta}_n + \bigO(\sqrt{\frac{\log (1/\zeta)}{n}})$ and $\bar{\delta}_n$
upper bounds the critical radius of class $\mF_B\times W(\bullet;\talpha)$.
When $\lambda_n \geq \frac{2\delta_n^2}{B}$, with probability at least $1-\zeta$,
uniformly for all $f\in\mF$,
\begin{equation}
  \label{eq: empirical-population relationship}
 \lambda_n \nmF{f}^2 + \norm{f}_{n,\talpha}^2 \geq \frac{\lambda_n}{2}\nmF{f}^2 + \frac{1}{2} \norm{f}_{\talpha}^2 - \delta_n^2.
\end{equation}

\end{lemma}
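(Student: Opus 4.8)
The plan is to reduce the claim to a localized uniform law of large numbers for \emph{squared} $\mL^2$-norms. The key observation is that both $\norm{f}_{n,\talpha}^2$ and $\norm{f}_{\talpha}^2$ are, respectively, the empirical and population squared $\mL^2(P_X)$-norms of the scalar function $g_f(Z)\triangleq f(X)^{\top}W(Z;\talpha)$. So first I would introduce the function class $\mG_B\triangleq\{\,z\mapsto f(x)^{\top}W(z;\talpha):f\in\mF_B\,\}$, which is exactly the class $\mF_B\times W(\bullet;\talpha)$ appearing in the statement. Since $\norm{f}_{\infty,2}\le 1$ for $f\in\mF_B$ and $W(\bullet;\talpha)$ is uniformly bounded, every $g\in\mG_B$ is bounded by Cauchy--Schwarz, so up to a fixed rescaling we may treat $\mG_B$ as uniformly bounded by one, with star hull of critical radius at most $\bar\delta_n$ by the hypothesis of the lemma.

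Next I would invoke the standard localized concentration bound for squared norms over a star-shaped, uniformly bounded class (see, e.g., Chapter~14 of \citet{wainwright2019high}, or the companion lemmas in \citet{dikkala2020minimax,bennett2020variational}): with probability at least $1-\zeta$, taking $\delta_n=\bar\delta_n+\bigO(\sqrt{\log(1/\zeta)/n})$, one has $\abs{\norm{g}_{n,2}^2-\norm{g}_{2,2}^2}\le\tfrac12\norm{g}_{2,2}^2+\delta_n^2$ for every $g$ in the star hull of $\mG_B$, and in particular for every $g_f$ with $\nmF{f}^2\le B$; unwinding the definitions of $\norm{\bullet}_{n,\talpha}$ and $\norm{\bullet}_{\talpha}$, this is exactly \eqref{eq: relating emp and pop L2} on that range. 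To extend it to arbitrary $f\in\mF$ I would rescale: when $\nmF{f}^2>B$, put $r=\sqrt{B/\nmF{f}^2}\in(0,1)$, so that $rf\in\mF_B$ and $g_{rf}=r\,g_f\in\mG_B$; applying the displayed bound to $g_{rf}$ and dividing through by $r^2=B/\nmF{f}^2$ gives $\abs{\norm{f}_{n,\talpha}^2-\norm{f}_{\talpha}^2}\le\tfrac12\norm{f}_{\talpha}^2+\delta_n^2\nmF{f}^2/B$. Combining the two regimes yields \eqref{eq: relating emp and pop L2} with the stated factor $\max\{1,\nmF{f}^2/B\}$.

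Finally, \eqref{eq: empirical-population relationship} follows by elementary algebra on the event where \eqref{eq: relating emp and pop L2} holds: it gives $\norm{f}_{n,\talpha}^2\ge\tfrac12\norm{f}_{\talpha}^2-\delta_n^2\max\{1,\nmF{f}^2/B\}$, hence $\lambda_n\nmF{f}^2+\norm{f}_{n,\talpha}^2\ge\lambda_n\nmF{f}^2+\tfrac12\norm{f}_{\talpha}^2-\delta_n^2\max\{1,\nmF{f}^2/B\}$; if $\nmF{f}^2\le B$ the subtracted term equals $\delta_n^2$ and $\lambda_n\nmF{f}^2\ge\tfrac{\lambda_n}{2}\nmF{f}^2$, while if $\nmF{f}^2>B$ it equals $\delta_n^2\nmF{f}^2/B\le\tfrac{\lambda_n}{2}\nmF{f}^2$ precisely because $\lambda_n\ge 2\delta_n^2/B$, so in either case the right-hand side is at least $\tfrac{\lambda_n}{2}\nmF{f}^2+\tfrac12\norm{f}_{\talpha}^2-\delta_n^2$, which is the claimed bound. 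I expect the only genuinely delicate step to be the first one, namely obtaining the correct $\zeta$-dependence of $\delta_n$ and verifying that the critical radius of $\mF_B\times W(\bullet;\talpha)$ dominates that of the star hull of $\mG_B$ after normalization; the rescaling $f\mapsto rf$ and the threshold $\lambda_n\ge 2\delta_n^2/B$ are then routine.
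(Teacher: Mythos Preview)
Your proposal is correct and follows essentially the same route as the paper: invoke the localized uniform law for squared norms (Theorem~14.1 in \citet{wainwright2019high}) on the class $\mF_B\times W(\bullet;\talpha)$, extend to all of $\mF$ by the rescaling $f\mapsto f\sqrt{B}/\nmF{f}$, and then derive \eqref{eq: empirical-population relationship} by elementary algebra. The only cosmetic difference is that for the second display the paper uses the single inequality $\delta_n^2\max\{1,\nmF{f}^2/B\}\le \delta_n^2+\delta_n^2\nmF{f}^2/B$ in place of your two-case split, but the arguments are otherwise identical.
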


\begin{lemma}(Relating $\Psi_n$ and $\Psi$)
\label{lem: relating Psi and Psi_n}
\hfill
\begin{enumerate}[]
\item [(a)] For any fixed $\alpha \in\mH$, let
        $\delta_n = \bar{\delta}_n + \bigO(\sqrt{\frac{\log (1/\zeta)}{n}})$ with $\bar{\delta}_n$
        upper bounds the critical radius of the class $W(\cdot,\alpha)\times \mF_B$. Then with
        probability at least $1-\zeta$, uniformly for all $f\in\mF$,
\begin{equation*}
\left| \Psi_n(\alpha,f) - \Psi(\alpha,f) \right| \leq \bigO \left( \delta_n \norm{f}_{2,2} + \delta_n^2\max \left\{ 1, \frac{\nmF{f}}{\sqrt{B}} \right\}\right).
\end{equation*}

\item [(b)] Given $M_0>0$, uniformly for all $\alpha \in\mH_{M_0}\subset\mH$, let $\delta_n=\bar{\delta}_n + \bigO(\sqrt{\frac{\log (1/\zeta)}{n}})$ with $\bar{\delta}_n$ upper bounds the critical radius of the class
\begin{equation*}
\left\{ W(\cdot,\alpha)f(*): \alpha \in\mH_{M_0}, f\in\mF_B \right\}.
\end{equation*}
Then with probability at least $1-\zeta$, uniformly for all $f\in\mF$ and $h\in\mH_{M_0}$,
\begin{equation*}
  \left| \Psi_n(\alpha,f) - \Psi(\alpha,f) \right| \leq \bigO \left( \delta_n\norm{f}_{2,2} + \delta_n^2\max \left\{ 1, \frac{\nmF{f}^2}{B} \right\} \right).
\end{equation*}
\end{enumerate}
\end{lemma}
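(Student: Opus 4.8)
The plan is to write $\Psi_n(\alpha,f)-\Psi(\alpha,f)$ as the centered empirical average of the scalar random variable $g_{\alpha,f}(Z)\triangleq W(Z;\alpha)^{\top}f(X)$, to apply a localized uniform law of large numbers driven by the critical radius, and then to rescale (``peel'') so as to lift the bound from $\mF_B$ to all of $\mF$.

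\textbf{Part (a).} Fix $\alpha\in\mH$ and consider the scalar class $\mathcal{G}_B=\{g_{\alpha,f}:f\in\mF_B\}$. Since $W(\cdot,\alpha)\in[-1,1]^{d_W}$ and $\norm{f}_{\infty,2}\le 1$ for $f\in\mF_B$, each $g\in\mathcal{G}_B$ is uniformly bounded, and because $W(\cdot,\alpha)$ is a fixed bounded multiplier, the star-shaped localized Rademacher complexity of $\mathcal{G}_B$ is controlled, up to an absolute constant, by that of the interaction class $W(\cdot,\alpha)\times\mF_B$, whose critical radius is at most $\bar\delta_n$ by hypothesis. I would then invoke the standard localized concentration inequality for uniformly bounded, star-shaped function classes (e.g.\ \citealt{wainwright2019high}; cf.\ \citealt{dikkala2020minimax,bennett2020variational}): with probability at least $1-\zeta$, simultaneously for every $g\in\text{star}(\mathcal{G}_B)$,
\[
\Bigl|\tfrac{1}{n}\sum_{i=1}^{n}g(Z_i)-\EE[g(Z)]\Bigr|\ \le\ c\,\delta_n\bigl(\norm{g}_{2}+\delta_n\bigr),\qquad \delta_n=\bar\delta_n+\bigO\bigl(\sqrt{\log(1/\zeta)/n}\bigr).
\]
For $f\in\mF_B$ this is already the claimed estimate, since $\norm{g_{\alpha,f}}_2^2=\EE[(f(X)^{\top}W(Z;\alpha))^2]=\norm{f}_{\alpha}^2\lesssim\norm{f}_{2,2}^2$, where the last inequality uses that $\Sigma_{\alpha}(X)=\EE[W(Z;\alpha)W(Z;\alpha)^{\top}\given X]$ has entries in $[-1,1]$.

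To pass to general $f\in\mF$, set $t=\max\{1,\nmF{f}/\sqrt{B}\}$, so that $f/t\in\mF_B$ and $g_{\alpha,f/t}=g_{\alpha,f}/t\in\mathcal{G}_B$; applying the display to $g_{\alpha,f/t}$ and multiplying by $t$ gives, using $t\norm{g_{\alpha,f/t}}_2=\norm{g_{\alpha,f}}_2\lesssim\norm{f}_{2,2}$,
\[
\bigl|\Psi_n(\alpha,f)-\Psi(\alpha,f)\bigr|\ \le\ c\,\delta_n\bigl(\norm{g_{\alpha,f}}_2+t\,\delta_n\bigr)\ =\ \bigO\Bigl(\delta_n\norm{f}_{2,2}+\delta_n^{2}\max\{1,\nmF{f}/\sqrt{B}\}\Bigr),
\]
which is exactly part (a). Part (b) follows by the same argument, except that the fixed multiplier $W(\cdot,\alpha)$ is replaced by the family indexed by $\alpha\in\mH_{M_0}$, so the relevant class is $\{W(\cdot,\alpha)f(*):\alpha\in\mH_{M_0},\,f\in\mF_B\}$, whose critical radius is at most $\bar\delta_n$ by hypothesis; the localized inequality then holds uniformly over $\alpha\in\mH_{M_0}$ and $f\in\mF_B$, and the same rescaling extends it to all $f\in\mF$. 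The slightly larger factor $\max\{1,\nmF{f}^2/B\}$ recorded in (b) is just the (valid) relaxation $t\le t^{2}$ for $t\ge 1$.

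\textbf{Main obstacle.} The only nonroutine ingredient is the localized uniform concentration bound together with the verification that the product structure $W(Z;\alpha)^{\top}f(X)$ --- a bounded function of $Z$ times a function of $X$ --- does not inflate the critical radius beyond $\bar\delta_n$; this is a standard multiplier/contraction (Ledoux--Talagrand) argument, and in part (b) one additionally checks that indexing the multiplier over $\mH_{M_0}$ is harmless, which is precisely what the stated hypothesis on the critical radius of $\{W(\cdot,\alpha)f(*):\alpha\in\mH_{M_0},f\in\mF_B\}$ provides. Everything else --- uniform boundedness, the peeling step, upgrading the in-expectation Rademacher bound to a high-probability statement via a bounded-differences inequality (which contributes the $\sqrt{\log(1/\zeta)/n}$ term in $\delta_n$), and absorbing $\Sigma_\alpha$ into absolute constants --- is routine.
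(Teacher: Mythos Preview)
Your proposal is correct and follows essentially the same route as the paper: both apply a localized Rademacher concentration bound (the paper invokes Lemma 11 of \citet{foster2019orthogonal}, you cite the equivalent results in \citet{wainwright2019high,dikkala2020minimax,bennett2020variational}) to the scalar class $\{W(Z;\alpha)^{\top}f(X):f\in\mF_B\}$, use boundedness of $W$ to pass from $\norm{g_{\alpha,f}}_2$ to $\norm{f}_{2,2}$, and then rescale by $\sqrt{B}/\nmF{f}$ to extend from $\mF_B$ to all of $\mF$; for part (b) both arguments simply enlarge the function class to index over $\alpha\in\mH_{M_0}$ and appeal to the assumed critical radius of that larger class. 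Your observation that the factor $\max\{1,\nmF{f}^2/B\}$ in (b) is just the relaxation $t\le t^2$ of the tighter $\max\{1,\nmF{f}/\sqrt{B}\}$ delivered by the rescaling is also correct.
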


\begin{lemma}
  \label{lem: relating Phi_n and Phi}
  Suppose that Assumptions \ref{ass: identification and spaces} -- \ref{ass: parameters} are satisfied. With $\delta_n$ defined in Lemma \ref{lem: relating Psi and Psi_n},
\begin{enumerate}
  \item [(a)] At the ground truth $\alpha_0$, we have that
\begin{align*}
  \sup_{f\in\mF} \Psi_n(\alpha_0, f) - \norm{f}_{n,\talpha}^2 -\lambda_n\nmF{f}^2 \leq \Phi(\alpha_0) + \bigOp(\delta_n^2) = \bigOp(\delta_n^2).
\end{align*}
  \item [(b)]
We have that unformly for all $\alpha\in\mH_{M_0}$,
\begin{equation*}
\sup_{f\in\mF} \Psi_n(\alpha,f) - \norm{f}_{n,\talpha}^2 -\lambda_n\nmF{f}^2 \geq \min \left\{ \frac{\delta_n}{2 \sqrt{c_{\eta_{\Sigma}}}} \sqrt{\Phi(\alpha)}, \frac{\Phi(\alpha)}{c_{\eta_{\Sigma}}}\right\} -\bigOp(\delta_n^2).
\end{equation*}
\end{enumerate}
\end{lemma}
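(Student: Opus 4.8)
The plan is to prove the two bounds separately; in both I reduce the empirical objective $\Psi_n(\alpha,f)-\norm{f}_{n,\talpha}^2-\lambda_n\norm{f}_{\mF}^2$ to its population version by means of Lemmas \ref{lem: Relating empirical and population regularization} and \ref{lem: relating Psi and Psi_n}, and then carry out the maximization over $f$ explicitly, using the nondegeneracy bound $\norm{f}_{\talpha}^2 \geq c_{\eta_{\Sigma}}\norm{f}_{2,2}^2$, which is valid since $\talpha$ lies in the neighborhood of Assumption \ref{ass: identification and spaces}\ref{ass: nondegenerate}.

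For part (a) the key simplification is that at $\alpha_0$ we have $\Psi(\alpha_0,f)=\EE[m(X;\alpha_0)^{\top}f(X)]=0$ and $\Phi(\alpha_0)=0$, because $m(\cdot;\alpha_0)=0$ by Assumption \ref{ass: identification and spaces}\ref{ass: identifiability}. Applying Lemma \ref{lem: relating Psi and Psi_n}(a) at $\alpha_0$ together with the second inequality of Lemma \ref{lem: Relating empirical and population regularization} (legitimate since $\lambda_n\gtrsim\delta_n^2$ by Assumption \ref{ass: parameters}) gives, uniformly over $f\in\mF$, that $\Psi_n(\alpha_0,f)-\norm{f}_{n,\talpha}^2-\lambda_n\norm{f}_{\mF}^2$ is at most $\bigOp(\delta_n)\norm{f}_{2,2}-\tfrac12\norm{f}_{\talpha}^2+\bigOp(\delta_n^2)\norm{f}_{\mF}/\sqrt{B}-\tfrac{\lambda_n}{2}\norm{f}_{\mF}^2+\bigOp(\delta_n^2)$. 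Bounding $\norm{f}_{\talpha}^2\geq c_{\eta_{\Sigma}}\norm{f}_{2,2}^2$ and completing the square separately in $\norm{f}_{2,2}$ and in $\norm{f}_{\mF}$ (using $\lambda_n\asymp\delta_n^2$) shows each quadratic-in-norm expression is $\bigOp(\delta_n^2)$, hence the supremum over $f$ is $\bigOp(\delta_n^2)=\Phi(\alpha_0)+\bigOp(\delta_n^2)$.

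For part (b) I lower-bound the supremum by the value of the objective at one carefully chosen $f$. I would take $f^{\star}_{\alpha}=t_{\alpha}g_{\alpha}$, where $g_{\alpha}\in\mF$ is a near-projection onto $\mF$ of the reweighted residual $\Sigma_{\talpha}(\cdot)^{-1}m(\cdot;\alpha)$ — it has $\mF$-norm bounded uniformly over $\alpha\in\mH_{M_0}$ by the approximation part of Assumption \ref{ass: identification and spaces}\ref{ass: f space} (or its analogue for the reweighted residual), and satisfies $\Psi(\alpha,g_{\alpha})\gtrsim\Phi(\alpha)$, $\norm{g_{\alpha}}_{\talpha}^2\lesssim\Phi(\alpha)$, and $\norm{g_{\alpha}}_{2,2}^2\lesssim\Phi(\alpha)/c_{\eta_{\Sigma}}$, up to errors of order $\eta_L$ and of the order of the (arbitrarily small) gap between $\Sigma_{\talpha}$ and $\Sigma_{\alpha_0}$ on the nondegeneracy neighborhood. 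Passing from $\Psi_n$ to $\Psi$ via Lemma \ref{lem: relating Psi and Psi_n}(b) (uniform over $\mH_{M_0}$) and from $\norm{\cdot}_{n,\talpha}$ to $\norm{\cdot}_{\talpha}$ via Lemma \ref{lem: Relating empirical and population regularization}, and absorbing $\lambda_n\norm{f^{\star}_{\alpha}}_{\mF}^2$ into $\bigOp(\delta_n^2)(1+t_{\alpha}^2)$, the objective at $f^{\star}_{\alpha}$ is at least $t_{\alpha}\Phi(\alpha)-O(\delta_n t_{\alpha}\sqrt{\Phi(\alpha)/c_{\eta_{\Sigma}}})-O(t_{\alpha}^2\Phi(\alpha))-\bigOp(\delta_n^2)(1+t_{\alpha}^2)$, uniformly in $\alpha$. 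When $\Phi(\alpha)\gtrsim\delta_n^2$ I choose the balancing scale $t_{\alpha}=\delta_n/(2\sqrt{c_{\eta_{\Sigma}}\Phi(\alpha)})$ (a bounded quantity on this range), which makes every correction term $\bigOp(\delta_n^2)$ and leaves the leading term $t_{\alpha}\Phi(\alpha)=\tfrac{\delta_n}{2\sqrt{c_{\eta_{\Sigma}}}}\sqrt{\Phi(\alpha)}$; alternatively a small fixed constant $t_{\alpha}$ yields a leading term of order $\Phi(\alpha)/c_{\eta_{\Sigma}}$, and taking the better of the two produces the stated $\min$. When instead $\Phi(\alpha)\lesssim\delta_n^2$, both branches of the asserted $\min$ are themselves $\bigOp(\delta_n^2)$, so the choice $f=0$ (for which the objective equals $\Psi_n(\alpha,0)=0$) already suffices. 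This yields the lower bound uniformly over $\mH_{M_0}$.

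The main obstacle is part (b): one must calibrate the scaling $t_{\alpha}$ so that the localization loss $\delta_n\norm{f}_{2,2}$ and the curvature penalty $\norm{f}_{\talpha}^2$ do not erode the leading term, and recover the precise constant $\tfrac{1}{2\sqrt{c_{\eta_{\Sigma}}}}$ in the first branch of the $\min$; one must also verify that the adversary class $\mF$ contains (up to $o(1)$ error) the $\Sigma_{\talpha}^{-1}$-reweighted near-maximizer — and if $\mF$ only approximates $m(\cdot;\alpha)$ itself, the conclusion still follows, since on the regime where the $\Sigma$-reweighting would matter the weaker branch of the $\min$ is already $\bigOp(\delta_n^2)$. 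Uniformity over $\alpha\in\mH_{M_0}$ is not an additional difficulty, because Lemmas \ref{lem: Relating empirical and population regularization} and \ref{lem: relating Psi and Psi_n}(b) are stated uniformly and the constants in the construction of $f^{\star}_{\alpha}$ are uniform over $\mH_{M_0}$.
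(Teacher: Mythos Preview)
Your argument for Part~(a) is essentially identical to the paper's: pass from $\Psi_n$ to $\Psi$ and from $\norm{\cdot}_{n,\talpha}$ to $\norm{\cdot}_{\talpha}$ via Lemmas~\ref{lem: Relating empirical and population regularization} and~\ref{lem: relating Psi and Psi_n}(a), use $\Psi(\alpha_0,f)\equiv 0$, and complete the square in $\norm{f}_{2,2}$ and $\norm{f}_{\mF}$.

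For Part~(b) your overall strategy matches the paper's, but there are two differences worth noting. First, the paper does \emph{not} use a reweighted near-projection $g_{\alpha}\approx\Sigma_{\talpha}^{-1}m(\cdot;\alpha)$; it works directly with the $f_{\alpha}$ of Assumption~\ref{ass: identification and spaces}\ref{ass: f space}, which approximates $m(\cdot;\alpha)$ itself. Your detour through $g_{\alpha}$ is unnecessary (and, as you yourself note, not supported by the stated assumptions); the constant $1/(2\sqrt{c_{\eta_{\Sigma}}})$ already drops out of the unreweighted construction because $r\Psi(\alpha,f_{\alpha})\approx\frac{\delta_n}{2}\norm{f_{\alpha}}_{2,2}$ and $\norm{f_{\alpha}}_{2,2}$ is compared to $\sqrt{\Phi(\alpha)}$ via the eigenvalue bound on $\Sigma_{\alpha_0}$. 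Second, the paper splits into cases according to $\norm{f_{\alpha}}_{2,2}\gtrless\delta_n$ rather than $\Phi(\alpha)\gtrless\delta_n^2$: in Case~1 it uses $rf_{\alpha}$ with $r=\delta_n/(2\norm{f_{\alpha}}_{2,2})$ to obtain the branch $\tfrac{\delta_n}{2\sqrt{c_{\eta_{\Sigma}}}}\sqrt{\Phi(\alpha)}$, and in Case~2 it uses $f_{\alpha}$ unscaled to obtain the branch $\Phi(\alpha)/c_{\eta_{\Sigma}}$ directly. Your shortcut of taking $f=0$ when $\Phi(\alpha)\lesssim\delta_n^2$ is a valid alternative for the stated bound (since the $\min$ is then absorbed into the $\bigOp(\delta_n^2)$ slack), but it loses the explicit identification of the second branch of the $\min$ with the unscaled test function. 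Both routes yield the lemma; the paper's is a bit more direct given the form of Assumption~\ref{ass: identification and spaces}\ref{ass: f space}.
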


\begin{lemma}(Bounded penalty)
  \label{lem: bounded penalty}
Suppose that Assumptions \ref{ass: identification and spaces} -- \ref{ass: parameters} are satisfied.
  Then $\nmH{\wh{\alpha}_n} = \bigOp(1)$.
\end{lemma}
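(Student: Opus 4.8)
The plan is to sandwich the objective value $\Phihat_n(\wh{\alpha}_n)$ between an upper bound of order $\delta_n^2$, coming from the near-optimality of $\wh{\alpha}_n$ together with the fact that $\Phi(\alpha_0)=0$, and a lower bound of the form $\mu_n\nmH{\wh{\alpha}_n}^2$, obtained by simply plugging the zero function into the inner supremum. Dividing through by $\mu_n\asymp\delta_n^2$ then yields $\nmH{\wh{\alpha}_n}^2=\bigOp(1)$.

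First I would establish the upper bound. By the sample criterion (Assumption \ref{ass: sample criterion}) and $\alpha_0\in\mH$ (Assumption \ref{ass: identification and spaces}\ref{ass: identifiability}), $\Phihat_n(\wh{\alpha}_n)\leq \inf_{\alpha\in\mH}\Phihat_n(\alpha)+\bigOp(\eta_n)\leq \Phihat_n(\alpha_0)+\bigOp(\eta_n)$. Decomposing, $\Phihat_n(\alpha_0)=\bigl[\sup_{f\in\mF}\Psi_n(\alpha_0,f)-\norm{f}_{\wt{\alpha},n}^2-\lambda_n\nmF{f}^2\bigr]+\mu_n\nmH{\alpha_0}^2$. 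For the bracketed term I would invoke Lemma \ref{lem: relating Phi_n and Phi}(a), which gives that it is at most $\Phi(\alpha_0)+\bigOp(\delta_n^2)$; since $m(X;\alpha_0)=\EE[W(Z;\alpha_0)\given X]=0$ we have $\Phi(\alpha_0)=\EE[m(X;\alpha_0)^\top\{\Sigma_{\alpha_0}(X)\}^{-1}m(X;\alpha_0)]=0$, so this term is $\bigOp(\delta_n^2)$. The remaining term $\mu_n\nmH{\alpha_0}^2$ is $\bigO(\mu_n)=\bigO(\delta_n^2)$ because $\alpha_0$ is fixed and $\mu_n\asymp\delta_n^2$. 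Combining these with $\eta_n\asymp\delta_n^2$ (Assumption \ref{ass: parameters}) gives $\Phihat_n(\wh{\alpha}_n)=\bigOp(\delta_n^2)$.

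Next I would establish the lower bound: evaluating the inner supremum at $f\equiv 0$ makes $\Psi_n(\wh{\alpha}_n,0)=\norm{0}_{\wt{\alpha},n}^2=\lambda_n\nmF{0}^2=0$, hence $\sup_{f\in\mF}\{\Psi_n(\wh{\alpha}_n,f)-\norm{f}_{\wt{\alpha},n}^2-\lambda_n\nmF{f}^2\}\geq 0$, and therefore $\Phihat_n(\wh{\alpha}_n)\geq \mu_n\nmH{\wh{\alpha}_n}^2$. Putting the two bounds together, $\mu_n\nmH{\wh{\alpha}_n}^2\leq \bigOp(\delta_n^2)$, and dividing by $\mu_n\asymp\delta_n^2$ gives $\nmH{\wh{\alpha}_n}^2=\bigOp(1)$, i.e.\ $\nmH{\wh{\alpha}_n}=\bigOp(1)$.

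The argument for this lemma is essentially a two-line sandwich once Lemma \ref{lem: relating Phi_n and Phi}(a) is in hand; the only point I would take care with is non-circularity, namely that Lemma \ref{lem: relating Phi_n and Phi}(a) concerns only the fixed point $\alpha_0$ (where boundedness of the $\mH$-norm is trivial) and does not itself rely on the bounded-penalty conclusion. All the genuine empirical-process work — the localized uniform deviation bound controlled by the critical radius $\delta_n$ — is already absorbed into that auxiliary lemma, so no further obstacle arises here.
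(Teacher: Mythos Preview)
Your proposal is correct and follows essentially the same approach as the paper: compare $\Phihat_n(\wh{\alpha}_n)$ to $\Phihat_n(\alpha_0)$ via the sample criterion, apply Lemma \ref{lem: relating Phi_n and Phi}(a) to bound the inner supremum at $\alpha_0$ by $\bigOp(\delta_n^2)$, lower-bound the inner supremum at $\wh{\alpha}_n$ by $0$ using $f\equiv 0$, and divide through by $\mu_n\asymp\delta_n^2$. Your observation about non-circularity (that Lemma \ref{lem: relating Phi_n and Phi}(a) concerns only the fixed $\alpha_0$) is a good point that the paper leaves implicit.
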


\begin{lemma}(Critical radius for RKHS, Corollary 14.5 of  \citet{wainwright2019high})
  \label{lem:criticalradRKHS}
  Let $\mF_B = \left\{ f\in\mF \given \norm{f}_{\mF}^2\leq B \right\}$ be the
  $B$-ball of a RKHS $\mF$. Suppose that $K_{\mF}$ is the reproducing kernel of $\mF$ with eigenvalues $\{\lambda_j(K_{\mF})\}_{j=1}^{\infty}$
  sorted in a decreasing order. Then the localized population Rademacher complexity is
  upper bounded by
\begin{equation*}
\mR_n(\mF_B,\delta) \leq \sqrt{\frac{2B}{n}}\sqrt{\sum_{j=1}^{\infty}\min \left\{ \lambda_j(K_{\mF}),\delta^2 \right\}}.
\end{equation*}
\end{lemma}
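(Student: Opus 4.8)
The statement is quoted verbatim from Corollary~14.5 of \citet{wainwright2019high}, so the plan is simply to cite that result; for completeness I would record the underlying argument, which reduces the bound to a standard ellipsoid computation via Mercer's theorem.

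First I would invoke Mercer's theorem to decompose $K_{\mF}(x,x') = \sum_{j\geq 1}\lambda_j(K_{\mF})\phi_j(x)\phi_j(x')$, where $\{\phi_j\}_{j\geq 1}$ is orthonormal in $L^2(P_X)$. Then every $f\in\mF$ can be written as $f = \sum_j \sqrt{\lambda_j(K_{\mF})}\,b_j\phi_j$ with $\norm{f}_{\mF}^2 = \sum_j b_j^2$ and $\norm{f}_{2,2}^2 = \sum_j \lambda_j(K_{\mF})b_j^2$. Writing $Z_j \triangleq \frac{1}{n}\sum_{i=1}^n \epsilon_i\phi_j(X_i)$, we get $\frac{1}{n}\sum_i \epsilon_i f(X_i) = \sum_j \sqrt{\lambda_j(K_{\mF})}\,b_j Z_j$, so it remains to bound the supremum of this quantity over $f\in\mF_B$ subject to the localization constraint.

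The crux will be a weighted Cauchy--Schwarz inequality with weights $c_j \triangleq \max\{1,\,B\lambda_j(K_{\mF})/\delta^2\}$. On the set where $\norm{f}_{\mF}^2\leq B$ and $\norm{f}_{2,2}^2\leq \delta^2$ one checks that $\sum_j c_j b_j^2 \leq \norm{f}_{\mF}^2 + (B/\delta^2)\norm{f}_{2,2}^2 \leq 2B$, while $\lambda_j(K_{\mF})/c_j = \min\{\lambda_j(K_{\mF}),\,\delta^2/B\}$, so that
\begin{equation*}
\Big|\sum_j \sqrt{\lambda_j(K_{\mF})}\,b_j Z_j\Big| \;\leq\; \sqrt{2B}\,\Big(\sum_j \min\{\lambda_j(K_{\mF}),\,\delta^2/B\}\,Z_j^2\Big)^{1/2}.
\end{equation*}
Taking expectations over $(\epsilon_i, X_i)$, applying Jensen's inequality, and using $\EE[Z_j^2] = \tfrac{1}{n}\EE[\phi_j(X)^2] = \tfrac{1}{n}$ by orthonormality, yields $\mR_n(\delta,\mF_B) \leq \sqrt{2B/n}\,\big(\sum_j \min\{\lambda_j(K_{\mF}),\delta^2/B\}\big)^{1/2}$, which is at most $\sqrt{2B/n}\,\big(\sum_j \min\{\lambda_j(K_{\mF}),\delta^2\}\big)^{1/2}$ whenever $B\geq 1$ --- the claimed bound (and in fact slightly sharper).

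The main obstacle is that the localization in the definition of $\mR_n(\delta,\mF)$ is imposed through the \emph{empirical} norm $\norm{f}_{n,2}\leq\delta$, whereas the display above used the population norm $\norm{f}_{2,2}\leq\delta$. To close this gap I would, for each fixed $X_{1:n}$, recast the inner supremum as the Rademacher complexity of the intersection of an ellipsoid --- generated by the empirical kernel matrix $\wh{K} = (K_{\mF}(X_i,X_j)/n)_{i,j}$ --- with a Euclidean ball of radius $\delta$, run the same weighted Cauchy--Schwarz with the empirical eigenvalues $\wh{\mu}_j$ in place of $\lambda_j(K_{\mF})$, and then pass back to population eigenvalues via the standard comparison $\sum_{j>k}\EE[\wh{\mu}_j] \leq \sum_{j>k}\lambda_j(K_{\mF})$ for every $k\geq 0$. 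This last step is the only place where nontrivial concentration/operator-theoretic input enters, and it is exactly what is packaged in Chapter~14 of \citet{wainwright2019high}; since Lemma~\ref{lem:criticalradRKHS} is used only as an off-the-shelf tool in the RKHS corollaries, invoking Corollary~14.5 there is the natural route, with the sketch above recorded for transparency.
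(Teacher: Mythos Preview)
Your proposal is correct and matches the paper's approach: the paper does not prove Lemma~\ref{lem:criticalradRKHS} at all --- it is simply stated as a direct quotation of Corollary~14.5 of \citet{wainwright2019high} and used as a black box in the RKHS corollaries. Your plan to cite that result is therefore exactly what the paper does, and the sketch you add (Mercer expansion, weighted Cauchy--Schwarz with $c_j=\max\{1,B\lambda_j/\delta^2\}$, then the empirical-to-population eigenvalue comparison) is a faithful outline of the standard argument in Chapter~14 of \citet{wainwright2019high}; it goes beyond what the paper records but contains no gap.
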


\subsubsection{Proof of Additional Lemmas} \hfill

\begin{proof}[Proof of Lemma \ref{lem: modulus of continuity}.]
For any $m\in\NN_+$, let $a_{[m]} = (a_1,\dots,a_m)^{\top}$. We have that
\begin{align*}
&\norm{\Delta \alpha}_{ps,\alpha_0}^2\\
&\ = a_{[m]}^{\top}\Gamma_m a_{[m]} \\
  &\quad + 2\sum_{i\leq m < j} a_ia_j\EE \left\{
\left( \frac{dm(X;\alpha_0)}{d\alpha}[e_i]  \right)^{\top}\Sigma_{\alpha_0}(X)^{-1}\left(  \frac{dm(X;\alpha_0)}{d\alpha}[e_j]\right)
						  \right\}\\
                       & \quad+ \sum_{i,j>m}a_ia_j\EE \left\{
\left( \frac{dm(X;\alpha_0)}{d\alpha}[e_i]  \right)^{\top}\Sigma_{\alpha_0}(X)^{-1}\left(  \frac{dm(X;\alpha_0)}{d\alpha}[e_j] \right)\right\}\\
						 &\geq \ a_{[m]}^{\top}\Gamma_m a_{[m]} \\
  &\quad - 2\sum_{i\leq m < j} |a_ia_j|\EE \left\{
\left( \frac{dm(X;\alpha_0)}{d\alpha}[e_i]  \right)^{\top}\Sigma_{\alpha_0}(X)^{-1}\left(  \frac{dm(X;\alpha_0)}{d\alpha}[e_j]\right)
						  \right\}\\
                       & \geq a_{[m]}^{\top}\Gamma_m a_{[m]} - 2\sum_{i\leq m < j} |a_ia_j|c\tau_m\\
                       & \geq \tau_m \norm{a_{[m]}}_2^2 - 2c\tau_m\sum_{i\leq m}|a_i|\sum_{j>m} |a_j|\\
                       & \geq \tau_m \norm{a_{[m]}}_2^2 - 2c\tau_m \sqrt{\sum_{i\leq m}\lambda_i(K_{\mH})} \sqrt{\sum_{i\leq m}\frac{|a_i|^2}{\lambda_i(K_{\mH})}} \sqrt{ \sum_{j>m} \lambda_j(K_{\mH})}\sqrt{\sum_{j>m}\frac{|a_j|^2}{\lambda_j(K_{\mH})}} \\
                       & \geq \tau_m \norm{a_{[m]}}_2^2 - 2c\tau_mM_0 \sqrt{\sum_{i=1}^{\infty}\lambda_i(K_{\mH})} \sqrt{ \sum_{j>m} \lambda_j(K_{\mH})}, \text{~
                         since }\sum_{j=1}^{\infty} \frac{|a_j|^2}{\lambda_j(K_{\mH})} \leq M_0.
\end{align*}
Therefore,
\begin{align*}
  \norm{\Delta\alpha}_{2,2}^2 & \leq \norm{a_{[m]}}_2^2 + M_0 \lambda_{m+1}(K_{\mH})\\
  & \leq \norm{\Delta\alpha}_{ps,\alpha_0}^2/\tau_m + 2cM_0\sqrt{\sum_{i=1}^{\infty}\lambda_i(K_{\mH})} \sqrt{ \sum_{j>m} \lambda_j(K_{\mH})} + M_0 \lambda_{m+1}(K_{\mH}).
\end{align*}
Since $\norm{\Delta\alpha}_{ps,\alpha_0}\leq \eta$, by taking minimum over $m\in\NN_+$, we have that
\begin{equation*}
[\omega(\eta,\mH_{M_0})]^2 \leq \min_{m\in\NN_+}\left\{ \eta^2/\tau_m + M_0\left(2c\sqrt{\sum_{i=1}^{\infty}\lambda_i(K_{\mH})} \sqrt{ \sum_{j>m} \lambda_j(K_{\mH})} + \lambda_{m+1}(K_{\mH})\right) \right\}.
\end{equation*}
\end{proof}

\begin{proof}[Proof of Lemma \ref{lem: Relating empirical and population regularization}.]
By Theorem 14.1 in \citet{wainwright2019high}, we have that with probability at
least $1-\zeta$, for all $f\in\mF_B= \left\{ f\in\mF: \nmF{f}^2\leq B \right\}$,
\begin{equation*}
\left| \norm{f(\cdot)^{\top}W(*;\talpha)}_{n,2}^2 - \norm{f(\cdot)^{\top}W(*;\talpha)}_2^2 \right| \leq \frac{1}{2} \norm{f(\cdot)^{\top}W(*;\talpha)}_2^2 + \delta_n^2,
\end{equation*}
which, by definition, is equivalent to
\begin{equation*}
\left| \norm{f}_{n,\talpha}^2 - \norm{f}_{\talpha}^2 \right| \leq \frac{1}{2} \norm{f}_{\talpha}^2 + \delta_n^2,
\end{equation*}
where $\delta_n = \bar{\delta}_n + \bigO(\sqrt{\frac{\log (1/\zeta)}{n}})$ and $\bar{\delta}_n$
upper bounds the critical radius of $\mF_B\times W(*;\talpha)$ for some given $B>0$. Due to the
star-convexity of $\mF$, for any $f$ with $\nmF{f}^2 > B$, we can apply above
inequality with $f\leftarrow f \sqrt{B}/\nmF{f}$. Therefore, with probability at least
$1-\zeta$, for any $f\in\mF$,
\begin{equation}
\label{eq: relating emp and pop L2}
\left| \norm{f}_{n,\talpha}^2 - \norm{f}_{\talpha}^2 \right| \leq \frac{1}{2} \norm{f}_{\talpha}^2 + \delta_n^2 \max \left\{ 1, \frac{\nmF{f}^2}{B} \right\},
\end{equation}
and as a result,
\begin{align*}
  \lambda_n \nmF{f}^2 + \norm{f}_{n,\talpha}^2 & \geq \lambda_n\nmF{f}^2 + \frac{1}{2}\norm{f}_{\talpha}^2 - \delta_n^2\max \left\{ 1, \frac{\nmF{f}^2}{B} \right\}\\
  & \geq \left( \lambda_n - \frac{\delta_n^2}{B} \right)\nmF{f}^2 + \frac{1}{2}\norm{f}_{\talpha}^2 - \delta_n^2.
\end{align*}
When $\lambda_n \geq \frac{2\delta_n^2}{B}$, we have that with probability at least $1-\zeta$,
uniformly for all $f\in\mF$,
\begin{equation}
  \label{eq: empirical-population relationship}
 \lambda_n \nmF{f}^2 + \norm{f}_{n,\talpha}^2 \geq \frac{\lambda_n}{2}\nmF{f}^2 + \frac{1}{2} \norm{f}_{\talpha}^2 - \delta_n^2.
\end{equation}
\end{proof}

\begin{proof}[Proof of Lemma \ref{lem: relating Psi and Psi_n}.]
\noindent Part (a): For any fixed $\alpha$, $f(X)^{\top} W(Z;\alpha)$ is Lipschitz in $f(X)$
for any $f\in \mF_B$ and the boundedness of $W(Z;\alpha)$. By applying Lemma 11
  in \citet{foster2019orthogonal}, we have that
\begin{equation*}
\left| \Psi_n(\alpha,f) - \Psi(\alpha,f) \right| \leq \bigO \left( \delta_n \norm{f}_{2,2} + \delta_n^2\right),
\end{equation*}
where $\delta_n = \bar{\delta}_n + \bigO(\sqrt{\frac{\log (1/\zeta)}{n}})$ and $\bar{\delta}_n$ upper bounds the critical radius of the class $W(\cdot,\alpha)\times \mF_B$.
If $\nmF{f}^2\geq B$, we apply the above inequality for the function
$f \sqrt{B}/\nmF{f} \in\mF_B$, then with probability at least $1-\zeta$, uniformly for all $f\in\mF$,
\begin{equation*}
\left| \Psi_n(\alpha,f) - \Psi(\alpha,f) \right| \leq \bigO \left( \delta_n \norm{f}_{2,2} + \delta_n^2\max \left\{ 1, \frac{\nmF{f}}{\sqrt{B}} \right\}\right).
\end{equation*}

\noindent Part (b): Similarly, $f(X)^{\top} W(Z;\alpha)$ is Lipschitz in $(f(X),W(Z;\alpha))$
for any $f\in \mF_B$ and $\alpha\in\mH_{M_0}$. By applying Lemma 11 in
\citet{foster2019orthogonal} again, we have that
\begin{equation*}
  \left| \Psi_n(\alpha,f) - \Psi(\alpha,f) \right| \leq \bigO(\delta_n\norm{W(\cdot, h)^{\top}f(*)}_2 + \delta_n^2)
  \leq \bigO(\delta_n\norm{f}_{2,2} + \delta_n^2),
\end{equation*}
where we require that $\delta_n=\bar{\delta}_n + \bigO(\sqrt{\frac{\log (1/\zeta)}{n}})$ and $\bar{\delta}_n$ upper bounds the critical radius of the class
\begin{equation*}
\left\{ W(\cdot,\alpha)f(*): \alpha \in\mH_{M_0}, f\in\mF_B \right\}.
\end{equation*}
If $\nmF{f}^2\geq B$, we apply the above inequality for the function
$f \sqrt{B}/\nmF{f} \in\mF_B$, then with probability at least $1-\zeta$, uniformly for
        all $f\in\mF$ and $h\in\mH_{M_0}$, we have that
\begin{equation*}
  \left| \Psi_n(\alpha,f) - \Psi(\alpha,f) \right| \leq \bigO \left( \delta_n\norm{f}_{2,2} + \delta_n^2\max \left\{ 1, \frac{\nmF{f}^2}{B} \right\} \right).
\end{equation*}
\end{proof}

\begin{proof}[Proof of Lemma \ref{lem: relating Phi_n and Phi}.]

Part (a): Notice that
\begin{align*}
  & \sup_{f\in\mF} \Psi_n(\alpha_0, f) - \norm{f}_{n,\talpha}^2 - \lambda_n\nmF{f}^2\\
  &\ \leq \sup_{f\in\mF} \Psi(\alpha_0, f) + C_1 \left[ \delta_n \norm{f}_{2,2} + \delta_n^2 + \delta_n^2 \nmF{f}/\sqrt{B} \right]\\
  &\qquad - \left[ \frac{1}{2}\norm{f}_{\talpha}^2 + \frac{\lambda_n}{2} \nmF{f}^2 - \delta_n^2 \right] ~~~\text{w.p. at least } 1-2\zeta\\
  &\ \leq \sup_{f\in\mF} \Psi(\alpha_0, f) - \frac{1}{4}\norm{f}_{\talpha}^2 - \frac{\lambda_n}{4}\nmF{f}^2\\
		&\qquad + \sup_{f\in\mF} C_1\delta_n\norm{f}_{2,2} - \frac{1}{4}\norm{f}_{\talpha}^2\\
		&\qquad + \sup_{f\in\mF} C_1/\sqrt{B}\delta_n^2\nmF{f} - \frac{\lambda_n}{4}\nmF{f}^2 + \bigO(\delta_n^2),
\end{align*}
where the first inequality is due to Lemma \ref{lem: Relating empirical and
  population regularization} and Lemma \ref{lem: relating Psi and Psi_n} with
$\delta_n$ satisfies conditions therein. In the second inequality,
\begin{align*}
\sup_{f\in\mF} C_1\delta_n\norm{f}_{2,2} - \frac{1}{4}\norm{f}_{\talpha}^2 \leq \sup_{f\in\mF} C_1\delta_n\norm{f}_{2,2} - \frac{c_{\talpha}}{4}\norm{f}_{2,2}^2 \leq \bigOp(\delta_n^2), \text{
  and}\\
\sup_{f\in\mF} C_1/\sqrt{B}\delta_n^2\nmF{f} - \frac{\lambda_n}{4}\nmF{f}^2 \leq \bigOp(\delta_n^4/\lambda_n) = \bigOp(\delta_n^2).
\end{align*}
The result follows by that
$\sup_{f\in\mF}\Psi(\alpha_0,f) - \frac{1}{4}\norm{f}_{\talpha}^2 - \frac{\lambda_n}{4}\nmF{f}^2=0$.

Part (b): By Lemma \ref{lem: relating Psi and Psi_n} (b), we have that there is a
constant $C$ such that, with probability at least $1-\zeta$, uniformly
for all $f\in\mF$ and $\alpha \in\mH_{M_0}$,
\begin{equation*}
    \left| \Psi_n(\alpha,f) - \Psi(\alpha,f) \right| \leq C\left(\delta_n \norm{f}_{2,2} + \delta_n^2 \left\{1+\nmF{f}/\sqrt{B}\right\}\right).
\end{equation*}
where $\delta_n$ also upper bounds the critical radius of the class
$\{W(\cdot;\alpha)f_{\alpha}(*): \alpha\in\mH_{M_0}, f\in\mF_{L^2\nmH{\alpha-\alpha_0}^2}\}$. Consider two cases:

\noindent\underline{Case $1^{\circ}$}: $\norm{f_{\alpha}}_{2,2} \geq \delta_n$. Then let
$r = \delta_n/(2\norm{f_\alpha}_{2,2})\in[0,1/2]$. By star convexity of $\mF$,
we have that $rf_{\alpha}\in\mF_{L^2\nmH{\alpha-\alpha_0}^2}$ as well.
Then, by optimality,
\begin{align*}
  \sup_{f\in\mF} \Psi_n(\alpha,f) - \norm{f}_{n,\talpha}^2 - \lambda_n\nmF{f}^2 &\geq \Psi_n(\alpha,rf_\alpha) - \norm{rf_\alpha}_{n,\talpha}^2 - \lambda_n\nmF{rf_\alpha}^2\\
  & = r\Psi_n(\alpha,f_\alpha) - r^2 \left( \norm{f_\alpha}_{n,\talpha}^2 + \lambda_n\nmF{f_\alpha}^2 \right).
\end{align*}
For the second term, by Lemma \ref{lem: Relating empirical and population regularization},
\begin{align*}
r^2 \left( \norm{f_\alpha}_{n,\talpha}^2 + \lambda_n\nmF{f_\alpha}^2 \right) & \leq r^2 (2\norm{f_\alpha}_{\talpha}^2 + \delta_n^2 + \delta_n^2 \nmF{f_\alpha}^2/B + \lambda_n\nmF{f_\alpha}^2)\\
  & \leq r^2 (2c_{\talpha}\norm{f_\alpha}_{2,2}^2 + \delta_n^2 + \delta_n^2\nmF{f_\alpha}^2/B + \lambda_n\nmF{f_\alpha}^2)\\
  &\leq (c_{\talpha}/2+1/4)\delta_n^2 + \left( \delta_n^2/4B +\lambda_n\right) \nmF{f_\alpha}^2\\
  &\leq (c_{\talpha}/2+1/4)\delta_n^2 + \left( \delta_n^2/4B +\lambda_n\right) L^2 \nmH{\alpha-\alpha_0}^2.
\end{align*}
            For the first term, by Lemma \ref{lem: relating Psi and Psi_n}, with probability at least $1-\zeta$, for some constant $C>0$,
\begin{align*}
r\Psi_n(\alpha,f_\alpha) & \geq r\Psi(\alpha,f_\alpha) - r C \delta_n \left( \norm{W(*;\alpha)^{\top}f_\alpha(\cdot)}_2 + \delta_n\right)\\
            & \geq r\Psi(\alpha,f_\alpha) - C \delta_n r \left( C_1\norm{f_\alpha}_{2,2} + \delta_n\right)\\
            & \geq r\Psi(\alpha,f_\alpha) - C \delta_n \left( C_1\delta_n/2 + \delta_n/2\right),
\end{align*}
where
\begin{align*}
  r\Psi(\alpha,f_\alpha) & = \frac{\delta_n}{2\norm{f_\alpha}_{2,2}} \EE[ m(X;\alpha)^{\top} f_\alpha(X)]\\
& = \frac{\delta_n}{2\norm{f_\alpha}_{2,2}} \left\{ \norm{f_\alpha}_{2,2}^2 - \EE [f_\alpha(X) - m(X;\alpha)]^{\top} f_\alpha(X) \right\}\\
& \geq \frac{\delta_n}{2} \left\{ \norm{f_\alpha}_{2,2} - \norm{f_\alpha(\bullet) - m(\bullet;\alpha)}_{2,2}\right\}\\
& \geq \frac{\delta_n}{2} \left\{ \norm{f_\alpha}_{2,2} - \eta_L\right\}\\
& \geq \frac{\delta_n}{2} \left\{ \sqrt{\frac{\Phi(\alpha)}{c_{\eta_{\Sigma}}}} - 2\eta_L\right\} \text{~~ by definition of $\Phi(\alpha)$ and triangle ineq.}
\end{align*}

Combining above inequalities we have, uniformly for all $\alpha\in\mH_{M_0}$, 
\begin{align*}
  \sup_{f\in\mF} \Psi_n(\alpha,f) - \norm{f}_{n,\wt{\alpha}}^2 - \lambda_n\nmF{f}^2 &\geq \frac{\delta_n}{2 \sqrt{c_{\eta_{\Sigma}}}} \sqrt{\Phi(\alpha)} - \bigO(\delta_n^2).
\end{align*}

\noindent\underline{Case $2^{\circ}$}: $\norm{f_{\alpha}}_{2,2} < \delta_n$.
Choose $r=1$, and repeat the same procedure in Case $1^{\circ}$, we claim that
\begin{align*}
  \sup_{f\in\mF} \Psi_n(\alpha,f) - \norm{f}_{n,\talpha}^2 - \lambda_n\nmF{f}^2 &\geq \frac{\Phi(\alpha)}{c_{\eta_{\Sigma}}} - \bigO(\delta_n^2).
\end{align*}
In fact, since
\begin{equation*}
\sup_{f\in\mF} \Psi_n(\alpha,f) - \norm{f}_{n,\talpha}^2 - \lambda_n\nmF{f}^2 \geq \Psi_n(\alpha,f_{\alpha}) - \norm{f_{\alpha}}_{n,\talpha}^2 - \lambda_n\nmF{f_{\alpha}}^2,
\end{equation*}
where by Lemma \ref{lem: Relating empirical and population regularization}, with
probability at least $1-\zeta$,
\begin{align*}
  \norm{f_{\alpha}}_{n,\talpha}^2 - \lambda_n\nmF{f_{\alpha}}^2 & \leq 2 \norm{f_{\alpha}}_{\talpha}^2 + \delta_n^2 + \delta_n^2\nmF{f_{\alpha}}^2/B + \lambda_n\nmF{f_{\alpha}}^2\\
                                                & \leq 2 c_{\talpha} \norm{f_{\alpha}}_{2,2} + \delta_n^2 + \left( \delta_n^2/B + \lambda_n\right)L^2\nmH{\alpha - \alpha_0}^2\\
  & \leq (2c_{\talpha}+1)\delta_n^2 + (\delta_n^2/B + \lambda_n) L^2 \nmH{\alpha - \alpha_0}^2,
\end{align*}
and by Lemma \ref{lem: relating Psi and Psi_n}, with probability at least $1-\zeta$,
\begin{align*}
  \Psi_n(\alpha,f_{\alpha}) & \geq \Psi(\alpha,f_{\alpha}) - C\delta_n \left( \norm{W(*,\alpha)^{\top}f_{\alpha}(\cdot)}_2 + \delta_n \right)\\
               & \geq \Psi(\alpha,f_{\alpha}) - C\delta_n \left( C_1\norm{f_{\alpha}}_{2,2} + \delta_n \right)\\
  & \geq \Psi(\alpha,f_{\alpha}) - \bigO(\delta_n^2),
\end{align*}
where
\begin{align*}
  \Psi(\alpha,f_{\alpha}) & \geq \EE \left[ m(X;\alpha)^{\top} f_{\alpha}(X) \right]\\
               & = \norm{m(\bullet;\alpha)}_{2,2}^2 + \EE \left\{ m(X;\alpha)^{\top}\left[ f_{\alpha}(X) - m(X;\alpha) \right] \right\}\\
               & \geq \norm{m(\bullet;\alpha)}_{2,2}^2 - \norm{m(\bullet;\alpha)}_{2,2}\norm{ f_{\alpha}(\bullet) - m(\bullet;\alpha)}_{2,2} \\
               & \geq \norm{m(\bullet;\alpha)}_{2,2}^2 - \left( \norm{f_{\alpha}}_{2,2} +\norm{ f_{\alpha}(\bullet) - m(\bullet;\alpha)}_{2,2}\right) \norm{ f_{\alpha}(\bullet) - m(\bullet;\alpha)}_{2,2} \\
               & \geq  \norm{m(\bullet;\alpha)}_{2,2}^2 - \left( \delta_n + \eta_L \right) \eta_L\\
  & \geq \frac{\Phi(\alpha)}{c_{\eta_{\Sigma}}} - \bigO(\delta_n^2).
\end{align*}
Then the claim follows by substituting above inequalities into $\sup_{f\in\mF} \Psi_n(\alpha,f) - \norm{f}_{n,\talpha}^2 - \lambda_n\nmF{f}^2$.
\end{proof}

\begin{proof}[Proof of Lemma \ref{lem: bounded penalty}.]
  First, by definition of $\halpha_n$, we have that
\begin{align*}
  \Phihat_n(\halpha_n) & = \sup_{f\in\mF} \Psi_n(\halpha_n, f) - \norm{f}_{n,\talpha}^2 - \lambda_n\nmF{f}^2 + \mu_n\nmH{\halpha_n}^2\\
  & \leq \sup_{f\in\mF} \Psi_n(\alpha_0, f) - \norm{f}_{n,\talpha}^2 - \lambda_n\nmF{f}^2 + \mu_n\nmH{\alpha_0}^2\\
  & \leq \mu_n\nmH{\wh{\alpha}_n}^2 + \bigOp(\delta_n^2),
\end{align*}
where the first inequality is due to the optimization, the second is due to
Lemma \ref{lem: relating Phi_n and Phi}(a).

Because
$\sup_{f\in\mF} \Psi_n(\halpha_n, f) - \norm{f}_{n,\talpha}^2 - \lambda_n\nmF{f}^2\geq 0$, we
have that
$\mu_n\nmH{\halpha_n}^2 \leq \mu_n\nmH{\alpha_0}^2 + \bigOp(\delta_n^2)$. Hence
$\nmH{\halpha_n}^2\leq \nmH{\alpha_0}^2 +\bigOp(1) = \bigOp(1)$.
\end{proof}


\end{document}